\documentclass{article}

\PassOptionsToPackage{}{natbib}

\usepackage[preprint]{main}

\usepackage[utf8]{inputenc}
\usepackage[T1]{fontenc}
\usepackage{hyperref}
\usepackage{url}
\usepackage{booktabs}
\usepackage{amsfonts}
\usepackage{nicefrac}
\usepackage{microtype}
\usepackage{xcolor}
\usepackage{graphicx}
\usepackage{amsmath}
\usepackage{amssymb}
\usepackage{amsthm}
\usepackage{cleveref}
\usepackage{empheq}
\usepackage{algorithmic}
\usepackage{algorithm}
\usepackage{authblk}
\usepackage{enumitem}
\usepackage{pgfplots}
\usepackage{subcaption}

\setlength{\affilsep}{0.75em}

\theoremstyle{plain}
\newtheorem{theorem}{Theorem}[section]
\newtheorem{proposition}[theorem]{Proposition}
\newtheorem{lemma}[theorem]{Lemma}

\theoremstyle{definition}
\newtheorem{definition}[theorem]{Definition}
\newtheorem{assumption}[theorem]{Assumption}
\theoremstyle{remark}
\newtheorem{remark}[theorem]{Remark}
\newtheorem{example}[theorem]{Example}

\crefname{definition}{Definition}{Definitions}
\crefname{theorem}{Theorem}{Theorems}
\crefname{assumption}{Assumption}{Assumptions}
\crefname{lemma}{Lemma}{Lemmas}
\crefname{proposition}{Proposition}{Propositions}
\crefname{remark}{Remark}{Remarks}

\crefname{section}{Section}{Sections}
\crefname{equation}{Eq.}{Eqs.}

\newcommand{\E}[2][]{\mathbb{E}_{#1}\!\left[#2\right]}

\newcommand{\dd}{\mathop{}\!\mathrm{d}}

\title{Location–Scale Calibration for Generalized Posterior}

\author[1,2,*]{Shu Tamano}
\author[2]{Yui Tomo}
\affil[1]{Department of Multidisciplinary Sciences, Graduate School of Arts and Sciences, The University of Tokyo, 3-8-1 Komaba, Meguro-Ku, Tokyo 153-8902, Japan}
\affil[2]{Department of Epidemiology, National Institute of Infectious Diseases, Japan Institute for Health Security, 1-23-1 Toyama, Shinjuku-Ku, Tokyo 162-0052, Japan}
\affil[*]{Email: tamano-shu212@g.ecc.u-tokyo.ac.jp}

\begin{document}

\maketitle

\begin{abstract}
    General Bayesian updating replaces the likelihood with a loss scaled by a learning rate, but posterior uncertainty can depend sharply on that scale.
    We propose a simple post-processing that aligns generalized posterior draws with their asymptotic target, yielding uncertainty quantification that is invariant to the learning rate.
    We prove total-variation convergence for generalized posteriors with an effective sample size, allowing sample-size-dependent priors, non-i.i.d.\ observations, and convex penalties under model misspecification.
    Within this framework, we justify and extend the open-faced sandwich adjustment \citep{shaby2014open-faced}, provide general theoretical guarantees for its use within generalized Bayes, and extend it from covariance rescaling to a location–scale calibration whose draws converge in total variation to the target for any learning rate.
    In our empirical illustration, calibrated draws maintain stable coverage, interval width, and bias over orders of magnitude in the learning rate and closely track frequentist benchmarks, whereas uncalibrated posteriors vary markedly.
\end{abstract}
\noindent
\textbf{Keywords}:
Bayesian inference;
Bernstein--von Mises;
Generalized Bayes;
Learning rate;
Open-faced sandwich;
Penalized estimating equation;
Sandwich variance.

\section{Introduction}
\label{sec:intro}

Bayesian inference provides a coherent probabilistic framework that combines prior information with likelihood-based learning and delivers uncertainty quantification.
However, when the assumed likelihood is misspecified, or when inference is based on a non-likelihood objective such as quasi- or composite likelihoods, estimating equations, or other loss-based objectives, posterior uncertainty can be miscalibrated \citep{kleijn2012bernstein, syring2019calibrating, miller2021asymptotic}.
General Bayesian inference replaces the likelihood with a loss-based construction \citep{bissiri2016general}, thereby avoiding the need to model the entire data distribution explicitly.
In this formulation, a single scaling parameter $\eta \in \mathbb{R}_{>0}$, often called the learning rate or temperature, multiplies the loss and directly controls posterior dispersion.
When $\eta = 1$ and the loss function is taken as the negative log-likelihood, the generalized posterior coincides with the usual likelihood-based Bayes posterior.
With a fixed prior, decreasing $\eta$ attenuates the contribution of the data so that, in the limit, the update reverts to the prior, whereas increasing $\eta$ amplifies the data contribution and, under standard regularity conditions, the posterior concentrates around minimizers of the population loss.
Thus, the learning rate governs the trade-off between prior information and loss-based evidence.

Therefore, the choice of learning rate has been widely discussed.
Bootstrap-based calibration methods choose $\eta$ to achieve frequentist targets \citep{lyddon2019general, syring2019calibrating};
in a related approach, \citet{matsubara2024generalized} compute the loss minimizer on bootstrap resamples, obtain a closed-form estimate of the learning rate from the bootstrap spread, and then run Markov chain Monte Carlo.
SafeBayes provides an alternative data-driven choice that adapts the learning rate for robustness \citep{grunwald2017inconsistency}.
Information-matching rules select the learning rate so that the generalized posterior with learning rate $\eta$ aligns, under criteria of divergence or information, with the generalized posterior obtained by setting $\eta = 1$ for the same loss \citep{holmes2017assigning}.
Along a related calibration-to-Bayes line, \citet{altamirano2023robust} choose the learning rate by minimizing the Kullback–Leibler divergence between the generalized posterior with learning rate $\eta$ and the generalized posterior obtained by setting $\eta = 1$ for the same loss, both computed on an initial data window, and then keep the resulting learning rate fixed for the full analysis.
These approaches can be computationally intensive and typically require a frequentist point estimator, so the resulting learning rate reflects the plug-in distribution of that estimator rather than the full posterior law.
Moreover, several methods explicitly or implicitly calibrate to the generalized posterior obtained by setting $\eta = 1$ for the chosen loss;
this is natural when the loss is negative log-likelihood under correct specification, but can be misleading for more general loss-based objectives or under model misspecification.
Although \citet{mclatchie2025predictive} suggest that, for prediction, the choice of learning rate matters little in moderate to large samples, this does not resolve the problem of the sensitivity of uncertainty quantification to the learning rate.

In this paper, we revisit the open-faced sandwich adjustment \citep{shaby2014open-faced} and place it in a general asymptotic framework for generalized Bayes based on loss functions with an effective sample size, allowing sample-size-dependent priors, possibly non-i.i.d.\ observations, and convex penalties under model misspecification.
First, we show that, under sample-size-dependent priors, the generalized posterior admits a normal limit in total variation and we establish a prior–penalty correspondence that identifies the target curvature and variability.
Second, building on this justification, we improve the open-faced sandwich adjustment from covariance rescaling to a location–scale calibration that is implementable via plug-in sandwich estimators, requiring only posterior draws and empirical moments, with no bootstrap or learning-rate tuning.
Third, we prove that the calibrated draws converge to the target distribution for any learning rate, so that the asymptotic uncertainty quantification is invariant to the choice of learning rate.
Our empirical illustration for a random-intercept mixed model with Huber loss and Gaussian prior with sample-size-dependent scale shows that the calibrated intervals track frequentist benchmarks across several orders of magnitude in the learning rate, whereas uncalibrated generalized Bayes posteriors exhibit substantial sensitivity.
\section{Problem setup}
\label{sec:setting}

Fix $p\in\mathbb{N}$ and let $\Theta\subset\mathbb{R}^{p}$ be an open parameter space.
Write $\boldsymbol{\theta}\in\Theta$ for the parameter, and let $\|\cdot\|$ denote the Euclidean norm.
Let $M_{n}:\Theta\to(-\infty,\infty]$ be an empirical criterion with effective scale $s_{n} \to \infty$;
typical examples include i.i.d.\ additive losses, for which $s_{n} = n$;
$m$-variate $U$-type losses with non-degenerate kernels, where $s_n = \binom{n}{m}$;
and kernel-smoothed losses with bandwidth $h_n>0$, in which case $s_n= n h_n$ and both $h_n\to0$ and $n h_n\to\infty$.
Define the rescaled criterion and its first and second derivatives by
\begin{equation*}
    \bar{M}_{n}\!\left(\boldsymbol{\theta}\right)
    :=
    s_{n}^{-1}M_{n}\!\left(\boldsymbol{\theta}\right)
    ,
    \quad
    \boldsymbol{U}_{n}\!\left(\boldsymbol{\theta}\right)
    :=
    \nabla_{\boldsymbol{\theta}}\bar{M}_{n}\!\left(\boldsymbol{\theta}\right)
    ,
    \quad
    \boldsymbol{J}_{n}\!\left(\boldsymbol{\theta}\right)
    :=
    \nabla_{\boldsymbol{\theta}}^{2}\bar{M}_{n}\!\left(\boldsymbol{\theta}\right)
    .
\end{equation*}
We write $R(\boldsymbol{\theta}) := E_{P^{\star}}(\bar{M}_{n}(\boldsymbol{\theta}))$ for the corresponding population criterion, where $P^{\star}$ denotes the true data-generating distribution.

For a learning rate $\eta > 0$, we adopt the general Bayesian updating framework of \citet{bissiri2016general}.
Define the composite loss $L_{n}(\boldsymbol{\theta}) := M_n(\boldsymbol{\theta}) + \lambda_n s_n\rho(\boldsymbol\theta)$, where $\{\lambda_n\}_{n\ge1}$ is a deterministic non-negative sequence with $\lambda_n \to \lambda \in [0, \infty)$ and $\rho: \mathbb{R}^{p} \to \mathbb{R}$ is a convex penalty.
We introduce a baseline prior $\tilde\pi_n(\boldsymbol\theta)\propto \exp\{r_n(\boldsymbol\theta)\}$ that may depend on $n$ but is independent of the data.
The generalized posterior is then given by
\begin{equation*}
    \Pi_{n}^{\eta}(\dd\boldsymbol{\theta})
    \propto
    \exp\bigl\{-\eta L_n(\boldsymbol{\theta})\bigr\}
    \tilde{\pi}_{n}(\boldsymbol{\theta})
    \dd\boldsymbol{\theta}
    .
\end{equation*}
Equivalently, absorbing the penalty into the prior, we obtain
\begin{equation*}
    \Pi_{n}^{\eta}(\dd\boldsymbol{\theta})
    \propto
    \exp\bigl\{-\eta M_n(\boldsymbol\theta)\bigr\}
    \pi_{n}(\boldsymbol{\theta})\dd\boldsymbol{\theta}
    ,
    \quad
    \log\pi_{n}(\boldsymbol{\theta})
    =
    -\eta\lambda_{n} s_{n}\rho(\boldsymbol\theta)
    +
    r_{n}(\boldsymbol\theta)
    .
\end{equation*}
This loss-based representation makes explicit that the learning rate $\eta$ scales the entire loss, in line with \citet{bissiri2016general}.
Throughout the main text we assume that $\rho$ is $C^{2}$ in a neighborhood of the target point introduced below, as stated in Assumption~\ref{ass:P};
the convex non-smooth case is treated in Section~\ref{supp:non-smooth}.

Let $\mathcal{D} = \{\mathcal{D}_{i}\}_{i=1}^{n}$ be observations from an unknown distribution $P^{\star}$ and $\Psi(\boldsymbol{\theta}) := E_{P^{\star}}(\psi(\mathcal{D}_{1},\boldsymbol{\theta}))$ denote the population estimating function corresponding to $M_{n}$, for some measurable $\psi$.
In the $M$-estimation case $M_{n}(\boldsymbol{\theta})=\sum_{i=1}^n m(\mathcal{D}_i, \boldsymbol{\theta})$, one may take $\psi(\mathcal{D},\boldsymbol{\theta})=\nabla_{\boldsymbol{\theta}} m(\mathcal{D},\boldsymbol{\theta})$.
The penalized population equation is
\begin{equation}
\label{eq:pop-penalized}
    \boldsymbol{0}
    =
    \Psi(\boldsymbol{\theta}) + \lambda\nabla\rho(\boldsymbol{\theta})
    ,
\end{equation}
with solution $\boldsymbol{\theta}^{\lambda}$.
At $\boldsymbol{\theta}^{\lambda}$, define
\begin{equation*}
    \boldsymbol{J}^{\star}
    :=
    \nabla_{\boldsymbol{\theta}}\Psi\bigl(\boldsymbol{\theta}^{\lambda}\bigr)
    ,
    \quad
    \boldsymbol{K}^{\star}
    :=
    \operatorname{var}_{P^{\star}}\!\left(\psi\bigl(\mathcal{D}_{1}, \boldsymbol{\theta}^{\lambda}\bigr)\right)
    ,
    \quad
    \boldsymbol{H}_{\rho}\bigl(\boldsymbol{\theta}^{\lambda}\bigr)
    :=
    \nabla^{2}\rho\bigl(\boldsymbol{\theta}^{\lambda}\bigr)
    .
\end{equation*}
Then set
\begin{equation}\label{eq:target-objects}
    \boldsymbol{J}_{\lambda}^{\star}
    :=
    \boldsymbol{J}^{\star} + \lambda \boldsymbol{H}_{\rho}\bigl(\boldsymbol{\theta}^{\lambda}\bigr)
    ,
    \quad
    \boldsymbol{V}_{\mathrm{target}}^{\star} := \!\left(\boldsymbol{J}_{\lambda}^{\star}\right)^{-1}\boldsymbol{K}^{\star} \!\left(\boldsymbol{J}_{\lambda}^{\star}\right)^{-1}
    .
\end{equation}
The matrix $\boldsymbol{V}_{\mathrm{target}}^{\star}$ is the usual sandwich covariance for penalized $M$-estimators;
see, e.g., \citet{vandervaart1998asymptotic, kosorok2008introduction}.

To state our theoretical results rigorously, we impose the following regularity conditions.
All stochastic limits are taken under the true law $P^\star$; $\to_p$ and $\to_d$ denote convergence in probability and in distribution, respectively.
\begin{assumption}[Penalty]
\label{ass:P}
    $\rho$ is convex and $C^{2}$ on an open neighborhood of $\boldsymbol{\theta}^{\lambda}$, and $\boldsymbol{H}_{\rho}(\boldsymbol{\theta}^{\lambda}) = \nabla^{2}\rho(\boldsymbol{\theta}^{\lambda})$ is positive semi-definite.
\end{assumption}

\begin{assumption}[General loss]
\label{ass:GL}
    There exists an open neighborhood $\mathcal{N}$ of $\boldsymbol{\theta}^{\lambda}$ such that:
    (i) the penalized population equation \eqref{eq:pop-penalized} has the unique solution $\boldsymbol{\theta}^{\lambda}\in \mathcal{N}$;
    (ii) $R \in C^{3}(\mathcal{N})$ and $\sup_{\boldsymbol{\theta}\in \mathcal{N}}\|\nabla^{3}\bar{M}_{n}(\boldsymbol{\theta})\| = O_p(1)$;
    (iii) $\sup_{\boldsymbol{\theta}\in \mathcal{N}}\|\boldsymbol{J}_{n}(\boldsymbol{\theta}) - \boldsymbol{J}^{\star}\| \to_p 0$ and $\boldsymbol{J}^{\star}$ is nonsingular;
    and (iv) $s_{n}^{1/2}\{\boldsymbol{U}_{n}(\boldsymbol{\theta}^{\lambda}) - \Psi(\boldsymbol{\theta}^{\lambda})\}\to_d N(\boldsymbol{0}, \boldsymbol{K}^{\star})$.
\end{assumption}

\begin{assumption}[Prior remainder]
\label{ass:R}
    Let $\mathcal{U}_{n} = \{\boldsymbol{\theta}: \|\boldsymbol{\theta} - \boldsymbol{\theta}^{\lambda}\| \le cs_{n}^{-1/2}\}$ for some fixed constant $c > 0$.
    The prior remainder $r_{n}$ is locally Lipschitz on $\mathcal{U}_{n}$ with Lipschitz constant $L_{n} = o_p(s_{n}^{1/2})$.
\end{assumption}

For subsequent results we allow the posterior to be centered at arbitrary data-dependent locations that are root-$s_n$ close to a fixed baseline estimator of $\boldsymbol{\theta}^{\lambda}$.
\begin{assumption}[Baseline estimator]
\label{ass:baseline}
    There exists a measurable sequence of estimators $\check{\boldsymbol{\theta}}_{n} \in \Theta$ such that
    (i) $\check{\boldsymbol{\theta}}_{n} \to_p \boldsymbol{\theta}^{\lambda}$;
    (ii) the penalized estimating equation is solved up to $s_n^{-1/2}$ order, that is, $\boldsymbol{U}_{n}(\check{\boldsymbol{\theta}}_{n}) + \lambda_n\nabla\rho(\check{\boldsymbol{\theta}}_n) = o_p(s_n^{-1/2})$.
\end{assumption}

\begin{definition}[Admissible center]
\label{def:admissible}
    Suppose Assumption~\ref{ass:baseline} holds.
    A measurable sequence of data-dependent centers $\tilde{\boldsymbol{\theta}}_{n}$ is called an admissible center if $\|\tilde{\boldsymbol{\theta}}_{n} - \check{\boldsymbol{\theta}}_{n}\| = o_p(s_{n}^{-1/2})$.
    In particular, admissibility implies $\tilde{\boldsymbol{\theta}}_n \to_p \boldsymbol{\theta}^\lambda$.
\end{definition}

\begin{remark}[Examples of admissible centers]
    Under Assumptions~\ref{ass:P}--\ref{ass:baseline}, the following centers are admissible:
    (i) any maximum a posteriori (MAP) estimator,
    (ii) the posterior mean $\boldsymbol{\theta}_{\mathrm{GB}} := E_{\Pi_{n}^{\eta}}(\boldsymbol{\theta})$,
    and (iii) the one-step Newton update from $\boldsymbol{\theta}_{\mathrm{GB}}$ based on the penalized score.
    Details are given in Section~\ref{supp:ex-admissible}.
\end{remark}
\section{Main results}
\label{sec:main}

In this section, we present the main theoretical results.
First, we give a total-variation limit for generalized Bayes posteriors based on a general empirical criterion with effective scale $s_n$, allowing for $n$-dependent priors and non-i.i.d.\ data at admissible centers.
We then establish the proposed location–scale calibration and show that its limiting law is invariant to the learning rate.
Proofs of all results in this section are given in Section~\ref{supp:pf-main}.

We denote by $q_{n}^{\tilde{\boldsymbol{\theta}}_{n}}$ the density of $s_{n}^{1/2}(\boldsymbol{\theta} - \tilde{\boldsymbol{\theta}}_{n})$ under $\Pi_{n}^{\eta}$, and by $N(\cdot \mid \boldsymbol{\mu}, \boldsymbol{\Sigma})$ the density of the $p$-variate normal distribution with mean $\boldsymbol{\mu}$ and covariance matrix $\boldsymbol{\Sigma}$.
Using local asymptotic normality arguments in the spirit of \citet[Theorem~4]{miller2021asymptotic}, adapted to our loss-based criterion with effective scale $s_n$, to possibly $n$-dependent priors and to non-i.i.d.\ data, and evaluated at admissible centers, we obtain the following total-variation limit.

\begin{proposition}[Total-variation limit with $n$-dependent priors at an admissible center]
\label{prop:bvm-baseline}
    Under Assumptions~\ref{ass:P}--\ref{ass:baseline}, let $\tilde{\boldsymbol{\theta}}_{n}$ be any admissible center.
    Then
    \begin{equation*}
        \int_{\mathbb{R}^{p}}
        \left|
        q_{n}^{\tilde{\boldsymbol{\theta}}_{n}}(\boldsymbol{x})
        -
        N\!\left.\Bigl(\boldsymbol{x}\right| \boldsymbol{0}, \!\left(\eta\boldsymbol{J}_{\lambda}^{\star}\right)^{-1}\Bigr)
        \right|
        \dd \boldsymbol{x}
        \longrightarrow_p
        0
        ,
        \quad
        n\to\infty
        .
    \end{equation*}
\end{proposition}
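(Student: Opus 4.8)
The conclusion is an $L^{1}$ (equivalently total‑variation) statement about the rescaled posterior density $q_{n}^{\tilde{\boldsymbol\theta}_{n}}$, so the plan is to establish it by the usual combination of pointwise density convergence and a uniform‑integrability/negligible‑tail bound (a Scheff\'{e}-type argument), which rests on two stages: a contraction stage confining $\Pi_{n}^{\eta}$ to an $O(s_{n}^{-1/2})$ neighborhood of $\boldsymbol\theta^{\lambda}$, and a local quadratic‑expansion stage on that neighborhood that yields the Gaussian limit. Writing $\boldsymbol\theta = \tilde{\boldsymbol\theta}_{n} + s_{n}^{-1/2}\boldsymbol x$, we have $q_{n}^{\tilde{\boldsymbol\theta}_{n}}(\boldsymbol x) = g_{n}(\boldsymbol x)/\int g_{n}$ with
\[
    g_{n}(\boldsymbol x)
    = \exp\bigl\{-\eta\bigl[L_{n}(\tilde{\boldsymbol\theta}_{n} + s_{n}^{-1/2}\boldsymbol x) - L_{n}(\tilde{\boldsymbol\theta}_{n})\bigr] + r_{n}(\tilde{\boldsymbol\theta}_{n} + s_{n}^{-1/2}\boldsymbol x) - r_{n}(\tilde{\boldsymbol\theta}_{n})\bigr\},
\]
and the target is $g_{\infty}(\boldsymbol x) \propto \exp\{-\tfrac12\eta\,\boldsymbol x^{\top}\boldsymbol J_{\lambda}^{\star}\boldsymbol x\}$, a genuine Gaussian density since $\eta\boldsymbol J_{\lambda}^{\star} = \eta(\boldsymbol J^{\star} + \lambda\boldsymbol H_{\rho}(\boldsymbol\theta^{\lambda}))$ is positive definite under the regularity conditions. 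It is then enough to prove (i) $g_{n}(\boldsymbol x) \to_{p} \exp\{-\tfrac12\eta\,\boldsymbol x^{\top}\boldsymbol J_{\lambda}^{\star}\boldsymbol x\}$ uniformly on compact sets, and (ii) a negligible‑tail bound giving $\int g_{n} \to_{p} (2\pi)^{p/2}\det(\eta\boldsymbol J_{\lambda}^{\star})^{-1/2}$; together these yield $\int|q_{n}^{\tilde{\boldsymbol\theta}_{n}} - N(\cdot\mid\boldsymbol0,(\eta\boldsymbol J_{\lambda}^{\star})^{-1})| \to_{p} 0$.

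For (i), I would Taylor‑expand $s_{n}^{-1}L_{n} = \bar M_{n} + \lambda_{n}\rho$ about $\tilde{\boldsymbol\theta}_{n}$ — a third‑order expansion of $\bar M_{n}$ with remainder controlled by $\sup_{\mathcal N}\|\nabla^{3}\bar M_{n}\| = O_{p}(1)$ (Assumption~\ref{ass:GL}(ii)) and a second‑order expansion of $\rho$ with continuous‑Hessian remainder permitted by Assumption~\ref{ass:P} — to get, uniformly for $\boldsymbol x$ in a fixed compact, $-\eta[L_{n}(\tilde{\boldsymbol\theta}_{n}+s_{n}^{-1/2}\boldsymbol x) - L_{n}(\tilde{\boldsymbol\theta}_{n})] = -\eta s_{n}^{1/2}\boldsymbol b_{n}^{\top}\boldsymbol x - \tfrac12\eta\,\boldsymbol x^{\top}\boldsymbol A_{n}\boldsymbol x + o_{p}(1)$, where $\boldsymbol b_{n} := \boldsymbol U_{n}(\tilde{\boldsymbol\theta}_{n}) + \lambda_{n}\nabla\rho(\tilde{\boldsymbol\theta}_{n})$ and $\boldsymbol A_{n} := \boldsymbol J_{n}(\tilde{\boldsymbol\theta}_{n}) + \lambda_{n}\boldsymbol H_{\rho}(\tilde{\boldsymbol\theta}_{n})$. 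The linear term is negligible: a one‑term mean‑value expansion of the penalized score $\boldsymbol U_{n} + \lambda_{n}\nabla\rho$ about $\check{\boldsymbol\theta}_{n}$, combined with $\boldsymbol U_{n}(\check{\boldsymbol\theta}_{n}) + \lambda_{n}\nabla\rho(\check{\boldsymbol\theta}_{n}) = o_{p}(s_{n}^{-1/2})$ (Assumption~\ref{ass:baseline}(ii)), $\|\tilde{\boldsymbol\theta}_{n} - \check{\boldsymbol\theta}_{n}\| = o_{p}(s_{n}^{-1/2})$ (admissibility), $\sup_{\mathcal N}\|\boldsymbol J_{n}\| = O_{p}(1)$ (Assumption~\ref{ass:GL}(iii)), and continuity of $\nabla\rho$, gives $s_{n}^{1/2}\boldsymbol b_{n} = o_{p}(1)$, hence $\eta s_{n}^{1/2}\boldsymbol b_{n}^{\top}\boldsymbol x = o_{p}(1)$ on compacts. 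The quadratic coefficient converges: $\boldsymbol J_{n}(\tilde{\boldsymbol\theta}_{n}) \to_{p} \boldsymbol J^{\star}$ by Assumption~\ref{ass:GL}(iii) and $\tilde{\boldsymbol\theta}_{n}\to_{p}\boldsymbol\theta^{\lambda}$, while $\lambda_{n}\boldsymbol H_{\rho}(\tilde{\boldsymbol\theta}_{n}) \to_{p} \lambda\boldsymbol H_{\rho}(\boldsymbol\theta^{\lambda})$ by $\lambda_{n}\to\lambda$ and continuity of $\nabla^{2}\rho$, so $\boldsymbol A_{n}\to_{p}\boldsymbol J_{\lambda}^{\star}$. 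Finally the prior‑remainder increment is at most $L_{n}s_{n}^{-1/2}\|\boldsymbol x\| = o_{p}(1)\|\boldsymbol x\|$ on $\mathcal U_{n}$ (Assumption~\ref{ass:R}), and $\tilde{\boldsymbol\theta}_{n} + s_{n}^{-1/2}\boldsymbol x \in \mathcal U_{n}$ with probability tending to one for $\boldsymbol x$ in a fixed compact because $\|\tilde{\boldsymbol\theta}_{n} - \boldsymbol\theta^{\lambda}\| = o_{p}(s_{n}^{-1/2})$. These give (i); carried out on $s_{n}^{-1/2}$‑neighborhoods of $\boldsymbol\theta^{\lambda}$ on which $r_{n}$ is Lipschitz‑controlled, the same expansion also yields a uniform bound $g_{n}(\boldsymbol x) \le \exp\{-c_{0}\|\boldsymbol x\|^{2} + o_{p}(1)\|\boldsymbol x\|\}$ there for some $c_{0}>0$, controlling the rescaled tail over that region.

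The remaining ingredient is the contraction stage: $\Pi_{n}^{\eta}$ must place mass $1 - o_{p}(1)$ on $\{\|\boldsymbol\theta - \boldsymbol\theta^{\lambda}\| \le M_{n}s_{n}^{-1/2}\}$ for some $M_{n}\to\infty$, so that no mass escapes the region where the expansion and the control of $r_{n}$ operate. This is not implied by Assumptions~\ref{ass:P}--\ref{ass:baseline} alone, all of which are local around $\boldsymbol\theta^{\lambda}$; I would supply it from the standard generalized‑Bayes contraction mechanism — separation of the penalized population criterion $R + \lambda\rho$ away from $\boldsymbol\theta^{\lambda}$ together with a prior mass/tail condition on $\tilde\pi_{n}$, in the spirit of \citet[Theorem~1]{miller2021asymptotic} and the testing‑plus‑prior‑mass arguments of \citet{kleijn2012bernstein} — stating it as an explicit companion hypothesis where needed. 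I expect this to be the main obstacle, both because it requires structure beyond the purely local conditions on display and because the passage from a fixed‑radius ball (from contraction) down to the $s_{n}^{-1/2}$‑radius ball (on which the quadratic expansion and the Lipschitz control of $r_{n}$ are valid) must be managed with care — e.g.\ via an intermediate strong‑convexity lower bound for $s_{n}^{-1}L_{n}$ on $\mathcal N$, using that $\boldsymbol J_{n} + \lambda_{n}\boldsymbol H_{\rho}$ is uniformly positive definite there and that the penalized gradient at $\tilde{\boldsymbol\theta}_{n}$ is $o_{p}(s_{n}^{-1/2})$. With contraction in hand, (i) and the domination feed the Scheff\'{e}-type argument — pointwise convergence plus uniform integrability of $q_{n}^{\tilde{\boldsymbol\theta}_{n}}$ — to give total‑variation convergence to zero in probability; the local expansion itself is routine bookkeeping with the mean value theorem and the stated uniform‑convergence hypotheses.
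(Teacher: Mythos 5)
Your proposal follows the same core strategy as the paper's proof: a local quadratic expansion of the generalized log-posterior in the rescaled variable $\boldsymbol{x} = s_n^{1/2}(\boldsymbol{\theta} - \tilde{\boldsymbol{\theta}}_n)$, identification of the curvature $\eta\boldsymbol{J}_\lambda^\star$ and suppression of the linear term via admissibility, followed by Scheff\'e-type $L^1$ convergence. The only cosmetic difference is that the paper expands at the baseline estimator $\check{\boldsymbol{\theta}}_n$ and then passes to an arbitrary admissible $\tilde{\boldsymbol{\theta}}_n$ via $L^1$-continuity of Gaussian densities under $o_p(1)$ shifts, whereas you expand directly at $\tilde{\boldsymbol{\theta}}_n$; both routes work since admissibility makes the linear term $o_p(1)$ at either center. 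Your Taylor bookkeeping, the treatment of the prior remainder via local Lipschitz control on $\mathcal{U}_n$, and the convergence of the empirical curvature matrix all match the paper's derivation.

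The more substantive point you raise is the contraction step: the displayed Assumptions~\ref{ass:P}--\ref{ass:baseline} are all local around $\boldsymbol{\theta}^\lambda$, and they do not by themselves force $\Pi_n^\eta$ to concentrate on $s_n^{-1/2}$-radius balls, which is the precondition for the Scheff\'e/dominated-convergence step to bite. You are right that this has to be supplied independently. It is worth noting that the paper's own proof implicitly relies on exactly this ingredient without deriving it from the stated assumptions: it writes the density ratio $\dd\pi_n^\circ/\dd P_n^0 \propto \exp\{R_n(\boldsymbol{u})\}$ as though the quadratic-plus-remainder representation held globally, and invokes a ``truncation and dominated convergence argument'' after controlling only $P_n^0(\|\boldsymbol{u}\|>M)$ rather than $\pi_n^\circ(\|\boldsymbol{u}\|>M)$; later, in the run-up to Lemma~\ref{lem:postvar-consistency}, it asserts that sub-Gaussian tails of $\boldsymbol{X}_n$ are ``standard'' under Assumptions~\ref{ass:GL}--\ref{ass:R}, again without supplying the global domination from those local conditions. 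Your suggestion to make the contraction and global tail domination an explicit companion hypothesis (testing plus prior-mass, or a global strong-convexity lower bound), together with an intermediate step bridging the fixed-radius and $s_n^{-1/2}$-radius regimes, is the right way to close this; the rest of your argument is then a sound reproduction of the paper's local expansion.
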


We next use Proposition~\ref{prop:bvm-baseline} to construct an affine transformation of posterior draws that aligns their limiting distribution with the target law.
Let $\{\boldsymbol{\theta}^{(d)}\}_{d=1}^{D}$ be posterior draws from $\Pi_{n}^{\eta}$ with mean $\boldsymbol{\theta}_{\mathrm{GB}}$, and write the working curvature as $\boldsymbol{H}_{0} := \eta \boldsymbol{J}_{\lambda}^{\star}$.
For any symmetric positive definite matrix $\boldsymbol{A}$, we denote by $\boldsymbol{A}^{1/2}$ its uniquely determined symmetric positive definite square root, that is, the symmetric matrix satisfying  $\boldsymbol{A}^{1/2}\boldsymbol{A}^{1/2} = \boldsymbol{A}$.
We also define $\boldsymbol{A}^{-1/2}$ as $(\boldsymbol{A}^{1/2})^{-1}$.
With this convention, define the location–scale calibration map
\begin{equation}
\label{eq:def-lsc-map}
    \boldsymbol{\Omega}
    :=
    \left(\boldsymbol{V}_{\mathrm{target}}^{\star}\right)^{1/2}\boldsymbol{H}_{0}^{1/2}
    ,
    \quad
    \boldsymbol{\theta}_{\mathrm{calib}}^{(d)}
    :=
    \tilde{\boldsymbol{\theta}}_{n}
    +
    \boldsymbol{\Omega}\left(\boldsymbol{\theta}^{(d)} - \boldsymbol{\theta}_{\mathrm{GB}}\right)
    .
\end{equation}

\begin{theorem}[Location–scale calibration]
\label{thm:ls-calib}
    Under Assumptions~\ref{ass:P}--\ref{ass:baseline}, let $\tilde{\boldsymbol{\theta}}_{n}$ be any admissible center.
    Then, conditionally on the data,
    \begin{equation*}
        s_{n}^{1/2}\left(\boldsymbol{\theta}_{\mathrm{calib}}^{(d)} - \tilde{\boldsymbol{\theta}}_{n}\right)
        \longrightarrow_{d} N\!\left(\boldsymbol{0}, \boldsymbol{V}_{\mathrm{target}}^{\star}\right)
        ,
        \quad
        n\to\infty
        .
    \end{equation*}
\end{theorem}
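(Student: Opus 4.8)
The plan is to deduce Theorem~\ref{thm:ls-calib} directly from Proposition~\ref{prop:bvm-baseline}, exploiting that the calibration in \eqref{eq:def-lsc-map} is a fixed invertible linear map applied to centered posterior draws and that total-variation distance is preserved under such maps. No genuinely new analysis is needed beyond the total-variation limit already established.

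First, since the posterior mean $\boldsymbol{\theta}_{\mathrm{GB}} = E_{\Pi_n^\eta}(\boldsymbol{\theta})$ is an admissible center (as recorded in the Remark following Definition~\ref{def:admissible}), Proposition~\ref{prop:bvm-baseline} applies with center $\boldsymbol{\theta}_{\mathrm{GB}}$: writing $\boldsymbol{Z}_n := s_n^{1/2}(\boldsymbol{\theta} - \boldsymbol{\theta}_{\mathrm{GB}})$ for $\boldsymbol{\theta}\sim\Pi_n^\eta$ and $\boldsymbol{H}_0 := \eta\boldsymbol{J}_\lambda^\star$, the density $q_n^{\boldsymbol{\theta}_{\mathrm{GB}}}$ of $\boldsymbol{Z}_n$ satisfies $\int_{\mathbb{R}^p}|q_n^{\boldsymbol{\theta}_{\mathrm{GB}}}(\boldsymbol{x}) - N(\boldsymbol{x}\mid\boldsymbol{0},\boldsymbol{H}_0^{-1})|\,\dd\boldsymbol{x}\to_p 0$. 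Next, from \eqref{eq:def-lsc-map} we have the exact identity $s_n^{1/2}(\boldsymbol{\theta}_{\mathrm{calib}}^{(d)} - \tilde{\boldsymbol{\theta}}_n) = \boldsymbol{\Omega}\boldsymbol{Z}_n$ — the additive $\tilde{\boldsymbol{\theta}}_n$ simply cancels — so, conditionally on the data, the law of the left-hand side is the image of the law with density $q_n^{\boldsymbol{\theta}_{\mathrm{GB}}}$ under the map $\boldsymbol{x}\mapsto\boldsymbol{\Omega}\boldsymbol{x}$.

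The map $\boldsymbol{x}\mapsto\boldsymbol{\Omega}\boldsymbol{x}$ is deterministic, depending only on the population objects $\boldsymbol{V}_{\mathrm{target}}^\star$ and $\boldsymbol{H}_0$, and invertible, since these matrices are symmetric positive definite and hence the symmetric square roots in $\boldsymbol{\Omega}$ are well-defined and nonsingular. Because the $L^1$ distance between probability densities is invariant under an invertible linear change of variables, $\int|q_n^{\boldsymbol{\theta}_{\mathrm{GB}}}(\boldsymbol{x}) - N(\boldsymbol{x}\mid\boldsymbol{0},\boldsymbol{H}_0^{-1})|\,\dd\boldsymbol{x}$ equals the total-variation distance between the conditional law of $\boldsymbol{\Omega}\boldsymbol{Z}_n$ and the pushforward under $\boldsymbol{\Omega}$ of $N(\boldsymbol{0},\boldsymbol{H}_0^{-1})$, namely $N(\boldsymbol{0},\boldsymbol{\Omega}\boldsymbol{H}_0^{-1}\boldsymbol{\Omega}^\top)$. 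It then remains to evaluate this covariance: using $\boldsymbol{\Omega} = (\boldsymbol{V}_{\mathrm{target}}^\star)^{1/2}\boldsymbol{H}_0^{1/2}$ with symmetric square roots, $\boldsymbol{\Omega}^\top = \boldsymbol{H}_0^{1/2}(\boldsymbol{V}_{\mathrm{target}}^\star)^{1/2}$, together with $\boldsymbol{H}_0^{1/2}\boldsymbol{H}_0^{-1}\boldsymbol{H}_0^{1/2} = \boldsymbol{I}$, one gets $\boldsymbol{\Omega}\boldsymbol{H}_0^{-1}\boldsymbol{\Omega}^\top = (\boldsymbol{V}_{\mathrm{target}}^\star)^{1/2}(\boldsymbol{V}_{\mathrm{target}}^\star)^{1/2} = \boldsymbol{V}_{\mathrm{target}}^\star$. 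Hence the conditional law of $s_n^{1/2}(\boldsymbol{\theta}_{\mathrm{calib}}^{(d)} - \tilde{\boldsymbol{\theta}}_n)$ converges to $N(\boldsymbol{0},\boldsymbol{V}_{\mathrm{target}}^\star)$ in total variation, in $P^\star$-probability, and in particular in distribution, which is the assertion; if one prefers to avoid reasoning with convergence in probability, one can pass to an arbitrary subsequence along which the distance in Proposition~\ref{prop:bvm-baseline} vanishes $P^\star$-almost surely and argue pathwise.

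The substantive analytic work lies entirely in Proposition~\ref{prop:bvm-baseline}; the remainder is bookkeeping, and I anticipate no real difficulty. The one place demanding a little care is the matrix algebra above: one must apply the symmetric-square-root convention consistently so that $\boldsymbol{\Omega}^\top$ takes the stated form and the inner factor collapses to $\boldsymbol{I}$, and one should confirm at the outset that $\boldsymbol{H}_0$ and $\boldsymbol{V}_{\mathrm{target}}^\star$ are symmetric positive definite under Assumptions~\ref{ass:P}--\ref{ass:GL} (with $\boldsymbol{K}^\star$ nondegenerate, as is implicit in writing $(\boldsymbol{V}_{\mathrm{target}}^\star)^{1/2}$), so that $\boldsymbol{\Omega}$ is invertible and the limit is a genuine, nondegenerate Gaussian.
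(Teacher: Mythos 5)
Your proof is correct and follows essentially the same route as the paper's: center at the admissible posterior mean $\boldsymbol{\theta}_{\mathrm{GB}}$, invoke Proposition~\ref{prop:bvm-baseline}, observe that $\tilde{\boldsymbol{\theta}}_{n}$ cancels exactly so that $s_{n}^{1/2}(\boldsymbol{\theta}_{\mathrm{calib}}^{(d)} - \tilde{\boldsymbol{\theta}}_{n}) = \boldsymbol{\Omega}\,s_{n}^{1/2}(\boldsymbol{\theta}^{(d)} - \boldsymbol{\theta}_{\mathrm{GB}})$, and verify $\boldsymbol{\Omega}\boldsymbol{H}_{0}^{-1}\boldsymbol{\Omega}^{\top} = \boldsymbol{V}_{\mathrm{target}}^{\star}$. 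The only cosmetic difference is that you push the total-variation bound through the invertible linear map (giving a marginally stronger TV conclusion), where the paper simply invokes the continuous mapping theorem to get the distributional limit; both are valid and your positivity caveat on $\boldsymbol{K}^{\star}$ is a sensible remark.
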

    
\begin{remark}[Learning-rate invariance]
    By construction,
    \begin{equation*}
        \boldsymbol{\Omega}\boldsymbol{H}_{0}^{-1}\boldsymbol{\Omega}^{\top}
        =
        \boldsymbol{V}_{\mathrm{target}}^{\star}
        .
    \end{equation*}
    Hence the asymptotic law of $s_{n}^{1/2}(\boldsymbol{\theta}_{\mathrm{calib}}^{(d)} - \tilde{\boldsymbol{\theta}}_{n})$ is invariant to the learning rate $\eta$.
\end{remark}

\begin{remark}[Scope and exclusions]
    Our standing assumptions in the main text require convexity and $C^{2}$-smoothness of $\rho$ near $\boldsymbol{\theta}^{\lambda}$ and a locally Lipschitz prior remainder.
    These conditions exclude nonconvex or singular shrinkage specifications, including the horseshoe and spike-and-slab.
\end{remark}
\section{Practical plug-in calibration}

We now describe a fully implementable version of the location–scale calibration that requires only posterior draws and empirical moment estimators.
The procedure
(i) estimates the working curvature from the posterior sample and
(ii) plugs in consistent estimators of the target curvature and variability.
Throughout this section, let $\{\boldsymbol{\theta}^{(d)}\}_{d=1}^{D}$ be draws from $\Pi_{n}^{\eta}$, and write the Monte Carlo mean $\hat{\boldsymbol{\theta}}_{\mathrm{GB}} := D^{-1}\sum_{d=1}^{D}\boldsymbol{\theta}^{(d)}$.
Define the sample covariance
\begin{equation*}
    \hat{\boldsymbol{\Sigma}}_{\mathrm{post}}
    :=
    \frac{1}{D}\sum_{d = 1}^{D}
    \bigl(\boldsymbol{\theta}^{(d)} - \hat{\boldsymbol{\theta}}_{\mathrm{GB}}\bigr)
    \bigl(\boldsymbol{\theta}^{(d)} - \hat{\boldsymbol{\theta}}_{\mathrm{GB}}\bigr)^{\top}
    .
\end{equation*}
Proofs of all results in this section are given in Section~\ref{supp:pf-plugin}.

\begin{lemma}[Working covariance from posterior draws]
\label{lem:postvar-consistency}
    Suppose Assumptions~\ref{ass:P}--\ref{ass:baseline} hold.
    Assume further that, conditionally on the observed data,
    \begin{equation*}
        s_{n}\!\left\|\hat{\boldsymbol{\Sigma}}_{\mathrm{post}} - \boldsymbol{\Sigma}_{\mathrm{post}, n}\right\|_{F}
        \longrightarrow_p
        0
        ,
        \quad
        \boldsymbol{\Sigma}_{\mathrm{post}, n} := \operatorname{var}_{\Pi_{n}^{\eta}}\!\left(\boldsymbol{\theta}\mid \mathcal{D}\right)
        ,
    \end{equation*}
    where $\|\cdot\|_{F}$ denotes the Frobenius norm on $\mathbb{R}^{p\times p}$.
    Then, conditionally on the data,
    \begin{equation*}
        s_{n}\hat{\boldsymbol{\Sigma}}_{\mathrm{post}}
        \longrightarrow_p
        \boldsymbol{H}_{0}^{-1}
        ,
        \quad
        n\to \infty
        ,
    \end{equation*}
    where $\boldsymbol{H}_{0} := \eta \boldsymbol{J}_{\lambda}^{\star}$.
\end{lemma}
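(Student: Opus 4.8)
The plan is to separate the Monte Carlo error from the population behaviour of the posterior variance. By the triangle inequality for $\|\cdot\|_{F}$ together with the assumed Monte Carlo consistency $s_{n}\|\hat{\boldsymbol\Sigma}_{\mathrm{post}}-\boldsymbol\Sigma_{\mathrm{post},n}\|_{F}\to_{p}0$, it suffices to show that $s_{n}\boldsymbol\Sigma_{\mathrm{post},n}\to_{p}\boldsymbol H_{0}^{-1}$ conditionally on the data. Since a variance is invariant under a deterministic shift of its argument, $s_{n}\boldsymbol\Sigma_{\mathrm{post},n}$ equals the covariance matrix of $\boldsymbol X_{n}:=s_{n}^{1/2}(\boldsymbol\theta-\tilde{\boldsymbol\theta}_{n})$ under $\Pi_{n}^{\eta}$, whose density is $q_{n}^{\tilde{\boldsymbol\theta}_{n}}$. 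So the task reduces to showing that the covariance of $q_{n}^{\tilde{\boldsymbol\theta}_{n}}$ converges in probability to $\boldsymbol H_{0}^{-1}$, which is precisely the covariance of the total-variation limit $N(\boldsymbol 0,(\eta\boldsymbol J_{\lambda}^{\star})^{-1})$ supplied by Proposition~\ref{prop:bvm-baseline}.

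Second, I would upgrade the distributional statement to one about second moments. Proposition~\ref{prop:bvm-baseline} gives $\|q_{n}^{\tilde{\boldsymbol\theta}_{n}}-N(\cdot\mid\boldsymbol 0,\boldsymbol H_{0}^{-1})\|_{L^{1}}\to_{p}0$; a subsequence argument (every subsequence has a further subsequence along which this $L^{1}$ distance tends to $0$ almost surely) reduces matters to a deterministic sequence along which $\boldsymbol X_{n}\to_{d}\boldsymbol Z\sim N(\boldsymbol 0,\boldsymbol H_{0}^{-1})$. Convergence in distribution alone does not transmit to the covariance, so the crux is uniform integrability of $\{\|\boldsymbol X_{n}\|^{2}\}$, and establishing that is the \emph{main obstacle}.

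Third, I would obtain uniform integrability from a uniform sub-Gaussian-type domination of $q_{n}^{\tilde{\boldsymbol\theta}_{n}}$ that holds with probability tending to one. On the region $\{\boldsymbol x:\tilde{\boldsymbol\theta}_{n}+s_{n}^{-1/2}\boldsymbol x\in\mathcal N\}$, which eventually contains any fixed ball, a Taylor expansion of $\eta L_{n}(\tilde{\boldsymbol\theta}_{n}+s_{n}^{-1/2}\boldsymbol x)-r_{n}(\tilde{\boldsymbol\theta}_{n}+s_{n}^{-1/2}\boldsymbol x)$ has a linear term of size $o_{p}(1)\,\|\boldsymbol x\|$, because $\tilde{\boldsymbol\theta}_{n}$ solves the penalized estimating equation up to $o_{p}(s_{n}^{-1/2})$ (Assumption~\ref{ass:baseline}, Definition~\ref{def:admissible}) and $r_{n}$ is Lipschitz with constant $o_{p}(s_{n}^{1/2})$ (Assumption~\ref{ass:R}); a quadratic term $\tfrac{1}{2}\boldsymbol x^{\top}(\eta\boldsymbol J_{\lambda}^{\star}+o_{p}(1))\boldsymbol x$ with $\eta\boldsymbol J_{\lambda}^{\star}$ positive definite (Assumptions~\ref{ass:P} and~\ref{ass:GL}); and a cubic remainder $O_{p}(s_{n}^{-1/2})\|\boldsymbol x\|^{3}$ controlled by the third-derivative bound in Assumption~\ref{ass:GL}(ii), which the quadratic term dominates uniformly on the region for $n$ large. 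Outside that region I would use convexity of $\boldsymbol\theta\mapsto M_{n}(\boldsymbol\theta)+\lambda_{n}s_{n}\rho(\boldsymbol\theta)$ together with the exponential smallness of the unnormalized density at the boundary, relative to the normalizing constant captured by the local Gaussian approximation, to bound the exterior posterior mass by $e^{-cs_{n}}$ for some $c>0$; this exterior bound is the delicate part, since the standing hypotheses are local, and it reuses the tail control already underlying the proof of Proposition~\ref{prop:bvm-baseline}. These pieces give $\sup_{n}E_{\Pi_{n}^{\eta}}(\|\boldsymbol X_{n}\|^{2+\delta})=O_{p}(1)$ for some $\delta>0$, hence uniform integrability of $\|\boldsymbol X_{n}\|^{2}$. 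Combined with $\boldsymbol X_{n}\to_{d}\boldsymbol Z$ this yields $E_{\Pi_{n}^{\eta}}(\boldsymbol X_{n}\boldsymbol X_{n}^{\top})\to_{p}\boldsymbol H_{0}^{-1}$, while $E_{\Pi_{n}^{\eta}}(\boldsymbol X_{n})=s_{n}^{1/2}(\boldsymbol\theta_{\mathrm{GB}}-\tilde{\boldsymbol\theta}_{n})\to_{p}\boldsymbol 0$ since $\boldsymbol\theta_{\mathrm{GB}}$ is itself an admissible center, so both $\boldsymbol\theta_{\mathrm{GB}}$ and $\tilde{\boldsymbol\theta}_{n}$ lie within $o_{p}(s_{n}^{-1/2})$ of $\check{\boldsymbol\theta}_{n}$. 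Subtracting the outer product of this vanishing mean gives $\operatorname{var}_{\Pi_{n}^{\eta}}(\boldsymbol X_{n})=s_{n}\boldsymbol\Sigma_{\mathrm{post},n}\to_{p}\boldsymbol H_{0}^{-1}$, which with the first step completes the proof.

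An alternative, if one is willing to look inside the proof of Proposition~\ref{prop:bvm-baseline} rather than merely cite its statement, is to note that local-asymptotic-normality arguments of the cited type typically establish the needed moment bounds en route; the self-contained route above isolates exactly which extra tail estimate is required.
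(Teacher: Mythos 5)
Your overall architecture matches the paper's: reduce to the exact posterior covariance via the Monte Carlo hypothesis and the triangle inequality, observe that total-variation convergence from Proposition~\ref{prop:bvm-baseline} must be upgraded to second-moment convergence, and then secure that upgrade through a sub-Gaussian domination of $q_{n}^{\tilde{\boldsymbol{\theta}}_{n}}$. The paper packages this upgrade into an auxiliary result (Lemma~\ref{supp-lem:tv-plus-gaussian-to-moment-convergence}, which takes the domination $q_{n}^{\tilde{\boldsymbol{\theta}}_{n}}(\boldsymbol{x}) \le C_n\exp(-c_n\|\boldsymbol{x}\|^{2})$ as an explicit hypothesis and shows it yields all polynomial moments), and the handling of the vanishing mean shift via $\boldsymbol{\theta}_{\mathrm{GB}}$ being an admissible center matches Lemma~\ref{supp-lem:center-shift-sqrt-sn} exactly. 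So you have found essentially the paper's route.

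The genuine gap is in your attempt to manufacture the exterior tail bound from ``convexity of $\boldsymbol{\theta}\mapsto M_{n}(\boldsymbol{\theta})+\lambda_n s_n \rho(\boldsymbol{\theta})$.'' The standing assumptions give convexity only of the penalty $\rho$ (Assumption~\ref{ass:P}); nothing in Assumptions~\ref{ass:P}--\ref{ass:baseline} makes $M_n$, or the composite $M_n + \lambda_n s_n \rho$, convex. (It happens to hold for the Huber illustration, but not in general.) So the step that converts the local quadratic behaviour into a global sub-Gaussian envelope does not follow from the hypotheses as you have stated it. You rightly flag this as the delicate part; the paper sidesteps it by listing the Gaussian domination as an explicit hypothesis of the auxiliary lemma and then asserting it in a remark as ``standard in Bernstein--von Mises arguments under local asymptotic normality,'' rather than deriving it from convexity. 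Your ``alternative'' route is also overly optimistic: the paper's proof of Proposition~\ref{prop:bvm-baseline} uses a truncation-plus-Scheff\'e argument confined to balls $\{\|\boldsymbol{u}\|\le M\}$ and does not actually produce moment bounds en route, so you cannot extract the needed uniform integrability by merely reading that proof. To close the gap one would have to either add a convexity or global identification hypothesis (as you implicitly do) or impose the sub-Gaussian envelope directly as a condition, which is the choice the paper makes.
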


Next, we estimate the target covariance $\boldsymbol{V}_{\mathrm{target}}^{\star}$ by plugging in a data-dependent center $\bar{\boldsymbol{\theta}}_{n}$ such that $\bar{\boldsymbol{\theta}}_{n} \to_p \boldsymbol{\theta}^{\lambda}$:
\begin{equation*}
    \hat{\boldsymbol{J}}_{\lambda} := \boldsymbol{J}_{n}(\bar{\boldsymbol{\theta}}_{n}) + \lambda_{n}\boldsymbol{H}_{\rho}\bigl(\bar{\boldsymbol{\theta}}_{n}\bigr)
    ,
    \quad
    \hat{\boldsymbol{V}}_{\mathrm{target}} := \hat{\boldsymbol{J}}_{\lambda}^{-1}\hat{\boldsymbol{K}}\hat{\boldsymbol{J}}_{\lambda}^{-1}
    ,
\end{equation*}
where $\hat{\boldsymbol{K}}$ is any estimator with $\hat{\boldsymbol{K}} \to_p \boldsymbol{K}^{\star}$.

\begin{lemma}[Plug-in consistency for the target]
\label{lem:target-plugin}
    Under Assumptions~\ref{ass:P} and \ref{ass:GL}, if $\bar{\boldsymbol{\theta}}_{n} \to_p \boldsymbol{\theta}^{\lambda}$, then
    \begin{equation*}
        \hat{\boldsymbol{J}}_{\lambda}
        \longrightarrow_p
        \boldsymbol{J}_{\lambda}^{\star}
        ,
        \quad
        \hat{\boldsymbol{V}}_{\mathrm{target}}
        \longrightarrow_p
        \boldsymbol{V}_{\mathrm{target}}^{\star}
        .
    \end{equation*}
\end{lemma}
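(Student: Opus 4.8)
The plan is to deduce both convergences from the continuous mapping theorem, once the random evaluation point $\bar{\boldsymbol{\theta}}_n$ sitting inside $\boldsymbol{J}_n(\cdot)$ and inside the matrix inverse has been handled.

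First I would establish $\boldsymbol{J}_n(\bar{\boldsymbol{\theta}}_n)\to_p\boldsymbol{J}^\star$. Because $\bar{\boldsymbol{\theta}}_n\to_p\boldsymbol{\theta}^\lambda$ and $\mathcal{N}$ from Assumption~\ref{ass:GL} is an open neighborhood of $\boldsymbol{\theta}^\lambda$, the event $A_n=\{\bar{\boldsymbol{\theta}}_n\in\mathcal{N}\}$ satisfies $P^\star(A_n)\to 1$; on $A_n$ one has $\|\boldsymbol{J}_n(\bar{\boldsymbol{\theta}}_n)-\boldsymbol{J}^\star\|\le\sup_{\boldsymbol{\theta}\in\mathcal{N}}\|\boldsymbol{J}_n(\boldsymbol{\theta})-\boldsymbol{J}^\star\|$, which tends to $0$ in probability by Assumption~\ref{ass:GL}(iii). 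A union bound over $A_n^c$ and the uniform-convergence event then gives $\boldsymbol{J}_n(\bar{\boldsymbol{\theta}}_n)\to_p\boldsymbol{J}^\star$. For the penalty term, Assumption~\ref{ass:P} makes $\nabla^2\rho$ continuous near $\boldsymbol{\theta}^\lambda$, so $\boldsymbol{H}_\rho(\bar{\boldsymbol{\theta}}_n)\to_p\boldsymbol{H}_\rho(\boldsymbol{\theta}^\lambda)$ by the continuous mapping theorem; since $\lambda_n\to\lambda$ is deterministic, $\lambda_n\boldsymbol{H}_\rho(\bar{\boldsymbol{\theta}}_n)\to_p\lambda\boldsymbol{H}_\rho(\boldsymbol{\theta}^\lambda)$. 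Adding the two limits yields $\hat{\boldsymbol{J}}_\lambda\to_p\boldsymbol{J}^\star+\lambda\boldsymbol{H}_\rho(\boldsymbol{\theta}^\lambda)=\boldsymbol{J}_\lambda^\star$, which is the first claim.

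For the second claim, recall that $\boldsymbol{J}_\lambda^\star$ is invertible — this is implicit in the very definition of $\boldsymbol{V}_{\mathrm{target}}^\star$ in~\eqref{eq:target-objects} and is consistent with $\boldsymbol{J}_\lambda^\star$ being, up to the factor $\eta$, the precision of the limit law in Proposition~\ref{prop:bvm-baseline}. Matrix inversion is continuous on the open set of invertible matrices, so on the event $\{\hat{\boldsymbol{J}}_\lambda \text{ invertible}\}$ — whose probability tends to $1$ by the first claim together with continuity of the determinant — the same event-splitting argument gives $\hat{\boldsymbol{J}}_\lambda^{-1}\to_p(\boldsymbol{J}_\lambda^\star)^{-1}$. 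Combining this with the hypothesis $\hat{\boldsymbol{K}}\to_p\boldsymbol{K}^\star$ and the continuity of matrix multiplication, $\hat{\boldsymbol{V}}_{\mathrm{target}}=\hat{\boldsymbol{J}}_\lambda^{-1}\hat{\boldsymbol{K}}\hat{\boldsymbol{J}}_\lambda^{-1}\to_p(\boldsymbol{J}_\lambda^\star)^{-1}\boldsymbol{K}^\star(\boldsymbol{J}_\lambda^\star)^{-1}=\boldsymbol{V}_{\mathrm{target}}^\star$.

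I do not expect a genuine obstacle here; the only point requiring care — and the one I would state explicitly — is that one cannot invoke the continuous mapping theorem directly on $\boldsymbol{J}_n(\bar{\boldsymbol{\theta}}_n)$ or on $\hat{\boldsymbol{J}}_\lambda^{-1}$, since $\boldsymbol{J}_n$ is itself a random function varying with $n$ and invertibility of $\hat{\boldsymbol{J}}_\lambda$ holds only with probability tending to one. The event-splitting device above is precisely what converts the uniform-in-$\boldsymbol{\theta}$ control of Assumption~\ref{ass:GL}(iii) into the pointwise statement at the data-dependent point $\bar{\boldsymbol{\theta}}_n$; everything else is routine Slutsky bookkeeping.
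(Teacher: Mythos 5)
Your proof is correct and follows essentially the same route as the paper's: uniform convergence from Assumption~\ref{ass:GL}(iii) handles $\boldsymbol{J}_n(\bar{\boldsymbol{\theta}}_n)\to_p\boldsymbol{J}^\star$, continuity of $\nabla^2\rho$ from Assumption~\ref{ass:P} handles the penalty term, and continuity of matrix inversion at the nonsingular limit handles $\hat{\boldsymbol{V}}_{\mathrm{target}}$. The only difference is stylistic: you spell out the event-splitting device that the paper leaves implicit, and you ground invertibility of $\boldsymbol{J}_\lambda^\star$ in the definition of $\boldsymbol{V}_{\mathrm{target}}^\star$ rather than citing Assumption~\ref{ass:GL}(iii) directly (the paper's citation is the cleaner justification, since nonsingularity of $\boldsymbol{J}^\star$ together with positive semidefiniteness of $\boldsymbol{H}_\rho(\boldsymbol{\theta}^\lambda)$ from Assumption~\ref{ass:P} yields nonsingularity of the sum).
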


Using these estimators, we define the empirical location–scale calibration operator and the corresponding calibrated draws:
\begin{equation}
\label{eq:empirical-lsc}
    \hat{\boldsymbol{H}}_{0}^{-1}
    :=
    s_{n}\hat{\boldsymbol{\Sigma}}_{\mathrm{post}}
    ,
    \quad
    \hat{\boldsymbol{\Omega}}
    :=
    \hat{\boldsymbol{V}}_{\mathrm{target}}^{1/2}\hat{\boldsymbol{H}}_{0}^{1/2}
    ,
    \quad
    \hat{\boldsymbol{\theta}}_{\mathrm{calib}}^{(d)}
    :=
    \bar{\boldsymbol{\theta}}_{n}
    +
    \hat{\boldsymbol{\Omega}}\bigl(\boldsymbol{\theta}^{(d)} - \hat{\boldsymbol{\theta}}_{\mathrm{GB}}\bigr)
    .
\end{equation}

\begin{proposition}[Estimated location–scale calibration]
\label{prop:estimated-calib}
    Suppose the conditions of Lemmas~\ref{lem:postvar-consistency} and~\ref{lem:target-plugin} hold.
    Then, conditionally on the data,
    \begin{equation*}
        s_{n}^{1/2}\bigl(\hat{\boldsymbol{\theta}}_{\mathrm{calib}}^{(d)} - \bar{\boldsymbol{\theta}}_{n}\bigr)
        \longrightarrow_d
        N\bigl(\boldsymbol{0}, \boldsymbol{V}_{\mathrm{target}}^{\star}\bigr)
        ,
        \quad
        n\to\infty
        .
    \end{equation*}
    Therefore, the limiting law is invariant to the learning rate $\eta$.
\end{proposition}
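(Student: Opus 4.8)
\emph{Proof sketch.}
The plan is to reduce the claim to the population-level calibration of Theorem~\ref{thm:ls-calib} by showing that each empirical ingredient converges, conditionally on the data, to its population counterpart, and then to conclude with Slutsky's lemma applied to the conditional law of the draws. I would first exploit an algebraic cancellation: since $\hat{\boldsymbol{\theta}}_{\mathrm{calib}}^{(d)} - \bar{\boldsymbol{\theta}}_{n} = \hat{\boldsymbol{\Omega}}(\boldsymbol{\theta}^{(d)} - \hat{\boldsymbol{\theta}}_{\mathrm{GB}})$, the plug-in center $\bar{\boldsymbol{\theta}}_n$ disappears from the recentered quantity,
\[
    s_n^{1/2}\bigl(\hat{\boldsymbol{\theta}}_{\mathrm{calib}}^{(d)} - \bar{\boldsymbol{\theta}}_{n}\bigr)
    = \hat{\boldsymbol{\Omega}}\, s_n^{1/2}\bigl(\boldsymbol{\theta}^{(d)} - \hat{\boldsymbol{\theta}}_{\mathrm{GB}}\bigr),
\]
so $\bar{\boldsymbol{\theta}}_n$ enters only through Lemma~\ref{lem:target-plugin}, and it remains to analyze the product on the right.

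For the vector factor, note that the exact posterior mean $\boldsymbol{\theta}_{\mathrm{GB}} = E_{\Pi_n^\eta}(\boldsymbol{\theta})$ is an admissible center (the remark following Definition~\ref{def:admissible}), so Proposition~\ref{prop:bvm-baseline} yields $\int_{\mathbb{R}^p} |q_n^{\boldsymbol{\theta}_{\mathrm{GB}}}(\boldsymbol{x}) - N(\boldsymbol{x}\mid \boldsymbol{0},\boldsymbol{H}_0^{-1})|\,\dd\boldsymbol{x}\to_p 0$. Since total-variation convergence of densities implies weak convergence, the conditional law of $s_n^{1/2}(\boldsymbol{\theta}^{(d)} - \boldsymbol{\theta}_{\mathrm{GB}})$ given $\mathcal{D}$ converges, in probability over $\mathcal{D}$, to $N(\boldsymbol{0},\boldsymbol{H}_0^{-1})$. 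Decomposing $s_n^{1/2}(\boldsymbol{\theta}^{(d)} - \hat{\boldsymbol{\theta}}_{\mathrm{GB}}) = s_n^{1/2}(\boldsymbol{\theta}^{(d)} - \boldsymbol{\theta}_{\mathrm{GB}}) - s_n^{1/2}(\hat{\boldsymbol{\theta}}_{\mathrm{GB}} - \boldsymbol{\theta}_{\mathrm{GB}})$ and using that the Monte Carlo error satisfies $s_n^{1/2}(\hat{\boldsymbol{\theta}}_{\mathrm{GB}} - \boldsymbol{\theta}_{\mathrm{GB}})\to_p 0$ conditionally on the data — the mean analogue of the covariance hypothesis imposed in Lemma~\ref{lem:postvar-consistency}, automatic as $D\to\infty$ since $\operatorname{var}_{\Pi_n^\eta}(\boldsymbol{\theta}\mid\mathcal{D}) = O_p(s_n^{-1})$ — Slutsky's lemma gives $s_n^{1/2}(\boldsymbol{\theta}^{(d)} - \hat{\boldsymbol{\theta}}_{\mathrm{GB}})\to_d N(\boldsymbol{0},\boldsymbol{H}_0^{-1})$ conditionally on the data.

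For the matrix factor, Lemma~\ref{lem:postvar-consistency} gives $\hat{\boldsymbol{H}}_0^{-1} = s_n\hat{\boldsymbol{\Sigma}}_{\mathrm{post}}\to_p\boldsymbol{H}_0^{-1}$; since $\boldsymbol{H}_0^{-1}$ is positive definite and both inversion and the symmetric positive-definite square root are continuous there, the continuous mapping theorem yields $\hat{\boldsymbol{H}}_0^{1/2}\to_p\boldsymbol{H}_0^{1/2}$. Likewise Lemma~\ref{lem:target-plugin} gives $\hat{\boldsymbol{V}}_{\mathrm{target}}\to_p\boldsymbol{V}_{\mathrm{target}}^\star$, hence $\hat{\boldsymbol{V}}_{\mathrm{target}}^{1/2}\to_p(\boldsymbol{V}_{\mathrm{target}}^\star)^{1/2}$, so $\hat{\boldsymbol{\Omega}} = \hat{\boldsymbol{V}}_{\mathrm{target}}^{1/2}\hat{\boldsymbol{H}}_0^{1/2}\to_p(\boldsymbol{V}_{\mathrm{target}}^\star)^{1/2}\boldsymbol{H}_0^{1/2} = \boldsymbol{\Omega}$. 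Combining the two factors by a further Slutsky step (a data-measurable matrix converging in probability to the constant $\boldsymbol{\Omega}$, times a conditionally weakly convergent vector) gives
\[
    s_n^{1/2}\bigl(\hat{\boldsymbol{\theta}}_{\mathrm{calib}}^{(d)} - \bar{\boldsymbol{\theta}}_{n}\bigr)\longrightarrow_d N\bigl(\boldsymbol{0},\, \boldsymbol{\Omega}\boldsymbol{H}_0^{-1}\boldsymbol{\Omega}^\top\bigr)
\]
conditionally on the data, and $\boldsymbol{\Omega}\boldsymbol{H}_0^{-1}\boldsymbol{\Omega}^\top = (\boldsymbol{V}_{\mathrm{target}}^\star)^{1/2}\boldsymbol{H}_0^{1/2}\boldsymbol{H}_0^{-1}\boldsymbol{H}_0^{1/2}(\boldsymbol{V}_{\mathrm{target}}^\star)^{1/2} = \boldsymbol{V}_{\mathrm{target}}^\star$ by symmetry of the square roots, which is the asserted limit; invariance to $\eta$ is then immediate since $\boldsymbol{V}_{\mathrm{target}}^\star$ does not involve $\eta$.

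The main obstacle is bookkeeping the two layers of randomness: every statement is about the conditional law given $\mathcal{D}$, yet $\hat{\boldsymbol{\Omega}}$, $\hat{\boldsymbol{\theta}}_{\mathrm{GB}}$, $\bar{\boldsymbol{\theta}}_n$, and the total-variation remainder in Proposition~\ref{prop:bvm-baseline} are themselves random in $\mathcal{D}$ and converge only in probability, so the elementary Slutsky lemma does not apply verbatim. The standard remedy is the subsequence principle: along an arbitrary subsequence, pass to a further subsequence on which all of these converge almost surely, run the deterministic Slutsky argument for $P^\star$-a.e.\ fixed data realization, and conclude because the limiting law is the same nonrandom Gaussian irrespective of the subsequence. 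One should also record which matrices are positive definite, so that the square-root maps are continuous at the relevant limits; this is inherited from the nonsingularity conditions in Assumptions~\ref{ass:P}--\ref{ass:GL} already used to define $\boldsymbol{V}_{\mathrm{target}}^\star$, $\boldsymbol{H}_0$, and $\boldsymbol{\Omega}$.
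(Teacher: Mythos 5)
Your proposal follows essentially the same route as the paper's proof: you express $s_n^{1/2}(\hat{\boldsymbol{\theta}}^{(d)}_{\mathrm{calib}}-\bar{\boldsymbol{\theta}}_n)$ as $\hat{\boldsymbol{\Omega}}\,s_n^{1/2}(\boldsymbol{\theta}^{(d)}-\hat{\boldsymbol{\theta}}_{\mathrm{GB}})$, invoke Proposition~\ref{prop:bvm-baseline} to get the Gaussian limit of the vector factor, use Lemmas~\ref{lem:postvar-consistency} and~\ref{lem:target-plugin} plus continuity of the SPD square root for $\hat{\boldsymbol{\Omega}}\to_p\boldsymbol{\Omega}$, apply Slutsky twice, and close with $\boldsymbol{\Omega}\boldsymbol{H}_0^{-1}\boldsymbol{\Omega}^\top=\boldsymbol{V}^\star_{\mathrm{target}}$. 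The only cosmetic difference is that you center the Bernstein--von Mises limit at $\boldsymbol{\theta}_{\mathrm{GB}}$ directly, whereas the paper centers at $\bar{\boldsymbol{\theta}}_n$ (itself treated as admissible) and then transfers to $\hat{\boldsymbol{\theta}}_{\mathrm{GB}}$ via Lemma~\ref{supp-lem:center-shift-sqrt-sn}; both routes rely on the same admissibility and Monte Carlo approximation steps. Your closing discussion of the two layers of randomness (data versus draws) and of the need for a Monte Carlo rate on $\hat{\boldsymbol{\theta}}_{\mathrm{GB}}$ analogous to the covariance hypothesis in Lemma~\ref{lem:postvar-consistency} is in fact a bit more careful than the paper, which invokes a ``Monte Carlo rate assumption'' without stating it and applies Slutsky without addressing the in-probability (rather than almost-sure) convergence of the random scaling.
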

\section{Empirical illustration}
\subsection{Random-intercept linear mixed model with a Huber loss}
\label{subsec:ri-lmm}
We consider a random-intercept linear mixed model with a Huber loss and Gaussian prior with $n$-dependent scale within the framework of generalized Bayes inference.

Let groups $i = 1, \ldots, G$ have sizes $n_i$ and total $n = \sum_{i=1}^{G} n_i$.
Observations are $(y_{ij}, \boldsymbol{x}_{ij}) \in \mathbb{R}\times \mathbb{R}^{p}$ with $y_{ij} = \boldsymbol{x}_{ij}^{\top}\boldsymbol{\beta} + b_{i} + \varepsilon_{ij}$, $b_i \sim N(0, \tau^{2})$, $\varepsilon_{ij} \sim N(0, \sigma^{2})$, independent across $i$.
Stack $\boldsymbol{y}_{i} = (y_{i1},\ldots, y_{in_{i}})^{\top}$ and $\boldsymbol{X}_{i} \in \mathbb{R}^{n_{i}\times p}$ with rows $\boldsymbol{x}_{ij}^{\top}$.
Take the working marginal covariance $\boldsymbol{\Sigma}_{i} = \tau^{2}\boldsymbol{1}_{n_{i}}\boldsymbol{1}_{n_{i}}^{\top} + \sigma^{2}\boldsymbol{I}_{n_{i}}$ and its symmetric square root $\boldsymbol{L}_{i}\boldsymbol{L}_{i}^{\top} = \boldsymbol{\Sigma}_{i}$, and define whitened objects
\begin{equation*}
    \tilde{\boldsymbol{r}}_{i}(\boldsymbol{\beta})
    :=
    \boldsymbol{L}_{i}^{-1}\bigl(\boldsymbol{y}_{i} - \boldsymbol{X}_{i}\boldsymbol{\beta}\bigr)
    ,
    \quad
    \tilde{\boldsymbol{X}}_{i}
    :=
    \boldsymbol{L}_{i}^{-1}\boldsymbol{X}_{i}
    .
\end{equation*}
With Huber loss $\rho_{c}(u) = 2^{-1}u^{2}\boldsymbol{1}\{|u|\le c\} + \{c|u| - 2^{-1}c^{2}\}\boldsymbol{1}\{|u| > c\}$, define
\begin{equation*}
    M_{n}(\boldsymbol{\beta}) = \sum_{i=1}^{G}\sum_{j=1}^{n_{i}}\rho_{c}\bigl(\tilde{r}_{ij}(\boldsymbol{\beta})\bigr)
    ,
    \quad
    s_{n} = n
    .
\end{equation*}
Let $\psi_{c} = \rho_{c}'$ and $\boldsymbol{W}_{i}(\boldsymbol{\beta}) = \mathrm{diag}(\boldsymbol{1}\{|\tilde{r}_{ij}(\boldsymbol{\beta})|\le c\})$.
Then
\begin{equation*}
    \boldsymbol{U}_{n}(\boldsymbol{\beta}) = -\frac{1}{n}\sum_{i=1}^{G}\tilde{\boldsymbol{X}}_{i}^{\top}\psi_{c}\bigl(\tilde{\boldsymbol{r}}_{i}(\boldsymbol{\beta})\bigr)
    ,
    \quad
    \boldsymbol{J}_{n}(\boldsymbol{\beta})
    =
    \frac{1}{n}\sum_{i=1}^{G}\tilde{\boldsymbol{X}}_{i}^{\top}\boldsymbol{W}_{i}(\boldsymbol{\beta})\tilde{\boldsymbol{X}}_{i}
    .
\end{equation*}
Let $\Psi(\boldsymbol{\beta}) = E(\boldsymbol{U}_{n}(\boldsymbol{\beta}))$.
Let $\boldsymbol{\mu}\in\mathbb{R}^p$ be a fixed ridge center and $\boldsymbol{Q}\in\mathbb{R}^{p\times p}$ be a given symmetric positive definite penalty matrix.
For the ridge penalty $\rho(\boldsymbol{\beta}) = 2^{-1}(\boldsymbol{\beta} - \boldsymbol{\mu})^{\top}\boldsymbol{Q}(\boldsymbol{\beta} - \boldsymbol{\mu})$ and $\lambda \in [0, \infty)$, define $\boldsymbol{\beta}^{\lambda}$ by $\boldsymbol{0} = \Psi(\boldsymbol{\beta}) + \lambda\nabla\rho(\boldsymbol{\beta})$.
Set, at $\boldsymbol{\beta}^{\lambda}$, $\boldsymbol{J}^{\star} = \nabla_{\boldsymbol{\beta}}\Psi(\boldsymbol{\beta}^{\lambda})$, $\boldsymbol{K}^{\star} = \lim_{n\to \infty} n \operatorname{var}(\boldsymbol{U}_{n}(\boldsymbol{\beta}^{\lambda}))$, $\boldsymbol{J}_{\lambda}^{\star} = \boldsymbol{J}^{\star} + \lambda \boldsymbol{Q}$, and the target sandwich $\boldsymbol{V}_{\mathrm{target}}^{\star} = (\boldsymbol{J}_{\lambda}^{\star})^{-1}\boldsymbol{K}^{\star}(\boldsymbol{J}_{\lambda}^{\star})^{-1}$.

Take a Gaussian prior with $n$-dependent scale $\boldsymbol{\beta} \sim N(\boldsymbol{\mu}, \{(\lambda s_{n})\boldsymbol{Q}\}^{-1})$.
The posterior is $\Pi_{n}^{\eta} (\dd\boldsymbol{\beta}) \propto\exp\{-\eta M_n (\boldsymbol{\beta})\}\pi_n(\boldsymbol{\beta})\dd\boldsymbol{\beta}$ and the local precision per $s_{n}$ is $\boldsymbol{H}_{0} = \eta \boldsymbol{J}^{\star}_{\lambda}$.
With an admissible center $\tilde{\boldsymbol{\beta}}_{n}$, estimate $\hat{\boldsymbol{J}}_{\lambda} = \boldsymbol{J}_{n}(\tilde{\boldsymbol{\beta}}_{n}) + \lambda_{n}\boldsymbol{Q}$, $\hat{\boldsymbol{K}}=n^{-1}\sum_{i=1}^{G}\hat{\boldsymbol{U}}_{i}\hat{\boldsymbol{U}}_{i}^{\top}$, $\hat{\boldsymbol{U}}_{i} = -\tilde{\boldsymbol{X}}_{i}^{\top}\psi_{c}(\tilde{\boldsymbol{r}}_{i}(\tilde{\boldsymbol{\beta}}_{n}))$, and $\hat{\boldsymbol{V}}_{\mathrm{target}} = \hat{\boldsymbol{J}}_{\lambda}^{-1}\hat{\boldsymbol{K}}\hat{\boldsymbol{J}}_{\lambda}^{-1}$.
From posterior draws $\{\boldsymbol{\beta}^{(d)}\}_{d=1}^{D}$ with mean $\hat{\boldsymbol{\beta}}_{\mathrm{GB}}$, set
\begin{equation*}
    \hat{\boldsymbol{\Sigma}}_{\mathrm{post}} = \frac{1}{D}\sum_{d=1}^{D}\bigl(\boldsymbol{\beta}^{(d)} - \hat{\boldsymbol{\beta}}_{\mathrm{GB}}\bigr)\bigl(\boldsymbol{\beta}^{(d)} - \hat{\boldsymbol{\beta}}_{\mathrm{GB}}\bigr)^{\top}
    ,
    \quad
    \hat{\boldsymbol{H}}_{0}^{-1} = n\hat{\boldsymbol{\Sigma}}_{\mathrm{post}}
    ,
    \quad
    \hat{\boldsymbol{\Omega}} = \hat{\boldsymbol{V}}_{\mathrm{target}}^{1/2}\hat{\boldsymbol{H}}_{0}^{1/2}
    .
\end{equation*}
Then, calibrated draws are $\boldsymbol{\beta}_{\mathrm{calib}}^{(d)} = \tilde{\boldsymbol{\beta}}_{n} + \hat{\boldsymbol{\Omega}}\bigl(\boldsymbol{\beta}^{(d)}-\hat{\boldsymbol{\beta}}_{\mathrm{GB}}\bigr)$.
By Proposition~\ref{prop:estimated-calib}, $s_{n}^{1/2}(\boldsymbol{\beta}_{\mathrm{calib}}^{(d)} - \tilde{\boldsymbol{\beta}}_{n}) \to_d N(\boldsymbol{0}, \boldsymbol{V}_{\mathrm{target}}^{\star})$. 
Details of the augmentation and Markov chain Monte Carlo are deferred to Section~\ref{supp:detail-exp}.

\subsection{Experiment}
We illustrate the finite-sample behavior of the plug-in location–scale calibration in the random-intercept linear mixed model with Huber loss and Gaussian prior with $n$-dependent scale described in Section~\ref{subsec:ri-lmm}.
We fix $G=100$ and $n_i = 5$, so the total sample size is $n = 500$, and set $p=1$.
We generate covariates from $N(0, 1)$ and contaminate Gaussian errors to induce model misspecification.
Since no closed form is available for $\beta^{\lambda}$ in this example, we approximate it numerically by computing the penalized estimating equation estimator on large simulated datasets with $G=5,000$ and averaging over $1,000$ replications; see, e.g., \citet{oh2013simulated}.
For each learning rate $\eta$ on a logarithmic grid over $[0.01, 100]$ we compare three procedures:
(i) a frequentist Huber $M$-estimator with ridge penalty and sandwich-based Wald intervals;
(ii) the generalized Bayes posterior based on $M_n$ and the corresponding $n$-dependent Gaussian prior; and
(iii) the location–scale calibration applied to the same posterior draws, using the MAP estimate as an admissible center.
For each $\eta$ we use $200$ Monte Carlo replications and record, for all three methods, the empirical coverage, the mean interval width, the mean bias, and the standard deviation of that bias.
Further details of the data-generating mechanism, the numerical approximation of $\beta^{\lambda}$, and the Markov chain Monte Carlo settings are provided in Section~\ref{supp:detail-exp}.
The Python code for reproducing the experiments is available at \href{}{\texttt{https://github.com/shutech2001/ls-calib-gp}}.

Figure~\ref{fig:random-intercept-exp} shows that the proposed location–scale calibration rendered inference for $\beta^{\lambda}$ essentially invariant to the learning rate $\eta$.
For the calibrated posterior, coverage probabilities were very close to the frequentist benchmark, mean interval widths are stable, and both the bias and the standard deviation of the bias remain essentially unchanged as $\eta$ varies.
Across the $[0.01, 100.0]$ range of $\eta$, the calibrated posterior yielded curves that were very close to the frequentist benchmark.
By contrast, the uncalibrated posterior exhibits pronounced sensitivity to $\eta$.
In this example, the uncalibrated coverage varied substantially with the learning rate.
As $\eta$ increases, the point estimator tracks the loss-based target more closely and its bias decreases, but the associated credible intervals become progressively narrower and eventually exhibit marked undercoverage.
This pattern highlights a trade-off between bias and interval width and suggests that procedures which tune the learning rate by optimizing coverage at a single nominal level may be sensitive to local features of this trade-off.

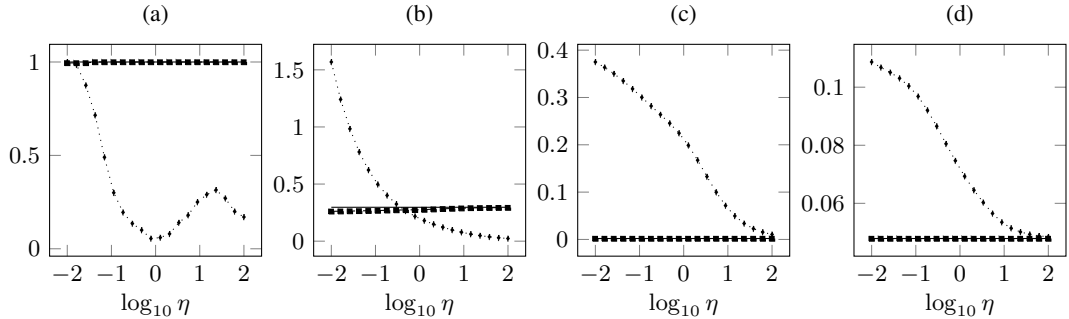
\begin{figure}
\centering
\begin{subfigure}{0.245\textwidth}
\begin{tikzpicture}
\begin{axis}[
    title={(a)},
    xlabel={$\log_{10}\eta$},
    xmode=log,
    log basis x=10,
    xtick={0.01,0.1,1,10,100},
    xticklabels={$-2$, $-1$, $0$, $1$, $2$},
    yticklabel style={/pgf/number format/fixed},
    width=4.4cm,
    height=4.4cm,
    mark size=1.0pt,
    font=\footnotesize,
    xlabel style={yshift=1ex},
    title style={yshift=-0.5ex}
]
\addplot[black, solid, mark=none]
coordinates
{(0.01,1.0) (0.016237767391887217,1.0) (0.02636650898730358,1.0) (0.04281332398719394,1.0) (0.06951927961775606,1.0) (0.11288378916846889,1.0) (0.18329807108324356,1.0) (0.29763514416313175,1.0) (0.4832930238571752,1.0) (0.7847599703514611,1.0) (1.2742749857031335,1.0) (2.06913808111479,1.0) (3.359818286283781,1.0) (5.455594781168514,1.0) (8.858667904100823,1.0) (14.38449888287663,1.0) (23.357214690901213,1.0) (37.92690190732246,1.0) (61.584821106602604,1.0) (100.0,1.0)}
;
\addplot[black, dashed, mark=square*, mark size=1.0pt]
coordinates
{(0.01,0.995) (0.016237767391887217,0.995) (0.02636650898730358,0.995) (0.04281332398719394,1.0) (0.06951927961775606,1.0) (0.11288378916846889,1.0) (0.18329807108324356,1.0) (0.29763514416313175,1.0) (0.4832930238571752,1.0) (0.7847599703514611,1.0) (1.2742749857031335,1.0) (2.06913808111479,1.0) (3.359818286283781,1.0) (5.455594781168514,1.0) (8.858667904100823,1.0) (14.38449888287663,1.0) (23.357214690901213,1.0) (37.92690190732246,1.0) (61.584821106602604,1.0) (100.0,1.0)}
;
\addplot[black, dotted, mark=diamond*, mark size=1.0pt]
coordinates
{(0.01,1.0) (0.016237767391887217,0.985) (0.02636650898730358,0.875) (0.04281332398719394,0.715) (0.06951927961775606,0.49) (0.11288378916846889,0.3) (0.18329807108324356,0.195) (0.29763514416313175,0.135) (0.4832930238571752,0.1) (0.7847599703514611,0.055) (1.2742749857031335,0.06) (2.06913808111479,0.08) (3.359818286283781,0.14) (5.455594781168514,0.18) (8.858667904100823,0.25) (14.38449888287663,0.29) (23.357214690901213,0.315) (37.92690190732246,0.27) (61.584821106602604,0.2) (100.0,0.17)}
;
\end{axis}
\end{tikzpicture}
\end{subfigure}
\begin{subfigure}{0.245\textwidth}
\begin{tikzpicture}
\begin{axis}[
    title={(b)},
    xlabel={$\log_{10}\eta$},
    xmode=log,
    log basis x=10,
    xtick={0.01,0.1,1,10,100},
    xticklabels={$-2$, $-1$, $0$, $1$, $2$},
    yticklabel style={/pgf/number format/fixed},
    width=4.4cm,
    height=4.4cm,
    mark size=1.0pt,
    font=\footnotesize,
    xlabel style={yshift=1ex},
    title style={yshift=-0.5ex}
]
\addplot[black, solid, mark=none]
coordinates
{(0.01,0.2967802294728126) (0.016237767391887217,0.2967802294728126) (0.02636650898730358,0.2967802294728126) (0.04281332398719394,0.2967802294728126) (0.06951927961775606,0.2967802294728126) (0.11288378916846889,0.2967802294728126) (0.18329807108324356,0.2967802294728126) (0.29763514416313175,0.2967802294728126) (0.4832930238571752,0.2967802294728126) (0.7847599703514611,0.2967802294728126) (1.2742749857031335,0.2967802294728126) (2.06913808111479,0.2967802294728126) (3.359818286283781,0.2967802294728126) (5.455594781168514,0.2967802294728126) (8.858667904100823,0.2967802294728126) (14.38449888287663,0.2967802294728126) (23.357214690901213,0.2967802294728126) (37.92690190732246,0.2967802294728126) (61.584821106602604,0.2967802294728126) (100.0,0.2967802294728126)}
;
\addplot[black, dashed, mark=square*, mark size=1.0pt]
coordinates
{(0.01,0.26139838298750667) (0.016237767391887217,0.26224787063513694) (0.02636650898730358,0.2635652081934021) (0.04281332398719394,0.26500475184922245) (0.06951927961775606,0.266685062312314) (0.11288378916846889,0.2681857737044166) (0.18329807108324356,0.2703772668815438) (0.29763514416313175,0.27246141686657827) (0.4832930238571752,0.273565535962715) (0.7847599703514611,0.27474961193575276) (1.2742749857031335,0.2769347358493718) (2.06913808111479,0.2799152744802219) (3.359818286283781,0.2834969495997954) (5.455594781168514,0.28587313910218437) (8.858667904100823,0.2886457012007082) (14.38449888287663,0.29105242962294525) (23.357214690901213,0.2921588344560305) (37.92690190732246,0.29287767847938795) (61.584821106602604,0.29339910345337067) (100.0,0.29396729839261077)}
;
\addplot[black, dotted, mark=diamond*, mark size=1.0pt]
coordinates
{(0.01,1.5689286783177219) (0.016237767391887217,1.2410698912978142) (0.02636650898730358,0.9836626980409116) (0.04281332398719394,0.7810603525431921) (0.06951927961775606,0.6227404766957779) (0.11288378916846889,0.4973916470107321) (0.18329807108324356,0.400513037337838) (0.29763514416313175,0.3247446467109262) (0.4832930238571752,0.2650698921449865) (0.7847599703514611,0.21755454604087088) (1.2742749857031335,0.1796592726024571) (2.06913808111479,0.1483906073908394) (3.359818286283781,0.12161100474324868) (5.455594781168514,0.09787792963733455) (8.858667904100823,0.07815728201831343) (14.38449888287663,0.06184191106917581) (23.357214690901213,0.0484781443992158) (37.92690190732246,0.037909674235730224) (61.584821106602604,0.02965238525592276) (100.0,0.023244571367060284)}
;
\end{axis}
\end{tikzpicture}
\end{subfigure}
\begin{subfigure}{0.245\textwidth}
\begin{tikzpicture}
\begin{axis}[
    title={(c)},
    xlabel={$\log_{10}\eta$},
    xmode=log,
    log basis x=10,
    xtick={0.01,0.1,1,10,100},
    xticklabels={$-2$, $-1$, $0$, $1$, $2$},
    yticklabel style={/pgf/number format/fixed},
    width=4.4cm,
    height=4.4cm,
    mark size=1.0pt,
    font=\footnotesize,
    xlabel style={yshift=1ex},
    title style={yshift=-0.5ex}
]
\addplot[black, solid, mark=none]
coordinates
{(0.01,0.0019537690758111566) (0.016237767391887217,0.0019537690758111566) (0.02636650898730358,0.0019537690758111566) (0.04281332398719394,0.0019537690758111566) (0.06951927961775606,0.0019537690758111566) (0.11288378916846889,0.0019537690758111566) (0.18329807108324356,0.0019537690758111566) (0.29763514416313175,0.0019537690758111566) (0.4832930238571752,0.0019537690758111566) (0.7847599703514611,0.0019537690758111566) (1.2742749857031335,0.0019537690758111566) (2.06913808111479,0.0019537690758111566) (3.359818286283781,0.0019537690758111566) (5.455594781168514,0.0019537690758111566) (8.858667904100823,0.0019537690758111566) (14.38449888287663,0.0019537690758111566) (23.357214690901213,0.0019537690758111566) (37.92690190732246,0.0019537690758111566) (61.584821106602604,0.0019537690758111566) (100.0,0.0019537690758111566)}
;
\addplot[black, dashed, mark=square*, mark size=1.0pt]
coordinates
{(0.01,0.0019537690758111566) (0.016237767391887217,0.0019537690758111496) (0.02636650898730358,0.001953769075811147) (0.04281332398719394,0.0019537690758111496) (0.06951927961775606,0.001953769075811151) (0.11288378916846889,0.0019537690758111375) (0.18329807108324356,0.001953769075811151) (0.29763514416313175,0.001953769075811161) (0.4832930238571752,0.001953769075811152) (0.7847599703514611,0.0019537690758111657) (1.2742749857031335,0.001953769075811172) (2.06913808111479,0.0019537690758111336) (3.359818286283781,0.0019537690758111488) (5.455594781168514,0.0019537690758111674) (8.858667904100823,0.0019537690758111696) (14.38449888287663,0.001953769075811124) (23.357214690901213,0.001953769075811167) (37.92690190732246,0.001953769075811147) (61.584821106602604,0.0019537690758111795) (100.0,0.0019537690758110573)}
;
\addplot[black, dotted, mark=diamond*, mark size=1.0pt]
coordinates
{(0.01,0.37501778110500195) (0.016237767391887217,0.36353737373563255) (0.02636650898730358,0.35022226313419574) (0.04281332398719394,0.334921548164694) (0.06951927961775606,0.31815887044983243) (0.11288378916846889,0.30019098335212036) (0.18329807108324356,0.2819305041244187) (0.29763514416313175,0.26380478921389017) (0.4832930238571752,0.24542202445866565) (0.7847599703514611,0.2246030383040907) (1.2742749857031335,0.19886066233230978) (2.06913808111479,0.1674061879398078) (3.359818286283781,0.13283457989629238) (5.455594781168514,0.09978263200497656) (8.858667904100823,0.07156464190530305) (14.38449888287663,0.04963098208590177) (23.357214690901213,0.03372860887629669) (37.92690190732246,0.022690981209501887) (61.584821106602604,0.0153317763075555) (100.0,0.010501600654863172)}
;
\end{axis}
\end{tikzpicture}
\end{subfigure}
\begin{subfigure}{0.245\textwidth}
\begin{tikzpicture}
\begin{axis}[
    title={(d)},
    xlabel={$\log_{10}\eta$},
    xmode=log,
    log basis x=10,
    xtick={0.01,0.1,1,10,100},
    xticklabels={$-2$, $-1$, $0$, $1$, $2$},
    yticklabel style={/pgf/number format/fixed},
    width=4.4cm,
    height=4.4cm,
    mark size=1.0pt,
    font=\footnotesize,
    xlabel style={yshift=1ex},
    title style={yshift=-0.5ex}
]
\addplot[black, solid, mark=none]
coordinates
{(0.01,0.047880125909031734) (0.016237767391887217,0.047880125909031734) (0.02636650898730358,0.047880125909031734) (0.04281332398719394,0.047880125909031734) (0.06951927961775606,0.047880125909031734) (0.11288378916846889,0.047880125909031734) (0.18329807108324356,0.047880125909031734) (0.29763514416313175,0.047880125909031734) (0.4832930238571752,0.047880125909031734) (0.7847599703514611,0.047880125909031734) (1.2742749857031335,0.047880125909031734) (2.06913808111479,0.047880125909031734) (3.359818286283781,0.047880125909031734) (5.455594781168514,0.047880125909031734) (8.858667904100823,0.047880125909031734) (14.38449888287663,0.047880125909031734) (23.357214690901213,0.047880125909031734) (37.92690190732246,0.047880125909031734) (61.584821106602604,0.047880125909031734) (100.0,0.047880125909031734)}
;
\addplot[black, dashed, mark=square*, mark size=1.0pt]
coordinates
{(0.01,0.04788012590903172) (0.016237767391887217,0.04788012590903172) (0.02636650898730358,0.047880125909031734) (0.04281332398719394,0.04788012590903172) (0.06951927961775606,0.04788012590903173) (0.11288378916846889,0.04788012590903172) (0.18329807108324356,0.047880125909031734) (0.29763514416313175,0.047880125909031734) (0.4832930238571752,0.04788012590903172) (0.7847599703514611,0.047880125909031734) (1.2742749857031335,0.04788012590903176) (2.06913808111479,0.04788012590903172) (3.359818286283781,0.047880125909031734) (5.455594781168514,0.04788012590903174) (8.858667904100823,0.047880125909031755) (14.38449888287663,0.04788012590903169) (23.357214690901213,0.04788012590903167) (37.92690190732246,0.04788012590903168) (61.584821106602604,0.047880125909031734) (100.0,0.04788012590903179)}
;
\addplot[black, dotted, mark=diamond*, mark size=1.0pt]
coordinates
{(0.01,0.10867743057578907) (0.016237767391887217,0.1068046616470668) (0.02636650898730358,0.10507664078994677) (0.04281332398719394,0.10306387953679826) (0.06951927961775606,0.10038932278507046) (0.11288378916846889,0.09682004139816487) (0.18329807108324356,0.09204765443363731) (0.29763514416313175,0.08655020971849958) (0.4832930238571752,0.08050134980708172) (0.7847599703514611,0.07468498323336357) (1.2742749857031335,0.06932657886144805) (2.06913808111479,0.06452565234259337) (3.359818286283781,0.06030974813263562) (5.455594781168514,0.056512371476733075) (8.858667904100823,0.0534843709460324) (14.38449888287663,0.05143761334018792) (23.357214690901213,0.050137862324933474) (37.92690190732246,0.04930423830266028) (61.584821106602604,0.04879364162031764) (100.0,0.04846672496258612)}
;
\end{axis}
\end{tikzpicture}
\end{subfigure}
\caption{
Evaluation metrics for $\beta^{\lambda}$ as the learning rate varies, comparing three procedures.
Panels:
(a) coverage probability at the nominal 95\% level;
(b) mean interval width;
(c) bias of the point estimator;
(d) standard deviation of the bias.
Solid line: frequentist benchmark (confidence intervals).
Dashed line with square markers: location–scale calibrated posterior (credible intervals).
Dotted line with diamond markers: uncalibrated posterior (credible intervals).
}
\label{fig:random-intercept-exp}
\end{figure}

\section*{Acknowledgements}
Shu Tamano was supported by JSPS KAKENHI Grant Numbers 25K24203.

\bibliographystyle{apalike}
\bibliography{bibliography}

\newpage
\appendix
\section{Examples of admissible centers}
\label{supp:ex-admissible}

\subsection{MAP centering}

We first show that a maximum a posteriori (MAP) estimator is an admissible center.
Write the generalized log-posterior, up to an additive constant, as
\begin{equation*}
    \ell_{n}(\boldsymbol{\theta})
    :=
    -\eta M_{n}(\boldsymbol{\theta})
    - \eta \lambda_{n}s_{n}\rho(\boldsymbol{\theta})
    +
    r_{n}(\boldsymbol{\theta})
    =
    -\eta s_{n}\bigl\{\bar M_{n}(\boldsymbol{\theta}) + \lambda_{n}\rho(\boldsymbol{\theta})\bigr\}
    +
    r_{n}(\boldsymbol{\theta})
    .
\end{equation*}
Let $\hat{\boldsymbol{\theta}}_{n}^{\mathrm{MAP}}$ be a measurable maximizer of $\ell_{n}$ over the neighborhood $\mathcal{U}_{n}$; by standard $M$-estimation arguments, the unique maximizer of the population criterion $R(\boldsymbol{\theta}) + \lambda\rho(\boldsymbol{\theta})$ at $\boldsymbol{\theta}^{\lambda}$ and Assumption~\ref{ass:GL} imply that such a maximizer exists and that $\hat{\boldsymbol{\theta}}_{n}^{\mathrm{MAP}}\to_p \boldsymbol{\theta}^{\lambda}$.
On the event $\{\hat{\boldsymbol{\theta}}_{n}^{\mathrm{MAP}}\in\mathcal{U}_{n}\}$, the first-order condition yields
\begin{equation*}
    \boldsymbol{0}
    =
    \nabla\ell_{n}\bigl(\hat{\boldsymbol{\theta}}_{n}^{\mathrm{MAP}}\bigr)
    =
    -\eta s_{n}\bigl\{\boldsymbol{U}_{n}\bigl(\hat{\boldsymbol{\theta}}_{n}^{\mathrm{MAP}}\bigr)
    + \lambda_{n}\nabla\rho\bigl(\hat{\boldsymbol{\theta}}_{n}^{\mathrm{MAP}}\bigr)\bigr\}
    +
    \nabla r_{n}\bigl(\hat{\boldsymbol{\theta}}_{n}^{\mathrm{MAP}}\bigr)
    ,
\end{equation*}
so that
\begin{equation*}
    \boldsymbol{U}_{n}\bigl(\hat{\boldsymbol{\theta}}_{n}^{\mathrm{MAP}}\bigr)
    +
    \lambda_{n}\nabla\rho\bigl(\hat{\boldsymbol{\theta}}_{n}^{\mathrm{MAP}}\bigr)
    =
    \eta^{-1}s_{n}^{-1}\nabla r_{n}\bigl(\hat{\boldsymbol{\theta}}_{n}^{\mathrm{MAP}}\bigr)
    .
\end{equation*}
By Assumption~\ref{ass:R},
\begin{equation*}
    \bigl\|
    \boldsymbol{U}_{n}\bigl(\hat{\boldsymbol{\theta}}_{n}^{\mathrm{MAP}}\bigr)
    +
    \lambda_{n}\nabla\rho\bigl(\hat{\boldsymbol{\theta}}_{n}^{\mathrm{MAP}}\bigr)
    \bigr\|
    \le
    \eta^{-1}s_{n}^{-1}L_{n}
    =
    o_{p}\bigl(s_{n}^{-1/2}\bigr)
    .
\end{equation*}
Assumption~\ref{ass:baseline} (ii) gives
\begin{equation*}
    \boldsymbol{U}_{n}\bigl(\check{\boldsymbol{\theta}}_{n}\bigr)
    +
    \lambda_{n}\nabla\rho\bigl(\check{\boldsymbol{\theta}}_{n}\bigr)
    =
    o_{p}\bigl(s_{n}^{-1/2}\bigr)
    .
\end{equation*}
Subtracting these two relations and applying the mean-value theorem to $\boldsymbol{U}_{n}$ and $\nabla\rho$ along the segment between $\hat{\boldsymbol{\theta}}_{n}^{\mathrm{MAP}}$ and $\check{\boldsymbol{\theta}}_{n}$ yields
\begin{equation}
\label{supp-eq:map-linear}
    \bigl\{\boldsymbol{J}_{n}\bigl(\boldsymbol{\theta}'_{n}\bigr)
    + \lambda_{n}\boldsymbol{H}_{\rho}\bigl(\boldsymbol{\theta}''_{n}\bigr)\bigr\}
    \bigl(\hat{\boldsymbol{\theta}}_{n}^{\mathrm{MAP}} - \check{\boldsymbol{\theta}}_{n}\bigr)
    =
    o_{p}\bigl(s_{n}^{-1/2}\bigr)
    ,
\end{equation}
for some random intermediate points $\boldsymbol{\theta}'_{n},\boldsymbol{\theta}''_{n}$ on the line segment joining $\hat{\boldsymbol{\theta}}_{n}^{\mathrm{MAP}}$ and $\check{\boldsymbol{\theta}}_{n}$.
By Assumption~\ref{ass:GL} (iii), Assumption~\ref{ass:P}, and the consistency of $\hat{\boldsymbol{\theta}}_{n}^{\mathrm{MAP}}$ and $\check{\boldsymbol{\theta}}_{n}$, we have
\begin{equation*}
    \boldsymbol{J}_{n}\bigl(\boldsymbol{\theta}'_{n}\bigr)
    +
    \lambda_{n}\boldsymbol{H}_{\rho}\bigl(\boldsymbol{\theta}''_{n}\bigr)
    \to_p
    \boldsymbol{J}_{\lambda}^{\star}
    ,
\end{equation*}
and $\boldsymbol{J}_{\lambda}^{\star}$ is nonsingular.
Hence the smallest eigenvalue of $\boldsymbol{J}_{n}(\boldsymbol{\theta}'_{n}) + \lambda_{n}\boldsymbol{H}_{\rho}(\boldsymbol{\theta}''_{n})$ is bounded away from zero in probability and its inverse is $O_{p}(1)$.
Multiplying \eqref{supp-eq:map-linear} by this inverse gives
\begin{equation*}
    \bigl\|\hat{\boldsymbol{\theta}}_{n}^{\mathrm{MAP}} - \check{\boldsymbol{\theta}}_{n}\bigr\|
    =
    o_{p}\bigl(s_{n}^{-1/2}\bigr)
    .
\end{equation*}

\subsection{Generalized posterior mean centering}

We next consider the generalized Bayes posterior mean
$\boldsymbol{\theta}_{\mathrm{GB}} := E_{\Pi_{n}^{\eta}}(\boldsymbol{\theta})$.
To exploit Proposition~\ref{prop:bvm-baseline} we center at the baseline estimator $\check{\boldsymbol{\theta}}_{n}$, which is itself an admissible center by Definition~\ref{def:admissible}.
Let
\begin{equation*}
    \boldsymbol{Z}_{n}
    :=
    s_{n}^{1/2}\bigl(\boldsymbol{\theta} - \check{\boldsymbol{\theta}}_{n}\bigr)
\end{equation*}
under $\Pi_{n}^{\eta}$, and let $q_{n}^{\check{\boldsymbol{\theta}}_{n}}$ denote the density of $\boldsymbol{Z}_{n}$.
By Proposition~\ref{prop:bvm-baseline} applied with $\tilde{\boldsymbol{\theta}}_{n}=\check{\boldsymbol{\theta}}_{n}$,
\begin{equation*}
    \int_{\mathbb{R}^{p}}
    \bigl|
    q_{n}^{\check{\boldsymbol{\theta}}_{n}}(\boldsymbol{x})
    -
    N\bigl(\boldsymbol{x}\mid\boldsymbol{0},\boldsymbol{H}_{0}^{-1}\bigr)
    \bigr|
    \dd\boldsymbol{x}
    \longrightarrow 0,
    \quad
    \boldsymbol{H}_{0} := \eta\boldsymbol{J}_{\lambda}^{\star}
    .
\end{equation*}
Define
\begin{equation*}
    \boldsymbol{\delta}_{n}
    :=
    s_{n}^{1/2}\bigl(\boldsymbol{\theta}_{\mathrm{GB}} - \check{\boldsymbol{\theta}}_{n}\bigr)
    =
    \int_{\mathbb{R}^{p}}\boldsymbol{x}
    q_{n}^{\check{\boldsymbol{\theta}}_{n}}(\boldsymbol{x})\dd\boldsymbol{x}
    = \E[\Pi_{n}^{\eta}]{\boldsymbol{Z}_{n}}
    .
\end{equation*}
The total-variation convergence together with the fact that the second moments of $\boldsymbol{Z}_{n}$ converge to those of $N(\boldsymbol{0},\boldsymbol{H}_{0}^{-1})$, as established by Proposition~\ref{prop:bvm-baseline}, implies $\boldsymbol{\delta}_{n}\to_p \boldsymbol{0}$.
Thus
\begin{equation}
    \label{supp-eq:mean-admissible}
    \bigl\|\boldsymbol{\theta}_{\mathrm{GB}} - \check{\boldsymbol{\theta}}_{n}\bigr\|
    =
    \|\boldsymbol{\delta}_{n}\|s_n^{-1/2}
    =
    o_{p}\bigl(s_{n}^{-1/2}\bigr)
    .
\end{equation}

\subsection{One-step Newton centering}

Finally, we consider a one-step Newton update from the posterior mean based on the penalized estimating equation.
Define the empirical penalized score
\begin{equation*}
    \boldsymbol{F}_{n}(\boldsymbol{\theta})
    :=
    \boldsymbol{U}_{n}(\boldsymbol{\theta})
    +
    \lambda_{n}\nabla\rho(\boldsymbol{\theta})
    ,
    \qquad
    \boldsymbol{A}_{n}(\boldsymbol{\theta})
    :=
    \boldsymbol{J}_{n}(\boldsymbol{\theta})
    +
    \lambda_{n}\boldsymbol{H}_{\rho}(\boldsymbol{\theta})
    .
\end{equation*}
The one-step Newton estimator from $\boldsymbol{\theta}_{\mathrm{GB}}$ is
\begin{equation}
\label{supp-eq:onestep-def}
    \tilde{\boldsymbol{\theta}}_{n}^{(1)}
    :=
    \boldsymbol{\theta}_{\mathrm{GB}}
    -
    \boldsymbol{A}_{n}(\boldsymbol{\theta}_{\mathrm{GB}})^{-1}
    \boldsymbol{F}_{n}(\boldsymbol{\theta}_{\mathrm{GB}})
    .
\end{equation}
By \eqref{supp-eq:mean-admissible} and Assumption~\ref{ass:baseline},
$\boldsymbol{\theta}_{\mathrm{GB}}\to_p \boldsymbol{\theta}^{\lambda}$ and
$\check{\boldsymbol{\theta}}_{n}\to_p \boldsymbol{\theta}^{\lambda}$, so both sequences eventually lie in $\mathcal{U}_{n}$ with probability tending to one.
On this event, a Taylor expansion of $\boldsymbol{F}_{n}$ about $\boldsymbol{\theta}^{\lambda}$ yields
\begin{equation}
\label{supp-eq:F-expansion}
    \boldsymbol{F}_{n}(\boldsymbol{\theta})
    =
    \boldsymbol{F}_{n}(\boldsymbol{\theta}^{\lambda})
    +
    \boldsymbol{A}_{n}(\boldsymbol{\theta}^{\lambda})
    (\boldsymbol{\theta} - \boldsymbol{\theta}^{\lambda})
    +
    \boldsymbol{R}_{n}(\boldsymbol{\theta})
    ,
\end{equation}
where, by Assumption~\ref{ass:GL} (ii) and the continuity of $\boldsymbol{H}_{\rho}$, the remainder satisfies
\begin{equation*}
    \sup_{\boldsymbol{\theta}\in\mathcal{U}_{n}}
    \|\boldsymbol{R}_{n}(\boldsymbol{\theta})\|
    =
    o_{p}\bigl(s_{n}^{-1/2}\bigr)
    .
\end{equation*}
By Assumption~\ref{ass:GL} (iv),
\begin{equation*}
    s_{n}^{1/2}\bigl\{\boldsymbol{U}_{n}\bigl(\boldsymbol{\theta}^{\lambda}\bigr) - \Psi\bigl(\boldsymbol{\theta}^{\lambda}\bigr)\bigr\}
    \longrightarrow_d
    N(\boldsymbol{0}, \boldsymbol{K}^{\star})
    .
\end{equation*}
Using the penalized population equation $\Psi(\boldsymbol{\theta}^{\lambda}) + \lambda \nabla\rho(\boldsymbol{\theta}^{\lambda}) = \boldsymbol{0}$, we can write
\begin{equation*}
    \boldsymbol{F}_{n}\bigl(\boldsymbol{\theta}^{\lambda}\bigr)
    =
    \bigl\{\boldsymbol{U}_{n}\bigl(\boldsymbol{\theta}^{\lambda}\bigr) - \Psi\bigl(\boldsymbol{\theta}^{\lambda}\bigr)\bigr\}
    +
    (\lambda_{n} - \lambda)\nabla\rho\bigl(\boldsymbol{\theta}^{\lambda}\bigr)
    .
\end{equation*}
Hence, if in addition $s_{n}^{1/2}(\lambda_{n} - \lambda)\to 0$, for example when $\lambda_{n}\equiv\lambda$, we obtain
\begin{equation*}
    s_{n}^{1/2}\boldsymbol{F}_{n}\bigl(\boldsymbol{\theta}^{\lambda}\bigr)
    \longrightarrow_d
    N(\boldsymbol{0}, \boldsymbol{K}^{\star})
    .
\end{equation*}
Assumptions~\ref{ass:GL} (iii) and~\ref{ass:P} imply $\boldsymbol{A}_{n}(\boldsymbol{\theta}^{\lambda})\to_p\boldsymbol{J}_{\lambda}^{\star}$ with nonsingular limit.
Applying \eqref{supp-eq:F-expansion} with $\boldsymbol{\theta}=\check{\boldsymbol{\theta}}_{n}$ and using Assumption~\ref{ass:baseline} (ii), we obtain the asymptotic linear representation
\begin{equation}
\label{supp-eq:baseline-linear}
    s_{n}^{1/2}\bigl(\check{\boldsymbol{\theta}}_{n} - \boldsymbol{\theta}^{\lambda}\bigr)
    =
    -\boldsymbol{J}_{\lambda}^{\star-1}
    s_{n}^{1/2}\boldsymbol{F}_{n}(\boldsymbol{\theta}^{\lambda})
    +
    o_{p}(1)
    .
\end{equation}
Similarly, applying \eqref{supp-eq:F-expansion} with $\boldsymbol{\theta}=\boldsymbol{\theta}_{\mathrm{GB}}$ and \eqref{supp-eq:mean-admissible},
\begin{equation}
\label{supp-eq:F-at-GB}
    \boldsymbol{F}_{n}(\boldsymbol{\theta}_{\mathrm{GB}})
    =
    \boldsymbol{F}_{n}(\boldsymbol{\theta}^{\lambda})
    +
    \boldsymbol{A}_{n}(\boldsymbol{\theta}^{\lambda})
    \bigl(\boldsymbol{\theta}_{\mathrm{GB}} - \boldsymbol{\theta}^{\lambda}\bigr)
    +
    o_{p}\bigl(s_{n}^{-1/2}\bigr)
    .
\end{equation}
Moreover, $\boldsymbol{A}_{n}(\boldsymbol{\theta}_{\mathrm{GB}}) = \boldsymbol{A}_{n}(\boldsymbol{\theta}^{\lambda}) + o_{p}(1)$, so its inverse is $\boldsymbol{A}_{n}(\boldsymbol{\theta}_{\mathrm{GB}})^{-1}
= \boldsymbol{J}_{\lambda}^{\star-1} + o_{p}(1)$.
Substituting \eqref{supp-eq:F-at-GB} into \eqref{supp-eq:onestep-def} and simplifying, we find
\begin{equation*}
    \begin{split}
        \tilde{\boldsymbol{\theta}}_{n}^{(1)} - \boldsymbol{\theta}^{\lambda}
        &=
        \boldsymbol{\theta}_{\mathrm{GB}} - \boldsymbol{\theta}^{\lambda}
        -
        \boldsymbol{A}_{n}(\boldsymbol{\theta}_{\mathrm{GB}})^{-1}
        \boldsymbol{F}_{n}(\boldsymbol{\theta}_{\mathrm{GB}})
        \\
        &=
        \boldsymbol{\theta}_{\mathrm{GB}} - \boldsymbol{\theta}^{\lambda}
        -
        \boldsymbol{J}_{\lambda}^{\star-1}
        \bigl\{
            \boldsymbol{F}_{n}(\boldsymbol{\theta}^{\lambda})
            +
            \boldsymbol{A}_{n}(\boldsymbol{\theta}^{\lambda})
            (\boldsymbol{\theta}_{\mathrm{GB}} - \boldsymbol{\theta}^{\lambda})
        \bigr\}
        +
        o_{p}\bigl(s_{n}^{-1/2}\bigr)
        \\
        &=
        -\boldsymbol{J}_{\lambda}^{\star-1}
        \boldsymbol{F}_{n}(\boldsymbol{\theta}^{\lambda})
        +
        o_{p}\bigl(s_{n}^{-1/2}\bigr)
        .
    \end{split}
\end{equation*}
Combining this with \eqref{supp-eq:baseline-linear} gives
\begin{equation*}
    s_{n}^{1/2}\bigl(\tilde{\boldsymbol{\theta}}_{n}^{(1)} - \check{\boldsymbol{\theta}}_{n}\bigr)
    =
    s_{n}^{1/2}\bigl(\tilde{\boldsymbol{\theta}}_{n}^{(1)} - \boldsymbol{\theta}^{\lambda}\bigr)
    -
    s_{n}^{1/2}\bigl(\check{\boldsymbol{\theta}}_{n} - \boldsymbol{\theta}^{\lambda}\bigr)
    \longrightarrow_p \boldsymbol{0}
    ,
\end{equation*}
so that
\begin{equation*}
    \bigl\|\tilde{\boldsymbol{\theta}}_{n}^{(1)} - \check{\boldsymbol{\theta}}_{n}\bigr\|
    =
    o_{p}\bigl(s_{n}^{-1/2}\bigr)
    .
\end{equation*}
\section{Proofs of the main results}
\label{supp:pf-main}

\subsection{Proof of Proposition~\ref{prop:bvm-baseline}}
\begin{proof}[Proof of Proposition~\ref{prop:bvm-baseline}]
    Throughout the proof all stochastic limits are taken under the true law $P^{\star}$.
    Let $\tilde{\boldsymbol{\theta}}_{n}$ be any admissible center.
    We first establish the total-variation limit for a baseline center $\check{\boldsymbol{\theta}}_{n}$, and then apply a translation argument to transfer the result to $\tilde{\boldsymbol{\theta}}_{n}$.

    Introduce local coordinates
    \begin{equation*}
        \boldsymbol{u} = s_{n}^{1/2}\bigl(\boldsymbol{\theta} - \check{\boldsymbol{\theta}}_{n}\bigr)
        ,
        \quad
        \boldsymbol{\theta}
        =
        \check{\boldsymbol{\theta}}_{n}
        +
        \boldsymbol{u}s_{n}^{-1/2}
        ,
    \end{equation*}
    and consider the log-density of the generalized posterior $\log\pi_n^\eta$ as
    \begin{equation*}
        \log\pi_{n}^{\eta}\bigl(\boldsymbol{\theta}\bigr)
        =
        -\eta M_{n}\bigl(\boldsymbol{\theta}\bigr)
        -\eta\lambda_{n}s_{n}\rho\bigl(\boldsymbol{\theta}\bigr)
        +
        r_{n}\bigl(\boldsymbol{\theta}\bigr)
    \end{equation*}
    as a function of $\boldsymbol{u}$.

    Assumption~\ref{ass:GL} (ii) gives a third-order Taylor expansion of $M_{n}$ at $\check{\boldsymbol{\theta}}_{n}$,
    \begin{equation}
    \label{supp-eq:m-expansion}
        M_{n}\!\left(\check{\boldsymbol{\theta}}_{n} + \boldsymbol{u}s_{n}^{-1/2}\right)
        =
        M_{n}\bigl(\check{\boldsymbol{\theta}}_{n}\bigr)
        +
        s_{n}^{1/2}\boldsymbol{U}_{n}\bigl(\check{\boldsymbol{\theta}}_{n}\bigr)^{\top}\boldsymbol{u}
        +
        \frac{1}{2}\boldsymbol{u}^{\top}\boldsymbol{J}_{n}\bigl(\check{\boldsymbol{\theta}}_{n}\bigr)\boldsymbol{u}
        +
        R_{n,1}(\boldsymbol{u})
        ,
    \end{equation}
    where, for each fixed $M < \infty$, $\sup_{\|\boldsymbol{u}\| \le M}|R_{n,1}(\boldsymbol{u})| = o_p(1)\|\boldsymbol{u}\|^{2}$.

    By Assumption~\ref{ass:P}, $\rho$ is $C^{2}$ in a neighborhood of $\boldsymbol{\theta}^{\lambda}$.
    Since $\check{\boldsymbol{\theta}}_{n}\to_p\boldsymbol{\theta}^{\lambda}$, a second order Taylor expansion at $\check{\boldsymbol{\theta}}_{n}$ yields
    \begin{equation}
    \label{supp-eq:rho-expansion}
        s_{n}\rho\left(\check{\boldsymbol{\theta}}_{n} + \boldsymbol{u}s_{n}^{-1/2}\right)
        =
        s_{n}\rho\bigl(\check{\boldsymbol{\theta}}_{n}\bigr)
        +
        s_{n}^{1/2}\nabla\rho\bigl(\check{\boldsymbol{\theta}}_{n}\bigr)^{\top}\boldsymbol{u}
        +
        \frac{1}{2}\boldsymbol{u}^{\top}\boldsymbol{H}_{\rho}\bigl(\check{\boldsymbol{\theta}}_{n}\bigr)\boldsymbol{u}
        +
        R_{n,2}(\boldsymbol{u})
        ,
    \end{equation}
    with $\sup_{\|\boldsymbol{u}\|\le M}|R_{n,2}(\boldsymbol{u})| = o_p(1)\|\boldsymbol{u}\|^{2}$.

    For the prior remainder $r_{n}$, Assumption~\ref{ass:R} states that $r_{n}$ is locally Lipschitz on $\mathcal{U}_{n}$ with Lipschitz constant $L_{n} = o_p(s_{n}^{1/2})$.
    Lebourg's mean value theorem for locally Lipschitz functions, as stated by \citet[Thm.~2.3.7]{clarke1990optimization}, applied along the segment joining $\check{\boldsymbol{\theta}}_{n}$ and $\check{\boldsymbol{\theta}}_{n} + \boldsymbol{u}s_{n}^{-1/2}$ gives, for each fixed $M < \infty$ and all $\|\boldsymbol{u}\|\le M$,
    \begin{equation}
    \label{supp-eq:r-expansion}
        r_{n}\left(\check{\boldsymbol{\theta}}_{n} + \boldsymbol{u}s_{n}^{-1/2}\right)
        =
        r_{n}\bigl(\check{\boldsymbol{\theta}}_{n}\bigr)
        +
        s_{n}^{-1/2}\xi_{n}(\boldsymbol{u})^{\top}\boldsymbol{u}
        ,
    \end{equation}
    where $\boldsymbol{\xi}_{n}(\boldsymbol{u})$ is a vector on that segment with $\|\boldsymbol{\xi}_{n}(\boldsymbol{u})\|\le L_{n}$.

    Substituting \eqref{supp-eq:m-expansion}, \eqref{supp-eq:rho-expansion} and \eqref{supp-eq:r-expansion} into $\log\pi_{n}^{\eta}$ at $\check{\boldsymbol{\theta}}_{n}+\boldsymbol{u}s_{n}^{-1/2}$ yields
    \begin{equation*}
        \log\pi_{n}^{\eta}\!\left(\check{\boldsymbol{\theta}}_{n} + \boldsymbol{u}s_{n}^{-1/2}\right)
        =
        C_{n}
        -\frac{1}{2}\boldsymbol{u}^{\top}\boldsymbol{H}_{0,n}\boldsymbol{u} + R_{n}(\boldsymbol{u})
        ,
    \end{equation*}
    where
    \begin{equation*}
        C_{n}
        =
        -\eta M_{n}\bigl(\check{\boldsymbol{\theta}}_{n}\bigr)
        -\eta\lambda_{n}s_{n}\rho(\check{\boldsymbol{\theta}}_{n})
        +
        r_{n}\bigl(\check{\boldsymbol{\theta}}_{n}\bigr)
        ,
        \quad
        \boldsymbol{H}_{0,n}
        =
        \eta\bigl\{\boldsymbol{J}_{n}(\check{\boldsymbol{\theta}}_{n}) + \lambda_{n}\boldsymbol{H}_{\rho}(\check{\boldsymbol{\theta}}_{n})\bigr\}
        ,
    \end{equation*}
    and $R_{n}(\boldsymbol{u})$ collects the quadratic remainders $R_{n,1}$ and $R_{n,2}$.

    The linear term in $\boldsymbol{u}$ arising from $M_{n}$ and $\rho$ are of order $s_{n}^{1/2}$.
    The admissibility of $\check{\boldsymbol{\theta}}_{n}$ and the approximate penalized estimating equation imply
    \begin{equation*}
        \boldsymbol{U}_{n}\bigl(\check{\boldsymbol{\theta}}_{n}\bigr)
        +
        \lambda_{n}\nabla\rho\bigl(\check{\boldsymbol{\theta}}_{n}\bigr)
        =
        o_p\bigl(s_{n}^{-1/2}\bigr)
        ,
    \end{equation*}
    hence
    \begin{equation*}
        -\eta s_{n}^{1/2}\boldsymbol{U}_{n}\bigl(\check{\boldsymbol{\theta}}_{n}\bigr)
        -
        \eta\lambda_{n}s_{n}^{1/2}\nabla\rho\bigl(\check{\boldsymbol{\theta}}_{n}\bigr)
        =
        o_p(1)
    \end{equation*}
    uniformly on bounded $\boldsymbol{u}$.

    The linear term coming from $r_{n}$ is of smaller order:
    for $\|\boldsymbol{u}\|\le M$,
    \begin{equation*}
        \left|s_{n}^{-1/2}\boldsymbol{\xi}_{n}(\boldsymbol{u})^{\top}\boldsymbol{u}\right|
        \le
        L_{n}s_{n}^{-1/2}M
        =
        o_p(1)
        .
    \end{equation*}
    Combining these bounds with those on $R_{n,1}$ and $R_{n,2}$, we obtain, for each fixed $M$,
    \begin{equation}
    \label{supp-eq:rem-order}
        \sup_{\|\boldsymbol{u}\|\le M}\!\left|R_n(\boldsymbol{u})\right|
        = o_p(1)\bigl(1+\|\boldsymbol{u}\|^{2}\bigr)
        .
    \end{equation}
    after absorbing all linear term into $R_{n}(\boldsymbol{u})$.
    Thus the log posterior admits the quadratic-plus-remainder representation
    \begin{equation*}
        \log\pi_{n}^{\eta}\!\left(\check{\boldsymbol{\theta}}_{n} + \boldsymbol{u}s_{n}^{-1/2}\right)
        =
        C_{n}
        -\frac{1}{2}\boldsymbol{u}^{\top}\boldsymbol{H}_{0,n}\boldsymbol{u}
        +
        R_{n}(\boldsymbol{u})
        ,
    \end{equation*}
    with $R_{n}$ satisfying \eqref{supp-eq:rem-order}.

    Assumption~\ref{ass:GL} (iii), the convergence $\check{\boldsymbol{\theta}}_{n}\to_{p}\boldsymbol{\theta}^{\lambda}$, and continuity of $\boldsymbol{H}_{\rho}$ at $\boldsymbol{\theta}^{\lambda}$ imply
    \begin{equation*}
        \boldsymbol{J}_{n}\bigl(\check{\boldsymbol{\theta}}_{n}\bigr)\to_p\boldsymbol{J}^{\star}
        ,
        \quad
        \boldsymbol{H}_{\rho}\bigl(\check{\boldsymbol{\theta}}_{n}\bigr)\to_p\boldsymbol{H}_{\rho}(\boldsymbol{\theta}^{\lambda})
        ,
    \end{equation*}
    and therefore
    \begin{equation*}
        \boldsymbol{H}_{0,n}
        \to_p
        \boldsymbol{H}_{0}
        :=
        \eta\boldsymbol{J}_{\lambda}^{\star}
        ,
    \end{equation*}
    with $\boldsymbol{H}_{0}$ positive definite.

    Let $P_{n}^{0}$ denote the Gaussian measure $N(\boldsymbol{0}, \boldsymbol{H}_{0,n}^{-1})$ on $\mathbb{R}^{p}$, and let $\pi_{n}^{\circ}$ be the law of $\boldsymbol{u}$ induced by $\Pi_{n}^{\eta}$ via the above re-parametrization.
    The density of $\pi_{n}^{\circ}$ relative to $P_{n}^{0}$ is proportional to $\exp\{R_{n}(\boldsymbol{u})\}$.
    Since $\boldsymbol{H}_{0,n}\to_p\boldsymbol{H}_{0}\succ 0$, the eigenvalues of $\boldsymbol{H}_{0, n}$ are bounded away from zero and infinity with probability tending to one, and Gaussian tails under $P_{n}^{0}$ are uniformly controlled.
    Choosing $M < \infty$ large enough, we may assume $P_{n}^{0}(\|\boldsymbol{u}\| > M)$ is arbitrarily small uniformly in $n$.
    On $\{\|\boldsymbol{u}\|\le M\}$, \eqref{supp-eq:rem-order} implies $\sup_{\|\boldsymbol{u}\|\le M}|R_{n}(\boldsymbol{u})| = o_p(1)$, and $\exp\{R_{n}(\boldsymbol{u})\}$ converges to $1$ uniformly on this set.
    A truncation and dominated convergence argument then yields
    \begin{equation*}
        \E[P_{n}^{0}]{\exp\{R_{n}(\boldsymbol{U})\} - 1}
        =
        o_p(1),
        \quad
        \E[P_{n}^{0}]{\bigl|\exp\{R_{n}(\boldsymbol{U})\} - 1\bigr|}
        =
        o_p(1)
        .
    \end{equation*}
    By Scheff\'e's lemma applied to the Radon–Nikodym derivatives with respect to Lebesgue measure,
    \begin{equation}
    \label{supp-eq:post-law-p-tv}
        \int_{\mathbb{R}^{p}}\left|p_{n}^{\circ}(\boldsymbol{x}) - N\bigl(\boldsymbol{x}\mid \boldsymbol{0}, \boldsymbol{H}_{0,n}^{-1}\bigr)\right| \dd \boldsymbol{x}
        \longrightarrow_p
        0
        ,
    \end{equation}
    where $p_{n}^{\circ}$ denotes the density of $\pi_{n}^{\circ}$.
    Since $\boldsymbol{H}_{0,n}\to_p\boldsymbol{H}_{0}$ and Gaussian laws depend continuously on the covariance matrix in total variation when eigenvalues are uniformly bounded, we also have
    \begin{equation}
    \label{supp-eq:p-normal-tv}
        \int_{\mathbb{R}^{p}}\left|N\bigl(\boldsymbol{x}\mid \boldsymbol{0}, \boldsymbol{H}_{0,n}^{-1}\bigr) - N\bigl(\boldsymbol{x}\mid \boldsymbol{0}, \boldsymbol{H}_{0}^{-1}\bigr)\right| \dd \boldsymbol{x}
        =
        o_p(1)
        .
    \end{equation}
    Combining \eqref{supp-eq:post-law-p-tv} and \eqref{supp-eq:p-normal-tv} and using the triangle inequality gives
    \begin{equation}
    \label{supp-eq:bvm-specific-center}
        \int_{\mathbb{R}^{p}}\left|
        q_{n}^{\check{\boldsymbol{\theta}}_{n}}(\boldsymbol{x}) - N\left.\left(\boldsymbol{x}\right|\boldsymbol{0}, \boldsymbol{H}_{0}^{-1}\right)
        \right|
        \dd \boldsymbol{x}
        \to_p
        0
        ,
        \quad
        n\to \infty
        .
    \end{equation}
    where $q_{n}^{\check{\boldsymbol{\theta}}}$ is the density of $s_{n}^{1/2}(\boldsymbol{\theta}-\check{\boldsymbol{\theta}}_{n})$ under $\Pi_{n}^{\eta}$.

    To obtain the stated result at an arbitrary admissible center $\tilde{\boldsymbol{\theta}}_{n}$, define
    \begin{equation*}
        \boldsymbol{\delta}_{n}
        :=
        s_{n}^{1/2}\bigl(\tilde{\boldsymbol{\theta}}_{n} - \check{\boldsymbol{\theta}}_{n}\bigr)
        .
    \end{equation*}
    Both $\check{\boldsymbol{\theta}}_{n}$ and $\tilde{\boldsymbol{\theta}}_{n}$ are admissible, so
    \begin{equation*}
        \|\boldsymbol{\delta}_{n}\|
        \le
        s_{n}^{1/2}\bigl\|
        \tilde{\boldsymbol{\theta}}_{n} - \boldsymbol{\theta}^{\lambda}
        \bigr\|
        +
        s_{n}^{1/2}\bigl\|
        \check{\boldsymbol{\theta}}_{n}
        -
        \boldsymbol{\theta}^{\lambda}
        \bigr\|
        =
        o_p(1)
        ,
    \end{equation*}
    and thus $\boldsymbol{\delta}_{n}\to_p \boldsymbol{0}$.
    Let $q_{n}^{\tilde{\boldsymbol{\theta}}_{n}}$ be the density of $s_{n}^{1/2}(\boldsymbol{\theta} - \tilde{\boldsymbol{\theta}}_{n})$ under $\Pi_{n}^{\eta}$.
    For each realization of the data,
    \begin{equation*}
        q_{n}^{\tilde{\boldsymbol{\theta}}_{n}}(\boldsymbol{x})
        =
        q_{n}^{\check{\boldsymbol{\theta}}_{n}}(\boldsymbol{x} + \boldsymbol{\delta}_n)
        ,
        \quad
        \boldsymbol{x}\in\mathbb{R}^{p}
        .
    \end{equation*}
    Write $\varphi(\boldsymbol{x}) = N(\boldsymbol{x}\mid \boldsymbol{0}, \boldsymbol{H}_{0}^{-1})$.
    Using the change of variables $\boldsymbol{y} = \boldsymbol{x} + \boldsymbol{\delta}_{n}$ and the triangle inequality,
    \begin{equation}
    \label{supp-eq:q-diff-decompose}
        \begin{split}
            \int\left|q_{n}^{\tilde{\boldsymbol{\theta}}_{n}}(\boldsymbol{x}) - \varphi(\boldsymbol{x})\right| \dd \boldsymbol{x}
            &=
            \int\left|q_{n}^{\check{\boldsymbol{\theta}}_{n}}(\boldsymbol{x}) - \varphi(\boldsymbol{x} - \boldsymbol{\delta}_n)\right| \dd \boldsymbol{x}
            \\
            &\le
            \int\left|q_{n}^{\check{\boldsymbol{\theta}}_{n}}(\boldsymbol{x}) - \varphi(\boldsymbol{x})\right| \dd x + \int\left|\varphi(\boldsymbol{x} - \boldsymbol{\delta}_n) - \varphi(\boldsymbol{x})\right| \dd \boldsymbol{x}
            .
        \end{split}
    \end{equation}
    The first term on the right-hand side converges to zero in probability by \eqref{supp-eq:bvm-specific-center}.
    For the second term, the Gaussian density $\varphi$ is continuous in $L^{1}$ under translations, so for any deterministic sequence $\boldsymbol{h}_{n}\to\boldsymbol{0}$,
    \begin{equation*}
        \int_{\mathbb{R}^{p}}\left|\varphi(\boldsymbol{x} - \boldsymbol{h}_{n}) - \varphi(\boldsymbol{x})\right|
        \dd \boldsymbol{x}
        \longrightarrow
        0
        .
    \end{equation*}
    Since $\boldsymbol{\delta}_{n}\to_p\boldsymbol{0}$, the same convergence holds in probability along the random sequence $\boldsymbol{\delta}_{n}$.
    Therefore the second term in \eqref{supp-eq:q-diff-decompose} is $o_p(1)$, and we obtain
    \begin{equation*}
        \int_{\mathbb{R}^p}\Bigl|q_{n}^{\tilde{\boldsymbol{\theta}}_{n}}(\boldsymbol{x}) - N\!\bigl(\boldsymbol{x}|\boldsymbol{0}, \boldsymbol{H}_{0}^{-1}\bigr)\Bigr| \dd \boldsymbol{x} \longrightarrow_{p} 0
        .
    \end{equation*}

    Recalling that $\boldsymbol{H}_{0} = \eta \boldsymbol{J}_{\lambda}^{\star}$, this is precisely the claimed total-variation convergence of the law of $s_{n}^{1/2}(\boldsymbol{\theta}-\tilde{\boldsymbol{\theta}}_{n})$ under $\Pi_{n}^{\eta}$ to $N(\boldsymbol{0}, (\eta\boldsymbol{J}_{\lambda}^{\star})^{-1})$ at any admissible center $\tilde{\boldsymbol{\theta}}_{n}$.
\end{proof}

\subsection{Proof of Theorem~\ref{thm:ls-calib}}
\begin{proof}[Proof of Theorem~\ref{thm:ls-calib}]
    Recall the working curvature $\boldsymbol{H}_{0} := \eta \boldsymbol{J}_{\lambda}^{\star}$ and the target covariance $\boldsymbol{V}_{\mathrm{target}}^{\star} = (\boldsymbol{J}_{\lambda}^{\star})^{-1}\boldsymbol{K}^{\star}(\boldsymbol{J}_{\lambda}^{\star})^{-1}$.
    Under Assumptions~\ref{ass:P}--\ref{ass:baseline}, the posterior mean $\boldsymbol{\theta}_{\mathrm{GB}}$ is an admissible center.
    Hence, Proposition~\ref{prop:bvm-baseline} applied at the center $\boldsymbol{\theta}_{\mathrm{GB}}$ yields, conditionally on the data,
    \begin{equation}
    \label{supp-eq:gb-center-bvm}
        s_{n}^{1/2}\bigl(\boldsymbol{\theta}^{(d)} - \boldsymbol{\theta}_{\mathrm{GB}}\bigr)
        \longrightarrow_{d}
        N\bigl(\boldsymbol{0}, \boldsymbol{H}_{0}^{-1}\bigr)
        ,
        \quad
        n\to \infty
        .
    \end{equation}

    Let $\tilde{\boldsymbol{\theta}}_{n}$ be any admissible center and recall the location–scale calibration map
    \begin{equation*}
        \boldsymbol{\Omega}
        :=
        \bigl(\boldsymbol{V}_{\mathrm{target}}^{\star}\bigr)^{1/2}\boldsymbol{H}_{0}^{1/2}
        ,
        \quad
        \boldsymbol{\theta}^{(d)}_{\mathrm{calib}}
        :=
        \tilde{\boldsymbol{\theta}}_{n} + \boldsymbol{\Omega}\bigl(\boldsymbol{\theta}^{(d)} - \boldsymbol{\theta}_{\mathrm{GB}}\bigr)
        .
    \end{equation*}
    The centering at $\tilde{\boldsymbol{\theta}}_{n}$ cancels:
    \begin{equation}
    \label{supp-eq:center-cancel}
        s_{n}^{1/2}\bigl(\boldsymbol{\theta}_{\mathrm{calib}}^{(d)} - \tilde{\boldsymbol{\theta}}_{n}\bigr)
        =
        \boldsymbol{\Omega}s_{n}^{1/2}\bigl(\boldsymbol{\theta}^{(d)} - \boldsymbol{\theta}_{\mathrm{GB}}\bigr)
        ,
    \end{equation}
    Combining \eqref{supp-eq:gb-center-bvm} with \eqref{supp-eq:center-cancel} and applying the continuous mapping theorem to the fixed linear map $\boldsymbol{x} \mapsto \boldsymbol{\Omega}\boldsymbol{x}$ gives, conditionally on the data,
    \begin{equation}
    \label{supp-eq:map-calib-bvm}
        s_{n}^{1/2}\bigl(\boldsymbol{\theta}_{\mathrm{calib}}^{(d)} - \tilde{\boldsymbol{\theta}}_{n}\bigr)
        \longrightarrow_{d}
        N\bigl(\boldsymbol{0}, \boldsymbol{\Omega}\boldsymbol{H}_{0}^{-1}\boldsymbol{\Omega}^{\top}\bigr)
        ,
        \quad
        n\to\infty
        .
    \end{equation}
    By construction of $\boldsymbol{\Omega}$,
    \begin{equation}
    \label{supp-eq:calib-scale}
        \boldsymbol{\Omega}\boldsymbol{H}_{0}^{-1}\boldsymbol{\Omega}^{\top}
        =
        \bigl(\boldsymbol{V}_{\mathrm{target}}^{\star}\bigr)^{1/2}\boldsymbol{H}_{0}^{1/2}\boldsymbol{H}_{0}^{-1}\boldsymbol{H}_{0}^{1/2}\bigl(\boldsymbol{V}_{\mathrm{target}}^{\star}\bigr)^{1/2}
        =
        \boldsymbol{V}_{\mathrm{target}}^{\star}
        .
    \end{equation}
    Substituting \eqref{supp-eq:calib-scale} into \eqref{supp-eq:map-calib-bvm} yields
    \begin{equation*}
        s_{n}^{1/2}\bigl(\boldsymbol{\theta}_{\mathrm{calib}}^{(d)} - \tilde{\boldsymbol{\theta}}_{n}\bigr)
        \longrightarrow_{d}
        N\bigl(\boldsymbol{0}, \boldsymbol{V}_{\mathrm{target}}^{\star}\bigr)
        .
    \end{equation*}
    In particular, since $\boldsymbol{H}_{0} = \eta\boldsymbol{J}_{\lambda}^{\star}$ and $\boldsymbol{V}_{\mathrm{target}}^{\star}$ depends only on $(\boldsymbol{J}_{\lambda}^{\star}, \boldsymbol{K}^{\star})$, the limiting covariance $\boldsymbol{V}_{\mathrm{target}}^{\star}$ is independent of the learning rate $\eta$.
    This establishes learning-rate invariance of the limiting law.
\end{proof}

\section{Proofs of the plug-in results}
\label{supp:pf-plugin}

\subsection{Proof of Lemma~\ref{lem:postvar-consistency}}
Let $q_{n}^{\bar{\boldsymbol{\theta}}_{n}}$ be the density of the scaled posterior $\boldsymbol{X}_{n} = s_{n}^{1/2}(\boldsymbol{\theta} - \tilde{\boldsymbol{\theta}}_{n})$ under $\Pi_{n}^{\eta}$ for an admissible center $\tilde{\boldsymbol{\theta}}_{n}$.
We first present two generic facts that will be used repeatedly.

\begin{lemma}
\label{supp-lem:tv-plus-gaussian-to-moment-convergence}
    Suppose that, for some positive definite matrix $\boldsymbol{H}_{0}$,
    \begin{equation*}
        \int_{\mathbb{R}^p}
        \left|q_{n}^{\bar{\boldsymbol{\theta}}_{n}}(\boldsymbol{x})-\left.N\bigl(\boldsymbol{x}\right|\boldsymbol{0}, \boldsymbol{H}_{0}^{-1}\bigr)\right|
        \dd \boldsymbol{x}
        \longrightarrow_p
        0
        ,
        \quad
        n\to\infty
    \end{equation*}
    and that there exist random constants $C_n, c_n > 0$, bounded in probability, such that, for all $\boldsymbol{x}\in\mathbb{R}^{p}$ and all $n$,
    \begin{equation*}
        q_{n}^{\bar{\boldsymbol{\theta}}_{n}}(\boldsymbol{x})
        \le
        C_n\exp\!\left(-c_n|\boldsymbol{x}|^{2}\right)
        .
    \end{equation*}
    Then, for any function $g: \mathbb{R}^{p}\to \mathbb{R}^{k}$ ($k \in \mathbb{N}$) with polynomial growth,
    \begin{equation*}
        \int g(\boldsymbol{x})q_{n}^{\bar{\boldsymbol{\theta}}_{n}}(\boldsymbol{x})\dd \boldsymbol{x}
        \longrightarrow_p
        \int g(\boldsymbol{x})\left.N\bigl(\boldsymbol{x}\right|\boldsymbol{0}, \boldsymbol{H}_{0}^{-1}\bigr)\dd \boldsymbol{x}
        .
    \end{equation*}
    In particular,
    \begin{equation*}
        \E[]{\boldsymbol{X}_{n}}
        \longrightarrow_p
        \boldsymbol{0}
        ,
        \quad
        \E[]{\boldsymbol{X}_{n}\boldsymbol{X}_{n}^{\top}}
        \longrightarrow_p
        \boldsymbol{H}_{0}^{-1}
        .
    \end{equation*}
\end{lemma}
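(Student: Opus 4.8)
The plan is to establish the single statement
\[
\int_{\mathbb{R}^{p}} g(\boldsymbol{x})\bigl\{q_{n}^{\bar{\boldsymbol{\theta}}_{n}}(\boldsymbol{x}) - N(\boldsymbol{x}\mid\boldsymbol{0},\boldsymbol{H}_{0}^{-1})\bigr\}\,\dd\boldsymbol{x}\ \longrightarrow_{p}\ 0,
\]
from which the displayed conclusion follows because $\int g(\boldsymbol{x})\,N(\boldsymbol{x}\mid\boldsymbol{0},\boldsymbol{H}_{0}^{-1})\,\dd\boldsymbol{x}$ is a finite, deterministic constant (polynomial growth of $g$ against Gaussian tails). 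Write $\varphi_{0}(\boldsymbol{x}):=N(\boldsymbol{x}\mid\boldsymbol{0},\boldsymbol{H}_{0}^{-1})$ and fix $A,m$ with $\|g(\boldsymbol{x})\|\le A(1+\|\boldsymbol{x}\|)^{m}$. For $R>0$ I would split the integral over $\{\|\boldsymbol{x}\|\le R\}$ and its complement. On the ball, $\bigl|\int_{\|\boldsymbol{x}\|\le R} g(q_{n}^{\bar{\boldsymbol{\theta}}_{n}}-\varphi_{0})\bigr|\le A(1+R)^{m}\int_{\mathbb{R}^{p}}|q_{n}^{\bar{\boldsymbol{\theta}}_{n}}-\varphi_{0}|$, which, for each fixed $R$, tends to $0$ in probability by the assumed total-variation convergence.

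Next I would control the two tails. The Gaussian tail $\int_{\|\boldsymbol{x}\|>R}\|g(\boldsymbol{x})\|\varphi_{0}(\boldsymbol{x})\,\dd\boldsymbol{x}$ is deterministic and $\to 0$ as $R\to\infty$. For the posterior tail, the domination hypothesis gives
\[
\int_{\|\boldsymbol{x}\|>R}\|g(\boldsymbol{x})\|\,q_{n}^{\bar{\boldsymbol{\theta}}_{n}}(\boldsymbol{x})\,\dd\boldsymbol{x}\ \le\ C_{n}\!\int_{\|\boldsymbol{x}\|>R} A(1+\|\boldsymbol{x}\|)^{m}\exp\!\bigl(-c_{n}\|\boldsymbol{x}\|^{2}\bigr)\,\dd\boldsymbol{x}.
\]
Here I would use that $C_{n}=O_{p}(1)$ and that $c_{n}$ is bounded away from zero in probability, which is the form in which the domination is actually applied; in the Bernstein–von Mises setting of Proposition~\ref{prop:bvm-baseline} one has $c_{n}\asymp\lambda_{\min}(\boldsymbol{H}_{0,n})$, bounded away from zero since $\boldsymbol{H}_{0,n}\to_{p}\boldsymbol{H}_{0}\succ\boldsymbol{0}$. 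Thus for every $\delta>0$ there is $M<\infty$ with $\inf_{n}\mathbb{P}(C_{n}\le M,\ c_{n}\ge M^{-1})\ge 1-\delta$, and on that event the posterior tail is at most the deterministic quantity $M\int_{\|\boldsymbol{x}\|>R} A(1+\|\boldsymbol{x}\|)^{m}\exp(-\|\boldsymbol{x}\|^{2}/M)\,\dd\boldsymbol{x}$, which $\to 0$ as $R\to\infty$.

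Assembling the pieces with an $\varepsilon$--$\delta$ argument: given $\varepsilon,\delta>0$, pick $M$ as above, then $R$ so large that both deterministic tail bounds are below $\varepsilon/3$; by the ball estimate pick $N$ so that for $n\ge N$ the event $\{A(1+R)^{m}\int|q_{n}^{\bar{\boldsymbol{\theta}}_{n}}-\varphi_{0}|>\varepsilon/3\}$ has probability less than $\delta$. On the intersection of the two good events the full integral is below $\varepsilon$, so $\mathbb{P}(|\int g(q_{n}^{\bar{\boldsymbol{\theta}}_{n}}-\varphi_{0})|>\varepsilon)<2\delta$ for $n\ge N$, and letting $\delta\downarrow 0$ yields the claim. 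The ``in particular'' statement then follows by taking $g(\boldsymbol{x})=\boldsymbol{x}$ and $g(\boldsymbol{x})=\boldsymbol{x}\boldsymbol{x}^{\top}$ (both of polynomial growth), since under the reparametrization $\E[]{\boldsymbol{X}_{n}}=\int\boldsymbol{x}\,q_{n}^{\bar{\boldsymbol{\theta}}_{n}}$ and $\E[]{\boldsymbol{X}_{n}\boldsymbol{X}_{n}^{\top}}=\int\boldsymbol{x}\boldsymbol{x}^{\top}q_{n}^{\bar{\boldsymbol{\theta}}_{n}}$, with $\int\boldsymbol{x}\,\varphi_{0}=\boldsymbol{0}$ and $\int\boldsymbol{x}\boldsymbol{x}^{\top}\varphi_{0}=\boldsymbol{H}_{0}^{-1}$.

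The main obstacle is the tail control: converting the pointwise sub-Gaussian domination, whose constants $C_{n},c_{n}$ are random, into a tail bound that is uniform in $n$ and that interacts correctly with convergence in probability (rather than almost sure convergence). This forces the bookkeeping through the high-probability ``good'' events $\{C_{n}\le M,\ c_{n}\ge M^{-1}\}$ and the tightness of $(C_{n})$ and $(c_{n}^{-1})$; once that is in place, the remainder is the routine truncation argument combined with the total-variation hypothesis.
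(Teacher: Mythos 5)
Your proof is correct and follows essentially the same route as the paper's: the same ball/tail truncation, control of the ball term via the total-variation hypothesis times a polynomial bound on $g$, and control of the posterior tail by passing to high-probability events $\{C_n\le M,\ c_n\ge M^{-1}\}$ where the sub-Gaussian envelope dominates. The only cosmetic difference is that you work directly with the signed difference $\int g(q_n-\varphi_0)$ and handle the three pieces in a single $\varepsilon$--$\delta$ argument, whereas the paper writes out the tail estimate for $T_{n,R}$ and invokes $R\to\infty$ uniformity; both you and the paper tacitly interpret ``$c_n$ bounded in probability'' as ``$c_n^{-1}$ bounded in probability,'' which is the condition actually used.
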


\begin{proof}[Proof of Lemma~\ref{supp-lem:tv-plus-gaussian-to-moment-convergence}]
    Total-variation convergence implies convergence of expectations for bounded measurable function $g$.
    For a polynomially growing $g$, fix $R > 0$ and write
    \begin{equation*}
        \int g(\boldsymbol{x})q_{n}^{\bar{\boldsymbol{\theta}}_{n}}(\boldsymbol{x})\dd \boldsymbol{x}
        =
        \int_{\|\boldsymbol{x}\|\le R} g(\boldsymbol{x})q_{n}^{\bar{\boldsymbol{\theta}}_{n}}(\boldsymbol{x})\dd \boldsymbol{x}
        +
        \int_{\|\boldsymbol{x}\| > R} g(\boldsymbol{x})q_{n}^{\bar{\boldsymbol{\theta}}_{n}}(\boldsymbol{x})\dd \boldsymbol{x}
        .
    \end{equation*}
    On $\{\|\boldsymbol{x}\|\le R\}$, $g$ is bounded, so total-variation convergence yields convergence of the first term as $n\to \infty$, and the limit as $R\to\infty$ recovers the Gaussian expectation.
    For the tail term, let
    \begin{equation*}
        T_{n,R}
        :=
        \int_{\|\boldsymbol{x}\|>R} |g(\boldsymbol{x})|q_n^{\bar{\boldsymbol{\theta}}_n}(\boldsymbol{x})\dd\boldsymbol{x}
        .
    \end{equation*}
    By the Gaussian domination, 
    \begin{equation*}
        T_{n,R}
        \le
        C_n \int_{\|\boldsymbol{x}\|>R} |g(\boldsymbol{x})|\exp(-c_n\|\boldsymbol{x}\|^2)
        \dd\boldsymbol{x}
        .
    \end{equation*}
    Since $g$ has polynomial growth, there exist $m \ge 0$ and $K < \infty$ such that $|g(\boldsymbol{x})|\le K(1+\|\boldsymbol{x}\|^m)$.
    Because $(C_n)_{n\ge1}$ and $(c_n)_{n\ge1}$ are bounded in probability, for any
    $\varepsilon > 0$ we can choose $B < \infty$ such that
    \begin{equation*}
      \sup_{n\ge1}\operatorname{pr}(C_n > B)
      <
      \varepsilon
      ,
      \quad
      \sup_{n\ge1}\operatorname{pr}(c_n < B^{-1})
      <
      \varepsilon
      .        
    \end{equation*}
    On the event $\{C_n\le B,\;c_n\ge B^{-1}\}$ we obtain
    \begin{equation*}
      T_{n,R}
      \le
      BK \int_{\|\boldsymbol{x}\|>R} (1+\|\boldsymbol{x}\|^m)\exp\bigl(-B^{-1}\|\boldsymbol{x}\|^2\bigr)\dd\boldsymbol{x}
      .
    \end{equation*}
    The envelope $(1+\|\boldsymbol{x}\|^m)\exp(-B^{-1}\|\boldsymbol{x}\|^2)$ is integrable, so we can choose $R$ large enough that the right-hand side is at most $\varepsilon$.
    For such $R$,
    \begin{equation*}
        \operatorname{pr}(T_{n,R} > \varepsilon)
        \le
        \operatorname{pr}(C_n>B)
        +
        \operatorname{pr}(c_n<B^{-1})
        \le
        2\varepsilon        
    \end{equation*}
    for all $n$, and hence $\sup_{n\ge1}\Pr(T_{n,R}>\varepsilon)\le 2\varepsilon$.
    This shows that $T_{n,R}\to0$ in probability as $R\to\infty$, uniformly in $n$.
    This gives convergence of expectations for polynomially growing $g$, and the stated consequences follow by taking
    $g(\boldsymbol{x}) = \boldsymbol{x}$, and $g(\boldsymbol{x}) = \boldsymbol{x}\boldsymbol{x}^{\top}$.
\end{proof}

The required Gaussian domination is standard in Bernstein–von Mises arguments under local asymptotic normality:
in a shrinking neighborhood $\mathcal{U}_{n}$ of $\boldsymbol{\theta}^{\lambda}$, the log posterior is a quadratic with positive-definite curvature $\boldsymbol{H}_{0}$ plus an $o_p(1)$ perturbation, while outside $\mathcal{U}_{n}$ the quadratic term dominates.
Under Assumptions~\ref{ass:GL}--\ref{ass:R}, this gives sub-Gaussian tails for $\boldsymbol{X}_{n}$, uniformly in $n$.

\begin{lemma}
\label{supp-lem:center-shift-sqrt-sn}
    For any admissible center $\tilde{\boldsymbol{\theta}}_{n}$,
    \begin{equation*}
        s_{n}^{1/2}\bigl(\boldsymbol{\theta}_{\mathrm{GB}} - \tilde{\boldsymbol{\theta}}_{n}\bigr)
        \longrightarrow_p
        \boldsymbol{0}
    \end{equation*}
    and
    \begin{equation*}
        s_{n}\operatorname{var}_{\Pi_{n}^{\eta}}\!\left(\boldsymbol{\theta}\right)
        =
        \E[]{\boldsymbol{X}_{n}\boldsymbol{X}_{n}^{\top}}
        +
        o_p(1)
        ,
    \end{equation*}
    where $\boldsymbol{X}_{n} = s_{n}^{1/2}(\boldsymbol{\theta} - \tilde{\boldsymbol{\theta}}_{n})$.
\end{lemma}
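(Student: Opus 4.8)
The plan is to obtain both statements from the total-variation limit of Proposition~\ref{prop:bvm-baseline} and the moment-transfer Lemma~\ref{supp-lem:tv-plus-gaussian-to-moment-convergence}, via an elementary bias--variance decomposition of the scaled posterior; no new analytic estimate is required.

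First I would note that, since $\tilde{\boldsymbol{\theta}}_{n}$ is data-measurable and $\Pi_{n}^{\eta}$ has a finite first moment under the Gaussian domination recalled just before this lemma, the quantity
\begin{equation*}
    \boldsymbol{\delta}_{n}
    :=
    s_{n}^{1/2}\bigl(\boldsymbol{\theta}_{\mathrm{GB}} - \tilde{\boldsymbol{\theta}}_{n}\bigr)
    =
    \int_{\mathbb{R}^{p}}\boldsymbol{x}\,q_{n}^{\tilde{\boldsymbol{\theta}}_{n}}(\boldsymbol{x})\dd\boldsymbol{x}
    =
    \E[\Pi_{n}^{\eta}]{\boldsymbol{X}_{n}}
\end{equation*}
is exactly the posterior mean of $\boldsymbol{X}_{n}$. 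Proposition~\ref{prop:bvm-baseline}, applied at the admissible center $\tilde{\boldsymbol{\theta}}_{n}$, gives $\int_{\mathbb{R}^{p}}|q_{n}^{\tilde{\boldsymbol{\theta}}_{n}}(\boldsymbol{x}) - N(\boldsymbol{x}\mid\boldsymbol{0},\boldsymbol{H}_{0}^{-1})|\dd\boldsymbol{x}\to_{p}0$; together with the uniform sub-Gaussian tail bound on $q_{n}^{\tilde{\boldsymbol{\theta}}_{n}}$ valid under Assumptions~\ref{ass:GL}--\ref{ass:R}, Lemma~\ref{supp-lem:tv-plus-gaussian-to-moment-convergence} yields $\E[]{\boldsymbol{X}_{n}}\to_{p}\boldsymbol{0}$ and $\E[]{\boldsymbol{X}_{n}\boldsymbol{X}_{n}^{\top}}\to_{p}\boldsymbol{H}_{0}^{-1}$. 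The first limit is precisely $\boldsymbol{\delta}_{n}\to_{p}\boldsymbol{0}$, i.e.\ $s_{n}^{1/2}(\boldsymbol{\theta}_{\mathrm{GB}} - \tilde{\boldsymbol{\theta}}_{n})\to_{p}\boldsymbol{0}$; this is the first assertion, and it also re-derives and extends the earlier fact~\eqref{supp-eq:mean-admissible} from the particular baseline $\check{\boldsymbol{\theta}}_{n}$ to any admissible center.

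For the second assertion I would use that variance is invariant under translation by the data-measurable vector $\tilde{\boldsymbol{\theta}}_{n}$, so that, conditionally on the data,
\begin{equation*}
    s_{n}\operatorname{var}_{\Pi_{n}^{\eta}}\!\left(\boldsymbol{\theta}\right)
    =
    \operatorname{var}_{\Pi_{n}^{\eta}}\!\left(\boldsymbol{X}_{n}\right)
    =
    \E[]{\boldsymbol{X}_{n}\boldsymbol{X}_{n}^{\top}} - \boldsymbol{\delta}_{n}\boldsymbol{\delta}_{n}^{\top}
    ,
\end{equation*}
this being an exact identity for every realization because $\boldsymbol{\delta}_{n} = \E[\Pi_{n}^{\eta}]{\boldsymbol{X}_{n}}$. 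Since $\boldsymbol{\delta}_{n}\to_{p}\boldsymbol{0}$ by the first part, $\boldsymbol{\delta}_{n}\boldsymbol{\delta}_{n}^{\top} = o_{p}(1)$, and therefore $s_{n}\operatorname{var}_{\Pi_{n}^{\eta}}(\boldsymbol{\theta}) = \E[]{\boldsymbol{X}_{n}\boldsymbol{X}_{n}^{\top}} + o_{p}(1)$, as claimed.

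The computation is short, and there is no real obstacle: all the substantive work — finiteness and uniform sub-Gaussian control of the posterior moments, and the passage from total-variation convergence to convergence of first and second moments — has already been carried out in Lemma~\ref{supp-lem:tv-plus-gaussian-to-moment-convergence} and the Gaussian-domination remark that precedes it. The only point demanding a little care is bookkeeping of the conditioning: the displayed identities are exact expectations under the posterior for each fixed dataset, while the $o_{p}(1)$ and $\to_{p}$ statements refer to the randomness of the data under $P^{\star}$; keeping these two layers separate is what makes the argument rigorous rather than merely heuristic.
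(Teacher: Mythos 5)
Your proposal is correct and follows essentially the same route as the paper: express $\boldsymbol{\delta}_n = s_n^{1/2}(\boldsymbol{\theta}_{\mathrm{GB}} - \tilde{\boldsymbol{\theta}}_n)$ as the posterior mean of $\boldsymbol{X}_n$, invoke the total-variation limit of Proposition~\ref{prop:bvm-baseline} at the admissible center together with Lemma~\ref{supp-lem:tv-plus-gaussian-to-moment-convergence} to get $\boldsymbol{\delta}_n\to_p\boldsymbol{0}$, and then use the exact identity $s_n\operatorname{var}_{\Pi_n^\eta}(\boldsymbol{\theta}) = \E{\boldsymbol{X}_n\boldsymbol{X}_n^\top}-\boldsymbol{\delta}_n\boldsymbol{\delta}_n^\top$ for the second claim. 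The only cosmetic difference is that you spell out the translation-invariance-of-variance step and the dependence on Proposition~\ref{prop:bvm-baseline} a bit more explicitly than the paper does.
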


\begin{proof}[Proof of Lemma~\ref{supp-lem:center-shift-sqrt-sn}]
    By definition,
    \begin{equation*}
        \boldsymbol{\theta}_{\mathrm{GB}}
        =
        \E[\Pi_{n}^{\eta}]{\boldsymbol{\theta}}
        =
        \tilde{\boldsymbol{\theta}}_{n}
        +
        s_{n}^{-1/2}\E[]{\boldsymbol{X}_{n}}
        ,
    \end{equation*}
    so $s_{n}^{1/2}(\boldsymbol{\theta}_{\mathrm{GB}} - \tilde{\boldsymbol{\theta}}_{n}) = E(\boldsymbol{X}_{n})$.
    By Lemma~\ref{supp-lem:tv-plus-gaussian-to-moment-convergence} with $g(\boldsymbol{x}) = \boldsymbol{x}$ and the centered Gaussian limit, $E(\boldsymbol{X}_{n})\to_p \boldsymbol{0}$, proving the first claim.

    For the variance,
    \begin{equation*}
        s_{n}\operatorname{var}_{\Pi_{n}^{\eta}}\!\left(\boldsymbol{\theta}\right)
        =
        \E[]{\boldsymbol{X}_{n}\boldsymbol{X}_{n}^{\top}}
        -
        \E[]{\boldsymbol{X}_{n}}\E[]{\boldsymbol{X}_{n}}^{\top}
        =
        \E[]{\boldsymbol{X}_{n}\boldsymbol{X}_{n}^{\top}}
        +
        o_p(1)
        ,
    \end{equation*}
    since $E(\boldsymbol{X}_{n})\to_p\boldsymbol{0}$ by Lemma~\ref{supp-lem:tv-plus-gaussian-to-moment-convergence}.
\end{proof}

\begin{proof}[Proof of Lemma~\ref{lem:postvar-consistency}]
    For the asymptotic form of the posterior covariance, fix an admissible center $\tilde{\boldsymbol{\theta}}_{n}$ and consider $\boldsymbol{X}_{n} = s_{n}^{1/2}(\boldsymbol{\theta} - \tilde{\boldsymbol{\theta}}_{n})$.
    By Proposition~\ref{prop:bvm-baseline}, the law of $\boldsymbol{X}_{n}$ under $\Pi_n^{\eta}$ converges in total variation to $N(\boldsymbol{0}, \boldsymbol{H}_{0}^{-1})$.
    Lemma~\ref{supp-lem:tv-plus-gaussian-to-moment-convergence} with $g(\boldsymbol{x})=\boldsymbol{x}\boldsymbol{x}^{\top}$ gives
    \begin{equation*}
        \E{\boldsymbol{X}_{n}\boldsymbol{X}_{n}^{\top}} \longrightarrow_p
        \boldsymbol{H}_{0}^{-1}
        .
    \end{equation*}
    Combining this with Lemma~\ref{supp-lem:center-shift-sqrt-sn},
    \begin{equation*}
        s_{n}\operatorname{var}_{\Pi_{n}^{\eta}}\!\left(\boldsymbol{\theta}\right) = \E{\boldsymbol{X}_{n}\boldsymbol{X}_{n}^{\top}} - \E[]{\boldsymbol{X}_{n}}\E[]{\boldsymbol{X}_{n}}^{\top}
        \longrightarrow_p
        \boldsymbol{H}_{0}^{-1}
        ,
    \end{equation*}
    that is,
    \begin{equation*}
        s_{n}\boldsymbol{\Sigma}_{\mathrm{post},n}
        \longrightarrow_p
        \boldsymbol{H}_{0}^{-1}
        ,
        \quad
        n\to \infty
        .
    \end{equation*}

    By the assumption of the Lemma~\ref{lem:postvar-consistency},
    \begin{equation*}
        s_{n}\hat{\boldsymbol{\Sigma}}_{\mathrm{post}}
        =
        s_{n}\boldsymbol{\Sigma}_{\mathrm{post},n}
        +
        s_{n}\!\left(\hat{\boldsymbol{\Sigma}}_{\mathrm{post}} - \boldsymbol{\Sigma}_{\mathrm{post},n}\right)
        \longrightarrow_p
        \boldsymbol{H}_{0}^{-1}
        .
    \end{equation*}
    This completes the proof.
\end{proof}

\subsection{Proof of Lemma~\ref{lem:target-plugin}}
\begin{proof}[Proof of Lemma~\ref{lem:target-plugin}]
    By Assumption~\ref{ass:GL} (iii), there exists a neighborhood $\mathcal{N}$ of $\boldsymbol{\theta}^{\lambda}$ such that $\sup_{\boldsymbol{\theta} \in \mathcal{N}}\|\boldsymbol{J}_{n}(\boldsymbol{\theta}) - \boldsymbol{J}^{\star}\| \to_p 0$.
    Since $\bar{\boldsymbol{\theta}}_{n} \to_p \boldsymbol{\theta}^{\lambda}$, we have $\boldsymbol{J}_{n}(\bar{\boldsymbol{\theta}}_{n}) \to_p \boldsymbol{J}^{\star}$.
    By Assumption~\ref{ass:P}, $\rho$ is $C^2$ in a neighborhood of $\boldsymbol{\theta}^{\lambda}$, so $\boldsymbol{H}_{\rho}(\bar{\boldsymbol{\theta}}_{n})\to_p \boldsymbol{H}_{\rho}(\boldsymbol{\theta}^{\lambda})$.
    With $\lambda_n \to \lambda$, this yields
    \begin{equation*}
        \hat{\boldsymbol{J}}_{\lambda}
        =
        \boldsymbol{J}_{n}(\bar{\boldsymbol{\theta}}_{n})
        +
        \lambda_{n}\boldsymbol{H}_{\rho}(\bar{\boldsymbol{\theta}}_{n})
        \longrightarrow_p
        \boldsymbol{J}^{\star} + \lambda\boldsymbol{H}_{\rho}(\boldsymbol{\theta}^{\lambda})
        =
        \boldsymbol{J}_{\lambda}^{\star}
        .
    \end{equation*}
    Assumption~\ref{ass:GL} (iii) also guarantees that $\boldsymbol{J}_{\lambda}^{\star}$ is nonsingular, so the matrix inverse is continuous in a neighborhood, implying $\hat{\boldsymbol{J}}_{\lambda}^{-1}\to_p (\boldsymbol{J}_{\lambda}^{\star})^{-1}$.

    By assumption, $\hat{\boldsymbol{K}}\to_{p}\boldsymbol{K}^{\star}$.
    A continuous mapping argument then gives
    \begin{equation*}
        \hat{\boldsymbol{V}}_{\mathrm{target}} = \hat{\boldsymbol{J}}_{\lambda}^{-1}\hat{\boldsymbol{K}}\hat{\boldsymbol{J}}_{\lambda}^{-1} \longrightarrow_p \bigl(\boldsymbol{J}_{\lambda}^{\star}\bigr)^{-1}\boldsymbol{K}^{\star}\bigl(\boldsymbol{J}_{\lambda}^{\star}\bigr)^{-1} = \boldsymbol{V}_{\mathrm{target}}^{\star}.
    \end{equation*}
\end{proof}

\subsection{Proof of Proposition~\ref{prop:estimated-calib}}
\begin{proof}[Proof of Proposition~\ref{prop:estimated-calib}]
    Fix an admissible center $\tilde{\boldsymbol{\theta}}_{n}$ and set $\boldsymbol{X}_{n} := s_{n}^{1/2}(\boldsymbol{\theta}^{(d)} - \bar{\boldsymbol{\theta}}_{n})$.
    By Proposition~\ref{prop:bvm-baseline}, conditionally on the data,
    \begin{equation*}
        \boldsymbol{X}_{n} \longrightarrow_{d} N\bigl(\boldsymbol{0}, \boldsymbol{H}_{0}^{-1}\bigr)
        ,
    \end{equation*}
    where $\boldsymbol{H}_{0} = \eta\boldsymbol{J}_{\lambda}^{\star}$.

    Lemma~\ref{supp-lem:center-shift-sqrt-sn} gives $s_{n}^{1/2}(\boldsymbol{\theta}_{\mathrm{GB}} - \bar{\boldsymbol{\theta}}_{n}) \to_p \boldsymbol{0}$, and by the Monte Carlo rate assumption, $s_{n}^{1/2}(\hat{\boldsymbol{\theta}}_{\mathrm{GB}} - \boldsymbol{\theta}_{\mathrm{GB}})\to_p\boldsymbol{0}$.
    Thus
    \begin{equation*}
        s_{n}^{1/2}\bigl(\boldsymbol{\theta}^{(d)} - \hat{\boldsymbol{\theta}}_{\mathrm{GB}}\bigr)
        =
        \boldsymbol{X}_{n}
        -
        s_{n}^{1/2}\bigl(\hat{\boldsymbol{\theta}}_{\mathrm{GB}} - \bar{\boldsymbol{\theta}}_{n}\bigr)
        =
        \boldsymbol{X}_{n} + o_p(1)
        ,
    \end{equation*}
    so by Slutsky's theorem,
    \begin{equation*}
        s_{n}^{1/2}\bigl(\boldsymbol{\theta}^{(d)} - \hat{\boldsymbol{\theta}}_{\mathrm{GB}}\bigr)
        \longrightarrow_d
        N\bigl(\boldsymbol{0}, \boldsymbol{H}_{0}^{-1}\bigr)
        .
    \end{equation*}

    By Lemma~\ref{lem:postvar-consistency}, $\hat{\boldsymbol{H}}_{0}^{-1}\to_p \boldsymbol{H}_{0}^{-1}$ and therefore $\hat{\boldsymbol{H}}_{0}^{1/2}\to_p\boldsymbol{H}_{0}^{1/2}$.
    By Lemma~\ref{lem:target-plugin}, $\hat{\boldsymbol{V}}_{\mathrm{target}}\to_p \boldsymbol{V}_{\mathrm{target}}^{\star}$ and hence $\hat{\boldsymbol{V}}_{\mathrm{target}}^{1/2}\to_p (\boldsymbol{V}_{\mathrm{target}}^{\star})^{1/2}$.
    Consequently,
    \begin{equation*}
        \hat{\boldsymbol{\Omega}}
        =
        \hat{\boldsymbol{V}}_{\mathrm{target}}^{1/2}\hat{\boldsymbol{H}}_{0}^{1/2}
        \longrightarrow_p
        \boldsymbol{\Omega}
        :=
        \bigl(\boldsymbol{V}_{\mathrm{target}}^{\star}\bigr)^{1/2}\boldsymbol{H}_{0}^{1/2}
        .
    \end{equation*}

    From \eqref{eq:empirical-lsc},
    \begin{equation*}
        s_{n}^{1/2}\bigl(\hat{\boldsymbol{\theta}}_{\mathrm{calib}}^{(d)} - \bar{\boldsymbol{\theta}}_{n}\bigr)
        =
        \hat{\boldsymbol{\Omega}}s_{n}^{1/2}\bigl(\boldsymbol{\theta}^{(d)} - \hat{\boldsymbol{\theta}}_{\mathrm{GB}}\bigr)
        .
    \end{equation*}
    Applying Slutsky's theorem again,
    \begin{equation*}
        s_{n}^{1/2}\bigl(\hat{\boldsymbol{\theta}}_{\mathrm{calib}}^{(d)} - \bar{\boldsymbol{\theta}}_{n}\bigr)
        \longrightarrow_d
        N\bigl(\boldsymbol{0}, \boldsymbol{\Omega}\boldsymbol{H}_{0}^{-1}\boldsymbol{\Omega}^{\top}\bigr)
        =
        N\bigl(\boldsymbol{0}, \boldsymbol{V}_{\mathrm{target}}^{\star}\bigr)
        ,
    \end{equation*}
    where the last equality uses
    \begin{equation*}
        \boldsymbol{\Omega}\boldsymbol{H}_{0}^{-1}\boldsymbol{\Omega}^{\top}
        =
        \bigl(\boldsymbol{V}_{\mathrm{target}}^{\star}\bigr)^{1/2}\boldsymbol{H}_{0}^{1/2}\boldsymbol{H}_{0}^{-1}\boldsymbol{H}_{0}^{1/2}\bigl(\boldsymbol{V}_{\mathrm{target}}^{\star}\bigr)^{1/2}
        =
        \boldsymbol{V}_{\mathrm{target}}^{\star}
        .
    \end{equation*}
    Finally, the definition \eqref{eq:target-objects} shows that $\boldsymbol{V}_{\mathrm{target}}^{\star}$ does not depend on the learning rate $\eta$, so the limiting calibrated law is learning-rate invariant.
\end{proof}
\section{Non-smooth penalties: active set and subgradient calculus}
\label{supp:non-smooth}

In the main text we assume that the penalty $\rho$ is twice continuously differentiable in a neighborhood of the target point $\boldsymbol{\theta}^{\lambda}$.
Here we outline how the arguments extend when $\rho$ is convex but possibly non-smooth.
Throughout this section we work with the penalized population inclusion
\begin{equation}
\label{supp-eq:pen-pop-incl}
    \boldsymbol{0}
    \in
    \Psi(\boldsymbol{\theta})
    +
    \lambda\partial\rho(\boldsymbol{\theta})
\end{equation}
with solution $\boldsymbol{\theta}^{\lambda}$, where $\partial\rho$ is the convex subdifferential.
For an index set $A \subset \{1,\ldots,p\}$, we write $\boldsymbol{\theta}_{A} := (\theta_{j})_{j \in A}$ for the subvector of $\boldsymbol{\theta}$ with coordinates in $A$, and, for any matrix $\boldsymbol{M}$, we write $\boldsymbol{M}_{AA} := P_{A}\boldsymbol{M}P_{A}^{\top}$ for the corresponding principal submatrix, where $P_{A}$ denotes the coordinate projection onto $A$.
For any function $g:\mathbb{R}^p\to\mathbb{R}$ that is differentiable at $\boldsymbol{\theta}$, we define the active gradient
\begin{equation*}
    \nabla_A g(\boldsymbol{\theta})
    :=
    P_A \nabla g(\boldsymbol{\theta})
    \in\mathbb{R}^{|A|}    
\end{equation*}
and, when $g$ is twice differentiable at $\boldsymbol{\theta}$, the active Hessian
\begin{equation*}
    \nabla^2_{\boldsymbol{\theta}_A} g(\boldsymbol{\theta})
    :=
    P_A \nabla^2 g(\boldsymbol{\theta}) P_A^\top
    \in
    \mathbb{R}^{|A|\times|A|}
    .    
\end{equation*}
Whenever we write $\nabla_A \rho(\boldsymbol{\theta})$ or $\nabla^2_{\boldsymbol{\theta}_A}\rho(\boldsymbol{\theta})$, we implicitly restrict attention to points $\boldsymbol{\theta}$ at which these quantities are well defined.

\begin{assumption}[Non-smooth penalty and active set]
\label{supp-ass:P-ns}
    The penalty $\rho: \mathbb{R}^{p}\to(-\infty, \infty]$ is convex and lower semicontinuous.
    The penalized population inclusion \eqref{supp-eq:pen-pop-incl} has a unique solution $\boldsymbol{\theta}^{\lambda}$.
    Moreover, there exists an index set $A \subset \{1,\ldots, p\}$ and an open neighborhood $\mathcal{N}$ of $\boldsymbol{\theta}^{\lambda}$ such that, for every $\boldsymbol{\theta} = (\boldsymbol{\theta}_{A}, \boldsymbol{\theta}_{A^{c}}) \in \mathcal{N}$, the map $\boldsymbol{\theta}_{A} \mapsto \rho(\boldsymbol{\theta}_{A}, \boldsymbol{\theta}_{A^{c}})$ is twice continuously differentiable on a neighborhood of $\boldsymbol{\theta}^{\lambda}_{A}$, and the Hessian with respect to $\boldsymbol{\theta}_{A}$, $\nabla^2_{\boldsymbol{\theta}_{A}}\rho(\boldsymbol{\theta}_{A}, \boldsymbol{\theta}_{A^c})$, is continuous at $\boldsymbol{\theta}^{\lambda}$.
    Finally, the active-set penalized curvature matrix $\boldsymbol{J}_{\lambda, AA}^{\star}$ defined below is nonsingular.
\end{assumption}

We refer to $A$ in Assumption~\ref{supp-ass:P-ns} as the active set.
By convexity and the above partial $C^2$-smoothness in the active coordinates, the Hessian
\begin{equation*}
    \boldsymbol{H}_{\rho,AA}
    :=
    \left.
    \nabla^{2}_{\boldsymbol{\theta}_{A}}\rho\bigl(\boldsymbol{\theta}_{A}, \boldsymbol{\theta}_{A^{c}}\bigr)
    \right|_{\boldsymbol{\theta} = \boldsymbol{\theta}^{\lambda}}
\end{equation*}
exists and is symmetric positive semidefinite.
Moreover, shrinking $\mathcal{N}$ if necessary, we may assume that the partial gradient $\nabla_A\rho(\boldsymbol{\theta})$ is well defined for all $\boldsymbol{\theta}\in\mathcal{N}$ and continuous on $\mathcal{N}$, and that, for each fixed $\boldsymbol{\theta}_{A^c}$ with $(\boldsymbol{\theta}_A^\lambda,\boldsymbol{\theta}_{A^c})\in\mathcal{N}$, the map $\boldsymbol{\theta}_A\mapsto\nabla_A\rho(\boldsymbol{\theta}_A,\boldsymbol{\theta}_{A^c})$ is continuously differentiable at $\boldsymbol{\theta}_A^\lambda$ with derivative $\boldsymbol{H}_{\rho,AA}$.
Recalling
\begin{equation*}
    \boldsymbol{J}^{\star}
    :=
    \nabla_{\boldsymbol{\theta}}\Psi\bigl(\boldsymbol{\theta}^{\lambda}\bigr)
    ,
    \quad
    \boldsymbol{K}^{\star}
    :=
    \operatorname{var}_{P^{\star}}\bigl(\psi(\mathcal{D}_{1}, \boldsymbol{\theta}^{\lambda})\bigr)
    ,
\end{equation*}
we define the active-set penalized curvature
\begin{equation*}
    \boldsymbol{J}_{\lambda, AA}^{\star}
    :=
    \boldsymbol{J}^{\star}_{AA}
    +
    \lambda \boldsymbol{H}_{\rho, AA}
    .
\end{equation*}
We also set
\begin{equation*}
    \boldsymbol{V}_{\mathrm{target}, AA}^{\star}
    :=
    \bigl(\boldsymbol{J}_{\lambda,AA}^{\star}\bigr)^{-1}
    \boldsymbol{K}^{\star}_{AA}
    \bigl(\boldsymbol{J}_{\lambda,AA}^{\star}\bigr)^{-1}
    ,
\end{equation*}
which is the analogue of $\boldsymbol{V}_{\mathrm{target}}^{\star}$ in the smooth case, restricted to the active set.

\begin{example}[$\ell_{1}$ penalty]
\label{supp-ex:l1-active}
    Consider the $\ell_{1}$ penalty $\rho(\boldsymbol{\theta}) = \sum_{j=1}^{p}|\theta_{j}|$.
    A natural choice of active set is
    \begin{equation*}
        A
        :=
        \{j:\theta_{j}^{\lambda} \ne 0\}
        .
    \end{equation*}
    For each $j \in A$ we have $\theta_{j}^{\lambda} \neq 0$, so the map $x \mapsto |x|$ is affine in a neighborhood of $\theta_{j}^{\lambda}$.
    Hence
    \begin{equation*}
        \boldsymbol{\theta}_{A}
        \mapsto
        \rho\bigl(\boldsymbol{\theta}_{A}, \boldsymbol{\theta}^{\lambda}_{A^{c}}\bigr)
        =
        \sum_{j \in A} |\theta_{j}|
        +
        \sum_{j \notin A} |\theta_{j}^{\lambda}|
    \end{equation*}
    is affine in $\boldsymbol{\theta}_{A}$ on a neighborhood of $\boldsymbol{\theta}^{\lambda}_{A}$, and therefore twice continuously differentiable there with Hessian $\boldsymbol{H}_{\rho,AA} = \boldsymbol{0}$.
    Thus Assumption~\ref{supp-ass:P-ns} is satisfied with this choice of $A$ and $\boldsymbol{H}_{\rho,AA} = \boldsymbol{0}$.
\end{example}

The next lemma records a standard penalized Fisher expansion on the active coordinates.
Let $\hat{\boldsymbol{\theta}}_{n}^{\mathrm{pen}}$ be any measurable solution of the penalized estimating equation
\begin{equation}
\label{sup-eq:pen-ee}
    \boldsymbol{0}
    \in
    \boldsymbol{U}_{n}\bigl(\hat{\boldsymbol{\theta}}_{n}^{\mathrm{pen}}\bigr)
    +
    \lambda_{n}\partial\rho\bigl(\hat{\boldsymbol{\theta}}_{n}^{\mathrm{pen}}\bigr)
\end{equation}
such that $\hat{\boldsymbol{\theta}}_{n}^{\mathrm{pen}}\to_p\boldsymbol{\theta}^{\lambda}$.

\begin{lemma}[Subgradient Fisher expansion on the active set]
\label{supp-app:lem-subgrad-fisher}
    Suppose Assumptions~\ref{supp-ass:P-ns} and \ref{ass:GL} hold, and let $\hat{\boldsymbol{\theta}}_{n}^{\mathrm{pen}}$ satisfy \eqref{sup-eq:pen-ee} with $\hat{\boldsymbol{\theta}}_{n}^{\mathrm{pen}}\to_p\boldsymbol{\theta}^{\lambda}$ and $\lambda_n \to \lambda$ with $s_n^{1/2}(\lambda_n - \lambda)\to 0$.
    \begin{equation}
    \label{supp-eq:inactive-superconsistency}
        s_n^{1/2}\bigl\|P_{A^{c}}\hat{\boldsymbol{\theta}}_{n}^{\mathrm{pen}} - P_{A^{c}}\boldsymbol{\theta}^{\lambda}\bigr\|
        \longrightarrow_p
        0
        .
    \end{equation}
    Then
    \begin{equation*}
        s_{n}^{1/2}\bigl(P_{A}\hat{\boldsymbol{\theta}}_{n}^{\mathrm{pen}}-P_{A}\boldsymbol{\theta}^{\lambda}\bigr)
        \longrightarrow_d
        N
        \Bigl(
        \boldsymbol{0},
        \bigl(\boldsymbol{J}_{\lambda, AA}^{\star}\bigr)^{-1}\boldsymbol{K}^{\star}_{AA}
        \bigl(\boldsymbol{J}_{\lambda, AA}^{\star}\bigr)^{-1}
        \Bigr)
        ,
        \quad
        n\to\infty
        .
    \end{equation*}
\end{lemma}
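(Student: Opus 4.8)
The plan is to reduce the non-smooth inclusion \eqref{sup-eq:pen-ee} to an ordinary, smooth estimating equation in the active coordinates only, and then run the classical penalized $M$-estimation argument on that block. First I would localize: since $\hat{\boldsymbol{\theta}}_{n}^{\mathrm{pen}}\to_p\boldsymbol{\theta}^{\lambda}$, with probability tending to one $\hat{\boldsymbol{\theta}}_{n}^{\mathrm{pen}}\in\mathcal{N}$, so on that event \eqref{sup-eq:pen-ee} supplies a subgradient $\boldsymbol{g}_n\in\partial\rho(\hat{\boldsymbol{\theta}}_{n}^{\mathrm{pen}})$ with $\boldsymbol{U}_n(\hat{\boldsymbol{\theta}}_{n}^{\mathrm{pen}})+\lambda_n\boldsymbol{g}_n=\boldsymbol{0}$. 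The key structural observation is that, because $\boldsymbol{\theta}_A\mapsto\rho(\boldsymbol{\theta}_A,\boldsymbol{\theta}_{A^c})$ is differentiable near $\boldsymbol{\theta}^{\lambda}_A$ by Assumption~\ref{supp-ass:P-ns} and $\rho$ is convex, the subgradient inequality restricted to perturbations supported on $A$, with one-sided limits taken along the segment, forces $P_A\boldsymbol{g}_n=\nabla_A\rho(\hat{\boldsymbol{\theta}}_{n}^{\mathrm{pen}})$: a supporting affine minorant of a convex function that is differentiable in those directions must coincide with the partial gradient. The same argument applied at $\boldsymbol{\theta}^{\lambda}$ turns the population inclusion $\boldsymbol{0}\in\Psi(\boldsymbol{\theta}^{\lambda})+\lambda\partial\rho(\boldsymbol{\theta}^{\lambda})$ into $P_A\Psi(\boldsymbol{\theta}^{\lambda})+\lambda\nabla_A\rho(\boldsymbol{\theta}^{\lambda})=\boldsymbol{0}$. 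Projecting the empirical inclusion onto $A$ then yields the smooth equation $P_A\boldsymbol{U}_n(\hat{\boldsymbol{\theta}}_{n}^{\mathrm{pen}})+\lambda_n\nabla_A\rho(\hat{\boldsymbol{\theta}}_{n}^{\mathrm{pen}})=\boldsymbol{0}$, while the inactive block, which we do not need, stays an inclusion.

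Next I would Taylor-expand the active equation about $\boldsymbol{\theta}^{\lambda}$ in all $p$ coordinates. Writing $\boldsymbol{\Delta}_n:=\hat{\boldsymbol{\theta}}_{n}^{\mathrm{pen}}-\boldsymbol{\theta}^{\lambda}$ and splitting $\boldsymbol{\Delta}_n=(\boldsymbol{\Delta}_{n,A},\boldsymbol{\Delta}_{n,A^c})$, Assumption~\ref{ass:GL}(ii) controls the $O_p(1)\|\boldsymbol{\Delta}_n\|^2$ Taylor remainder of $\boldsymbol{U}_n$ and the partial $C^2$-smoothness of $\rho$ gives the active Hessian $\boldsymbol{H}_{\rho,AA}$. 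Subtracting the population active equation leaves the leading score $P_A\{\boldsymbol{U}_n(\boldsymbol{\theta}^{\lambda})-\Psi(\boldsymbol{\theta}^{\lambda})\}$, the diagonal curvature term $(P_A\boldsymbol{J}_n(\boldsymbol{\theta}^{\lambda})P_A^{\top}+\lambda_n\boldsymbol{H}_{\rho,AA})\boldsymbol{\Delta}_{n,A}$, a cross-block term $P_A\boldsymbol{J}_n(\boldsymbol{\theta}^{\lambda})P_{A^c}^{\top}\boldsymbol{\Delta}_{n,A^c}=O_p(1)\boldsymbol{\Delta}_{n,A^c}$, the bias $(\lambda_n-\lambda)\nabla_A\rho(\boldsymbol{\theta}^{\lambda})$, and higher-order remainders; the cross-block term and (if $\nabla_A\rho$ genuinely varies in the inactive coordinates) the corresponding penalty contribution are $o_p(s_n^{-1/2})$ by the superconsistency hypothesis \eqref{supp-eq:inactive-superconsistency}, while $(\lambda_n-\lambda)\nabla_A\rho(\boldsymbol{\theta}^{\lambda})=o(s_n^{-1/2})$ since $s_n^{1/2}(\lambda_n-\lambda)\to0$. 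Because $P_A\boldsymbol{J}_n(\boldsymbol{\theta}^{\lambda})P_A^{\top}+\lambda_n\boldsymbol{H}_{\rho,AA}\to_p\boldsymbol{J}_{\lambda,AA}^{\star}$ by Assumption~\ref{ass:GL}(iii), which is nonsingular by Assumption~\ref{supp-ass:P-ns}, and $P_A\{\boldsymbol{U}_n(\boldsymbol{\theta}^{\lambda})-\Psi(\boldsymbol{\theta}^{\lambda})\}=O_p(s_n^{-1/2})$ by Assumption~\ref{ass:GL}(iv), a standard fixed-point argument first delivers the preliminary rate $\boldsymbol{\Delta}_{n,A}=O_p(s_n^{-1/2})$, which makes the quadratic remainder $o_p(s_n^{-1/2})$, and then the asymptotically linear representation
\[
    s_n^{1/2}\boldsymbol{\Delta}_{n,A}
    =
    -\bigl(\boldsymbol{J}_{\lambda,AA}^{\star}\bigr)^{-1}
    s_n^{1/2}P_A\{\boldsymbol{U}_n(\boldsymbol{\theta}^{\lambda})-\Psi(\boldsymbol{\theta}^{\lambda})\}
    +
    o_p(1).
\]
By Assumption~\ref{ass:GL}(iv) and continuous mapping, $s_n^{1/2}P_A\{\boldsymbol{U}_n(\boldsymbol{\theta}^{\lambda})-\Psi(\boldsymbol{\theta}^{\lambda})\}\to_d N(\boldsymbol{0},\boldsymbol{K}_{AA}^{\star})$, and Slutsky applied to the fixed linear map $\boldsymbol{x}\mapsto(\boldsymbol{J}_{\lambda,AA}^{\star})^{-1}\boldsymbol{x}$ yields the claimed limit $N\bigl(\boldsymbol{0},(\boldsymbol{J}_{\lambda,AA}^{\star})^{-1}\boldsymbol{K}_{AA}^{\star}(\boldsymbol{J}_{\lambda,AA}^{\star})^{-1}\bigr)$.

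I expect the reduction in the first step to be the main obstacle: establishing that the active projection of the convex subdifferential collapses exactly to the classical partial gradient, so that the non-smooth inclusion becomes an ordinary estimating equation on $A$. This is where convexity together with the partial $C^2$-smoothness of Assumption~\ref{supp-ass:P-ns} does the real work, and it is needed both at the estimator $\hat{\boldsymbol{\theta}}_{n}^{\mathrm{pen}}$ and at the population point $\boldsymbol{\theta}^{\lambda}$. The second recurring subtlety is bookkeeping: after the reduction, every term in the expansion that touches the inactive coordinates (the cross-block Jacobian, and any genuine dependence of $\nabla_A\rho$ on $\boldsymbol{\theta}_{A^c}$, which vanishes identically in the leading $\ell_1$ case of Example~\ref{supp-ex:l1-active}) must be shown to be $o_p(s_n^{-1/2})$; this is precisely why \eqref{supp-eq:inactive-superconsistency} is imposed as a hypothesis and verified separately, e.g.\ via active-set selection and subgradient screening, rather than derived here.
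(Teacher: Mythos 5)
Your proposal is correct and follows essentially the same route as the paper's own proof: reduce the subdifferential inclusion to a smooth estimating equation on the active coordinates via the uniqueness of the active projection of a subgradient of a convex function that is differentiable in those directions, project the population inclusion the same way, Taylor-expand, absorb the cross-block and $\nabla_A\rho$-dependence-on-$\boldsymbol{\theta}_{A^c}$ terms using the superconsistency hypothesis, kill the $(\lambda_n-\lambda)$ bias via the rate condition, establish the $O_p(s_n^{-1/2})$ rate for the active block, and conclude with the CLT and Slutsky. The paper's proof does the same things in the same order, packaging the cross-block and penalty-dependence contributions into a single remainder $\boldsymbol{R}_n$ with $\|\boldsymbol{R}_n\|=o_p(\|\hat{\boldsymbol{\theta}}_n^{\mathrm{pen}}-\boldsymbol{\theta}^\lambda\|)$.
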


\begin{proof}[Proof of Lemma~\ref{supp-app:lem-subgrad-fisher}]
    By Assumption~\ref{supp-ass:P-ns} and the discussion following it, shrinking $\mathcal{N}$ if necessary, we may assume that the partial gradient $\nabla_A\rho(\boldsymbol{\theta})$ is well defined and continuous for all $\boldsymbol{\theta}\in\mathcal{N}$, and that, for each fixed $\boldsymbol{\theta}_{A^c}$ with $(\boldsymbol{\theta}_A^\lambda,\boldsymbol{\theta}_{A^c})\in\mathcal{N}$, the map $\boldsymbol{\theta}_A\mapsto\nabla_A\rho(\boldsymbol{\theta}_A,\boldsymbol{\theta}_{A^c})$ is continuously differentiable at $\boldsymbol{\theta}_A^\lambda$ with derivative $\boldsymbol{H}_{\rho,AA}$.
    Since $\hat{\boldsymbol{\theta}}_{n}^{\mathrm{pen}}\to_p\boldsymbol{\theta}^{\lambda}$, we may and do work on the event $\{\hat{\boldsymbol{\theta}}_{n}^{\mathrm{pen}}\in\mathcal{N}\}$, whose probability tends to one.
    On this event $\rho$ is differentiable in the active coordinates at $\hat{\boldsymbol{\theta}}_{n}^{\mathrm{pen}}$, so by standard convex analysis the projection of any subgradient onto the active coordinates is unique:
    for every $\boldsymbol{\zeta} \in \partial\rho(\hat{\boldsymbol{\theta}}_{n}^{\mathrm{pen}})$ we have
    \begin{equation*}
        P_{A}\boldsymbol{\zeta}
        =
        \nabla_{A}\rho\bigl(\hat{\boldsymbol{\theta}}_{n}^{\mathrm{pen}}\bigr)
        .
    \end{equation*}
    Since $\hat{\boldsymbol{\theta}}_{n}^{\mathrm{pen}}$ solves \eqref{sup-eq:pen-ee}, there exists a measurable choice $\boldsymbol{\zeta}_n \in \partial \rho(\hat{\boldsymbol{\theta}}_{n}^{\mathrm{pen}})$ such that
    \begin{equation*}
        \boldsymbol{0}
        =
        \boldsymbol{U}_{n}
        \bigl(
        \hat{\boldsymbol{\theta}}_{n}^{\mathrm{pen}}
        \bigr)
        +
        \lambda_{n}\boldsymbol{\zeta}_{n}
        .
    \end{equation*}
    Projecting onto the active coordinates and using the above identification gives
    \begin{equation*}
        \boldsymbol{0}
        =
        P_{A}\boldsymbol{U}_{n}
        \bigl(\hat{\boldsymbol{\theta}}_{n}^{\mathrm{pen}}\bigr)
        +
        \lambda_{n}\nabla_{A}\rho\bigl(\hat{\boldsymbol{\theta}}_{n}^{\mathrm{pen}}\bigr)
        .
    \end{equation*}

    A first-order Taylor expansion of both $\boldsymbol{U}_{n}$ and $\nabla_{A}\rho$ at $\boldsymbol{\theta}^{\lambda}$, using Assumption~\ref{ass:GL} (ii) -- (iii) of the main text and the continuity of $\nabla^{2}_{\boldsymbol{\theta}_{A}}\rho$ from Assumption~\ref{supp-ass:P-ns}, yields, on the active coordinates,
    \begin{equation*}
        \begin{split}
            \boldsymbol{0}
            &=
            P_{A}\boldsymbol{U}_{n}\bigl(\hat{\boldsymbol{\theta}}_{n}^{\mathrm{pen}}\bigr)
            +
            \lambda_{n}\nabla_{A}\rho\bigl(\hat{\boldsymbol{\theta}}_{n}^{\mathrm{pen}}\bigr)
            ,
            \\
            &=
            P_{A}\boldsymbol{U}_{n}\bigl(\boldsymbol{\theta}^{\lambda}\bigr)
            +
            \lambda_{n}\nabla_{A}\rho\bigl(\boldsymbol{\theta}^{\lambda}\bigr)
            +
            \boldsymbol{J}^{\star}_{AA}\bigl(\hat{\boldsymbol{\theta}}_{n,A}^{\mathrm{pen}} - \boldsymbol{\theta}^{\lambda}_{A}\bigr)
            +
            \boldsymbol{J}^{\star}_{A A^{c}}\bigl(\hat{\boldsymbol{\theta}}_{n,A^{c}}^{\mathrm{pen}} - \boldsymbol{\theta}^{\lambda}_{A^{c}}\bigr)
            \\
            &\qquad
            +\lambda_{n}\boldsymbol{H}_{\rho,AA}\bigl(\hat{\boldsymbol{\theta}}_{n,A}^{\mathrm{pen}} - \boldsymbol{\theta}^{\lambda}_{A}\bigr)
            +
            \boldsymbol{R}_{n}
            ,
        \end{split}
    \end{equation*}
    where $\boldsymbol{R}_{n}$ collects the higher-order remainder terms arising from the Taylor expansions in the active coordinates and the dependence of $\nabla_{A}\rho$ on $\boldsymbol{\theta}_{A^{c}}$.
    By the smoothness assumptions and the consistency $\hat{\boldsymbol{\theta}}_{n}^{\mathrm{pen}} \to_p \boldsymbol{\theta}^{\lambda}$, we may choose $\boldsymbol{R}_{n}$ so that
    \begin{equation}
    \label{supp-eq:Rn-small}
        \|\boldsymbol{R}_{n}\|
        =
        o_p\!\left(
            \bigl\|
                \hat{\boldsymbol{\theta}}_{n}^{\mathrm{pen}}
                -
                \boldsymbol{\theta}^{\lambda}
            \bigr\|
        \right)
        .
    \end{equation}
    Moreover, \eqref{supp-eq:inactive-superconsistency} implies
    \begin{equation*}
        s_{n}^{1/2}
        \bigl\|
            \boldsymbol{J}^{\star}_{A A^{c}}
            \bigl(\hat{\boldsymbol{\theta}}_{n, A^{c}}^{\mathrm{pen}} - \boldsymbol{\theta}^{\lambda}_{A^{c}}\bigr)
        \bigr\|
        =
        o_p(1)
        .
    \end{equation*}
    In the remainder of the proof we absorb the cross term $\boldsymbol{J}^{\star}_{A A^{c}}(\hat{\boldsymbol{\theta}}_{n,A^{c}}^{\mathrm{pen}} - \boldsymbol{\theta}^{\lambda}_{A^{c}})$ into $\boldsymbol{R}_{n}$, so that \eqref{supp-eq:Rn-rate} continues to hold for the redefined remainder.

    By the population inclusion \eqref{supp-eq:pen-pop-incl}, there exists $\boldsymbol{\zeta}^{\lambda}\in \partial\rho(\boldsymbol{\theta}^{\lambda})$ such that
    \begin{equation*}
        \Psi(\boldsymbol{\theta}^{\lambda})
        +
        \lambda\boldsymbol{\zeta}^{\lambda}
        =
        \boldsymbol{0}
        .
    \end{equation*}
    Since $\rho$ is differentiable in the active coordinates at $\boldsymbol{\theta}^{\lambda}$, we also have
    \begin{equation*}
        P_{A}\boldsymbol{\zeta}^{\lambda}
        =
        \nabla_{A}\rho\bigl(\boldsymbol{\theta}^{\lambda}\bigr)
        .
    \end{equation*}
    Hence
    \begin{equation*}
        P_{A}\Psi(\boldsymbol{\theta}^{\lambda})
        +
        \lambda\nabla_{A}\rho(\boldsymbol{\theta}^{\lambda})
        =
        \boldsymbol{0}
        .
    \end{equation*}
    Subtracting this identity from the Taylor expansion above, and collecting the terms in $\hat{\boldsymbol{\theta}}_{n,A}^{\mathrm{pen}} - \boldsymbol{\theta}^{\lambda}_{A}$, gives
    \begin{equation*}
        \boldsymbol{0}
        =
        P_{A}\bigl\{\boldsymbol{U}_{n}\bigl(\boldsymbol{\theta}^{\lambda}\bigr) - \Psi\bigl(\boldsymbol{\theta}^{\lambda}\bigr)\bigr\}
        +
        (\lambda_{n} - \lambda)\nabla_{A}\rho\bigl(\boldsymbol{\theta}^{\lambda}\bigr)
        +
        \boldsymbol{J}_{\lambda_{n}, AA}^{\star}
        \bigl(
        \hat{\boldsymbol{\theta}}_{n,A}^{\mathrm{pen}}
        -
        \boldsymbol{\theta}^{\lambda}_{A}
        \bigr)
        +
        \boldsymbol{R}_{n}
        ,
    \end{equation*}
    where we have written
    \begin{equation*}
        \boldsymbol{J}_{\lambda_{n},AA}^{\star}
        :=
        \boldsymbol{J}^{\star}_{AA}
        +
        \lambda_{n}\boldsymbol{H}_{\rho,AA}
        .
    \end{equation*}
    By continuity of $\lambda\mapsto \boldsymbol{J}^{\star}_{AA} + \lambda\boldsymbol{H}_{\rho,AA}$ and the nonsingularity of $\boldsymbol{J}_{\lambda,AA}^{\star}$ from Assumption~\ref{supp-ass:P-ns}, $\boldsymbol{J}_{\lambda_{n},AA}^{\star}$ is nonsingular for all sufficiently large $n$, and $\boldsymbol{J}_{\lambda_{n}, AA}^{\star} \to \boldsymbol{J}_{\lambda, AA}^{\star}$.

    Rearranging the preceding display gives
    \begin{equation*}
        \hat{\boldsymbol{\theta}}_{n,A}^{\mathrm{pen}}
        -
        \boldsymbol{\theta}^{\lambda}_{A}
        =
        -
        \bigl(\boldsymbol{J}_{\lambda_{n},AA}^{\star}\bigr)^{-1}P_{A}\bigl\{\boldsymbol{U}_{n}\bigl(\boldsymbol{\theta}^{\lambda}\bigr) - \Psi\bigl(\boldsymbol{\theta}^{\lambda}\bigr)\bigr\}
        -
        (\lambda_{n} - \lambda)\bigl(\boldsymbol{J}_{\lambda_{n},AA}^{\star}\bigr)^{-1}\nabla_{A}\rho\bigl(\boldsymbol{\theta}^{\lambda}\bigr)
        -
        \bigl(\boldsymbol{J}_{\lambda_{n},AA}^{\star}\bigr)^{-1}\boldsymbol{R}_{n}
        .
    \end{equation*}
    By Assumption~\ref{ass:GL} (iv) and the definition of $\boldsymbol{K}^{\star}$,
    \begin{equation*}
        s_{n}^{1/2}P_{A}\bigl\{
            \boldsymbol{U}_{n}\bigl(\boldsymbol{\theta}^{\lambda}\bigr)
            -
            \Psi\bigl(\boldsymbol{\theta}^{\lambda}\bigr)
        \bigr\}
        \longrightarrow_d
        N\bigl(\boldsymbol{0}, \boldsymbol{K}^{\star}_{AA}\bigr)
        .
    \end{equation*}
    The rate condition $s_n^{1/2}(\lambda_n - \lambda) \to 0$ implies
    \begin{equation*}
        s_{n}^{1/2}(\lambda_n - \lambda)
        \bigl(\boldsymbol{J}_{\lambda_{n}, AA}^{\star}\bigr)^{-1}
        \nabla_{A}\rho(\boldsymbol{\theta}^{\lambda})
        =
        o_p(1)
        .
    \end{equation*}
    By Assumption~\ref{supp-ass:P-ns} the active-set penalized curvature $\boldsymbol{J}_{\lambda,AA}^{\star}$ is nonsingular, and by continuity $\boldsymbol{J}_{\lambda_{n},AA}^{\star}\to\boldsymbol{J}_{\lambda,AA}^{\star}$, so that $\bigl(\boldsymbol{J}_{\lambda_{n},AA}^{\star}\bigr)^{-1}$ remains bounded in probability.
    Together with \eqref{supp-eq:Rn-small}, the estimating equation above yields
    \begin{equation*}
      s_n^{1/2}
      \bigl\|
      \hat{\boldsymbol{\theta}}_{n,A}^{\mathrm{pen}} - \boldsymbol{\theta}^{\lambda}_{A}
      \bigr\|
      =
      O_p(1)        
    \end{equation*}
    by a standard $Z$-estimation rate argument as shown in \citet[Thm.~5.41]{vandervaart1998asymptotic}.
    In particular, \eqref{supp-eq:Rn-small} implies
    \begin{equation}
    \label{supp-eq:Rn-rate}
        s_{n}^{1/2}\|\boldsymbol{R}_{n}\|
        =
        o_p(1)
        .
    \end{equation}
    Hence
    \begin{equation*}
        s_{n}^{1/2}
        \bigl\|
            \bigl(\boldsymbol{J}_{\lambda_{n}, AA}^{\star}\bigr)^{-1}\boldsymbol{R}_{n}
        \bigr\|
        =
        o_p(1)
        .
    \end{equation*}
    Therefore,
    \begin{equation*}
        s_{n}^{1/2}\bigl(\hat{\boldsymbol{\theta}}_{n,A}^{\mathrm{pen}} - \boldsymbol{\theta}^{\lambda}_{A}\bigr)
        =
        -\bigl(\boldsymbol{J}_{\lambda_{n}, AA}^{\star}\bigr)^{-1}
        s_{n}^{1/2}P_{A}\bigl\{\boldsymbol{U}_{n}\bigl(\boldsymbol{\theta}^{\lambda}\bigr) - \Psi\bigl(\boldsymbol{\theta}^{\lambda}\bigr)\bigr\}
        +
        o_p(1)
        .
    \end{equation*}
    Since $\boldsymbol{J}_{\lambda_{n}, AA}^{\star}\to \boldsymbol{J}_{\lambda, AA}^{\star}$, Slutsky's theorem yields the claimed normal limit with covariance $(\boldsymbol{J}_{\lambda,AA}^{\star})^{-1}\boldsymbol{K}^{\star}_{AA}(\boldsymbol{J}_{\lambda,AA}^{\star})^{-1}$.
\end{proof}

Assumption~\ref{supp-ass:P-ns} and Lemma~\ref{supp-app:lem-subgrad-fisher} show that, under the non-smooth setting, the active coordinates behave as in the smooth case, with curvature matrix $\boldsymbol{J}_{\lambda, AA}^{\star}$ and variability $\boldsymbol{K}_{AA}^{\star}$.
The proofs of Proposition~\ref{prop:bvm-baseline}, Theorem~\ref{thm:ls-calib}, Lemma~\ref{lem:postvar-consistency}, Lemma~\ref{lem:target-plugin}, and Proposition~\ref{prop:estimated-calib} use only local quadratic expansions and the central limit theorem.
Repeating those arguments with all matrices and vectors restricted to $A$ yields the same Gaussian limits on the active coordinates after replacing $\boldsymbol{J}^{\star}$, $\boldsymbol{J}_{\lambda}^{\star}$, $\boldsymbol{K}^{\star}$, and $\boldsymbol{V}_{\mathrm{target}}^{\star}$ by their $AA$-blocks.
Coordinates in $A^{c}$ may exhibit boundary phenomena, and their asymptotic distribution need not be Gaussian.
Accordingly, our asymptotic statements and calibrated inference for non-smooth penalties are reported conditionally on the active set $A$.
\section{Detailed settings for experiment}
\label{supp:detail-exp}

\subsection{Sampling algorithm}
\label{supp-subsec:sample-algo}
We describe the augmentation and Gibbs sampler used to approximate the generalized Bayes posterior $\Pi_n^{\eta}$ in Section~\ref{subsec:ri-lmm}.
All notation is as in the main text:
in particular, the Huber loss $M_n(\boldsymbol{\beta})$ is defined there in terms of the whitened residuals $\tilde{r}_{ij}(\boldsymbol{\beta})$ and the effective scale $s_n = n$.

For the Huber loss $\rho_c(u)$ in Section~\ref{subsec:ri-lmm}, the following infimal-convolution representation holds:
\begin{equation*}
    \rho_c(u) + \frac{1}{2}c^2
    =
    \min_{t\in\mathbb{R}}
    \!\left\{\frac{1}{2}(u - t)^{2} + c|t|\right\}
    ,
\end{equation*}
so that $\rho_c(0) = 0$ and, for $|u|\le c$, $\rho_c(u) = u^2/2$, while for $|u| > c$, $\rho_c(u) = c|u| - c^2/2$.
Consequently, up to a multiplicative constant that does not depend on $\boldsymbol{\beta}$,
\begin{equation*}
    \exp\{-\eta M_n(\boldsymbol{\beta})\}
    \propto
    \int \exp\!\left\{
    -\eta \sum_{i=1}^{G} \sum_{j=1}^{n_i}\!\left[
    \frac{1}{2}\bigl(\tilde r_{ij}(\boldsymbol{\beta}) - t_{ij}\bigr)^{2} + c|t_{ij}|
    \right]
    \right\}
    \prod_{i,j} \dd t_{ij}
    ,
\end{equation*}
where $t_{ij} \in \mathbb{R}$ are latent variables.

We next use the standard normal-exponential mixture representation of the Laplace kernel:
for $\kappa > 0$,
\begin{equation*}
    \exp(-\kappa |t|)
    \propto
    \int_0^{\infty}(2\pi \omega)^{-1/2}
    \exp\!\left(-\frac{t^2}{2\omega}\right)
    \frac{\kappa^2}{2}
    \exp\!\left(-\frac{\kappa^2}{2}\omega\right)
    \dd\omega
    .
\end{equation*}
Setting $\kappa = \eta c$ introduces latent scales $\omega_{ij} > 0$.
Under this augmentation the generalized posterior is conditionally conjugate in each block $(\boldsymbol{\beta}, \boldsymbol{t}, \boldsymbol{\omega})$.

Let $\boldsymbol{L}_{i}$ be the symmetric square root of the working covariance $\boldsymbol{\Sigma}_i$ from Section~\ref{subsec:ri-lmm}, and define $\tilde{\boldsymbol{y}}_{i} = \boldsymbol{L}_{i}^{-1}\boldsymbol{y}_{i}$ and $\boldsymbol{t}_{i} = (t_{i1}, \ldots, t_{in_i})^{\top}$.
The prior is the Gaussian prior with $n$-dependent scale used in the main text, with log-density
\begin{equation*}
    \log \pi_n(\boldsymbol{\beta})
    =
    -\lambda_n s_n \rho(\boldsymbol{\beta}) + r_n
    ,
\end{equation*}
where $\rho(\boldsymbol{\beta}) = 2^{-1}(\boldsymbol{\beta} - \boldsymbol{\mu})^{\top}\boldsymbol{Q}(\boldsymbol{\beta} - \boldsymbol{\mu})$, $\lambda_n \to \lambda \in [0, \infty)$ and $r_n$ is a normalizing constant that does not depend on $\boldsymbol{\beta}$.

Given $(\boldsymbol{t}, \boldsymbol{\omega})$, the likelihood contribution for $\boldsymbol{\beta}$ is Gaussian in the pseudo-response $\tilde{\boldsymbol{y}}_{i} - \boldsymbol{t}_{i}$.
Combining this with the Gaussian prior, the full conditional of $\boldsymbol{\beta}$ is multivariate normal,
\begin{equation*}
    \boldsymbol{\beta} \mid \boldsymbol{t}, \boldsymbol{\omega}, \mathcal{D}
    \sim
    N\bigl(\boldsymbol{m}_{\mathrm{post}}, \boldsymbol{\Lambda}_{\mathrm{post}}^{-1}\bigr)
    ,
\end{equation*}
with precision and mean
\begin{equation*}
    \boldsymbol{\Lambda}_{\mathrm{post}}
    =
    \eta\sum_{i=1}^{G}\tilde{\boldsymbol{X}}_{i}^{\top}\tilde{\boldsymbol{X}}_{i} + \lambda_n s_n \boldsymbol{Q}
    ,
    \quad
    \boldsymbol{m}_{\mathrm{post}}
    =
    \boldsymbol{\Lambda}_{\mathrm{post}}^{-1}
    \!\left\{
    \eta \sum_{i=1}^{G} \tilde{\boldsymbol{X}}_{i}^{\top}(\tilde{\boldsymbol{y}}_{i} - \boldsymbol{t}_{i})
    +
    \lambda_n s_n \boldsymbol{Q}\boldsymbol{\mu}
    \right\}
    .
\end{equation*}

Conditional on $(\boldsymbol{\beta}, \boldsymbol{\omega})$, the latent $t_{ij}$ are independent normals.
Writing $\tilde{r}_{ij} = \tilde{r}_{ij}(\boldsymbol{\beta})$ for simplicity, the joint kernel in $t_{ij}$ is
\begin{equation*}
    \exp\!\left\{
    -\frac{\eta}{2}\bigl(\tilde{r}_{ij} - t_{ij}\bigr)^2
    -\frac{t_{ij}^2}{2\omega_{ij}}
    \right\}
    ,
\end{equation*}
which is quadratic in $t_{ij}$.
Hence
\begin{equation*}
    t_{ij}\mid \boldsymbol{\beta}, \omega_{ij}, \{\mathcal{D}_{i}\}_{i=1}^n
    \sim
    N\bigl(\mu_{t,ij}, \sigma_{t,ij}^{2}\bigr),
    \quad
    \sigma_{t,ij}^2 = \bigl(\eta + \omega_{ij}^{-1}\bigr)^{-1}
    ,
    \quad
    \mu_{t,ij} = \sigma_{t,ij}^{2}\eta\tilde{r}_{ij}
    .
\end{equation*}

Finally, conditional on $t_{ij}$ the latent scales $\omega_{ij}$ arise only through the Laplace mixture.
For each $i,j$ the conditional density of $\omega_{ij}$ is
\begin{equation*}
    \omega_{ij}\mid t_{ij}
    \propto
    \omega_{ij}^{-1/2}
    \exp\!\left\{
    -\frac{1}{2}\bigl(
    \eta^2 c^2 \omega_{ij}
    +
    t_{ij}^2/\omega_{ij}
    \bigr)
    \right\}
    ,
    \quad
    \omega_{ij} > 0
    ,
\end{equation*}
that is, a generalized inverse Gaussian distribution $\mathrm{GIG}(\lambda = 1/2, \chi=t_{ij}^2, \psi=(\eta c)^2)$.
Equivalently, with the inverse-Gaussian parameterization $\mathrm{IG}(\mu, \lambda)$ with density
\begin{equation*}
    f(x)
    =
    \left(\frac{\lambda}{2\pi x^3}\right)^{1/2}
    \exp\!\left\{-\frac{\lambda(x-\mu)^2}{2\mu^2x}\right\}
    ,
    \quad
    x > 0
    ,
\end{equation*}
the reciprocal variables $u_{ij} = 1/\omega_{ij}$ satisfy
\begin{equation*}
    u_{ij}\mid t_{ij}
    \sim
    \mathrm{IG}\!\left(\mu = \frac{\eta c}{|t_{ij}|}, \lambda = \eta^2 c^2\right)
    ,
\end{equation*}
and we obtain $\omega_{ij}$ by setting $\omega_{ij} = 1/u_{ij}$.
This is the parameterization used in our implementation.

A single iteration of the Gibbs sampler consists of the three updates $\boldsymbol{\omega}\to\boldsymbol{t}\to\boldsymbol{\beta}$.

\subsection{Simulation settings}

We fix $G = 100$, $n_i = 5$  so that $n = 500$, and take $p = 1$ to focus on the slope $\beta$.
Covariates are generated as $x_{ij} \sim N(0, 1)$, and the data-generating model is 
\begin{equation*}
    y_{ij} = x_{ij}\beta + b_i + \varepsilon_{ij}
    ,
\end{equation*}
with independent $b_i \sim N(0, \tau^{2})$ and $\varepsilon_{ij} \sim N(0, \sigma^2)$, with true values $\beta = 2$, $\tau^{2} = 2$, $\sigma^{2} = 1$.

To induce model misspecification, we contaminate the errors by replacing $\varepsilon_{ij}$ with $\varepsilon_{ij} + \xi_{ij}$ with probability $0.1$, where $\xi_{ij} \sim N(0, 10^2)$ independently.
The loss $M_n$ is the Huber objective in Section~\ref{subsec:ri-lmm} with tuning constant $c=1$ and effective scale $s_n = n$.
For the ridge penalty $\rho(\beta) = (\beta - \mu)^2 / 2$ we set $\mu = 0$ and $\lambda = 0.5$.
The corresponding penalized population equation~\eqref{eq:pop-penalized} has solution $\beta^{\lambda}$, which we treat as the target.

Because a closed-form expression for $\beta^{\lambda}$ is not available, 
we approximate it numerically by computing the penalized estimating equation estimator on large simulated data sets with $G = 5,000$ groups and averaging the resulting estimates over $1,000$ replications.
All coverage probabilities and biases below are evaluated with respect to this pseudo-true value.

For each learning rate $\eta$ we compare three procedures.
First, as a frequentist benchmark, we compute the Huber $M$-estimator with ridge penalty by minimizing $M_{n}(\beta) + \lambda s_n \rho(\beta)$ and form Wald intervals based on the sandwich variance estimator $\hat{V}_{\mathrm{target}} = \hat{J}_{\lambda}^{-1}\hat{K}\hat{J}_{\lambda}^{-1}$ in Section~\ref{subsec:ri-lmm}.
Second, we compute the generalized Bayes posterior based on $M_n$ and the $n$-dependent Gaussian prior corresponding to the ridge penalty, using the Gibbs sampler described in Section~\ref{supp-subsec:sample-algo}, and form equal-tailed $95\%$ credible intervals from the posterior draws of $\beta$.
Third, we apply the location–scale calibration of Section~\ref{sec:main}:
from the same posterior draws we estimate the working curvature $\hat{H}_{0}$ via $s_n \hat{\Sigma}_{\mathrm{post}}$, estimate the target sandwich covariance $\hat{V}_{\mathrm{target}}$ by plugging in the MAP estimator and its empirical score covariance, construct the calibration operator $\hat{\Omega}$ as in \eqref{eq:empirical-lsc}, and transform the posterior draws to obtain calibrated credible intervals for $\beta$.

We vary the learning rate on the grid $\eta\in \{10^{a}: a \in [\log_{10} 0.01, \log_{10} 100]\}$ using $20$ equally spaced points on the log-scale.
For each value of $\eta$, we generate $200$ independent data sets and run the Gibbs sampler for $1,000$ iterations, discarding the first $500$ as burn-in.
From these replications we record, for each method and each $\eta$, the empirical coverage probability of the nominal $95\%$ intervals for the pseudo-true value, the mean interval width, the mean bias of the point estimator for the pseudo-true value, and the standard deviation of that bias across replications.

\end{document}